\numberwithin{equation}{section}
\newtheorem{theorem}{Theorem}[section]
\newtheorem{lemma}[theorem]{Lemma}
\newtheorem{remark}{Remark}
\def\R{\mathbb{R}}
\def\N{\mathbb{N}}
\def\Z{\mathbb{Z}}
\def\C{\mathbb{C}}
\def\Hspace{H}
\def\Lspace{L}
\def\vv{\mathbf{v}}
\def\k{\mathbf{k}}
\def\x{\mathbf{x}}
\def\F{\mathbf{F}}
\def\B{\mathbf{B}}
\def\tvv{\tilde{\mathbf{v}}}
\def\teta{\tilde\eta}
\def\e{{e}}
\def\ee{\mathbf{e}}
\def\E{\mathbf{E}}
\def\tee{\widetilde{\mathbf{e}}}
\def\tE{\widetilde{\mathbf{E}}}
\def\calL{\mathcal L}
\def\calK{\mathcal K}
\def\calS{\mathcal S}
\def\calO{\mathcal O}
\def\calN{\mathcal N}
\def\calR{\mathcal R}
\def\zz{\chi}
\def\lb{{C}}
\def\kx{{k_x}}
\def\ky{{k_y}}
\def\kcx{k'_x}
\def\kcy{k'_y}
\def\ampA{A}
\def\setM{\mathcal M}
\def\crit{{\rm c}}
\def\per{{\rm per}}
\def\eps{\varepsilon}
\def\s{{\rm s}}
\def\tf{{\tilde f}}
\def\ns{{\rm ns}}
\def\cc{{\rm c.c.}}
\def\Id{\mathrm{Id}}
\def\Re{\mathrm{Re}}
\def\Im{\mathrm{Im}}
\def\rmi{\mathrm{i}}
\def\rme{\mathrm{e}}
\def\rmr{{\mathrm r}}
\def\dif{\mathrm{d}}
\def\range{\mathrm{range}}
\def\sgn{{\rm sgn}}
\def\td{\tilde{d}}
\title{Rotating shallow water equations with bottom drag:\\ bifurcations and growth due to kinetic energy backscatter
\thanks{This paper is a contribution to the project M2 (Systematic multi-scale modelling and analysis for geophysical flow) of the Collaborative Research Centre TRR 181 ``Energy Transfers in Atmosphere and Ocean" funded by the Deutsche Forschungsgemeinschaft (DFG, German Research Foundation) under project number 274762653. J.Y. is partially supported by the Fundamental Research Funds for the Central Universities through Sun Yat-sen University (grant no.\ 22qntd2901).}}
\author{Artur Prugger\thanks{University of Bremen, Department 3 - Mathematics, 28359 Bremen, Germany
  ({\tt a.prugger@uni-bremen.de}).}
\and Jens D. M. Rademacher\thanks{Universit\"at Hamburg, MIN faculty, Department of Mathematics, Germany ({\tt jens.rademacher@uni-hamburg.de}).} 
\and Jichen Yang\thanks{Sun Yat-sen University, School of Mathematics (Zhuhai), 519082 Zhuhai, China
  ({\tt yangjch36@mail.sysu.edu.cn}).}
}
\date{April 27, 2023}
\begin{document}

\maketitle

\begin{abstract}
The rotating shallow water equations with f-plane approximation and nonlinear bottom drag are a prototypical model for mid-latitude geophysical flow that experience  energy loss through simple topography.  Motivated by numerical schemes for large-scale geophysical flow, we consider this model on the whole space with horizontal kinetic energy backscatter source terms built from negative viscosity and stabilizing hyperviscosity with constant parameters.
We study its interplay with linear and non-smooth quadratic bottom drag through the existence of coherent flows. Our results highlight that backscatter can have undesired amplification and selection effects, generating obstacles to energy distribution. We find that decreasing linear bottom drag destabilizes the trivial flow and generates nonlinear flows that can be associated with geostrophic equilibria (GE) and inertia-gravity waves (IGWs). The IGWs are periodic travelling waves, while the GE are stationary and can be studied by a plane wave reduction. We show that for isotropic backscatter both bifurcate simultaneously and supercritically, while for anisotropic backscatter the primary bifurcation are GE. In all cases presence of non-smooth quadratic bottom drag implies unusual scaling laws. For the rigorous bifurcation analysis by Lyapunov-Schmidt reduction care has to be taken due to this lack of smoothness and since the hyperviscous terms yield a lack of spectral gap at large wave numbers. 
For purely smooth bottom drag, we identify a class of explicit such flows that behave linearly within the nonlinear equations: amplitudes can be steady and arbitrary, or grow exponentially and unboundedly. 
We illustrate the results by numerical computations and present extended branches in parameter space. 
\end{abstract}

\section{Introduction}

In the study of geophysical flows, spatial resolution is limited not only by missing observational data, but also due to lack of computing power for numerical simulations. For relevant and realistic simulations, this is compensated by so-called sub-grid parameterizations, which model the impact of scales below resolution on the resolved larger scale flow. 
Such parameterizations ideally also account for numerical discretization effects such as over-dissipation, and yet admit stable simulations of ocean and climate models. 
A class of practical solutions that has come to frequent use are kinetic energy backscatter schemes, cf.\ e.g.\ \cite{JH2014, ZuritaEtAl2015,JansenEtAl2019, juricke2020kinematic,Perezhogin20}; we refer to \cite{DJKO2019} for a detailed discussion and relations to other approaches. 
At the grid-scale, on the one hand horizontal hyperviscosity 
is intended to remove enstrophy, while dissipating energy at finite resolution, and on the other hand suitable negative horizontal viscosity injects energy into large scales. Simulations with backscatter have been found to provide energy `at the right place', matching the results more closely to observations and high resolution comparisons. However, this still often requires judicious choice of backscatter and numerical parameters in order to stabilize simulations, cf.\ \cite{juricke2020kinematic,Jetal20b,JDKO2019,KJC2018}. 

Motivated by this, in \cite{PRY22} we have considered several geophysical fluid models with simplified constant parameter kinetic energy backscatter and hyperviscosity.
The idealized consideration on the continuum level admits a direct analytical study of the influence of backscatter through its impact on bifurcations and explicit flows. Due to the somewhat surprising growth phenomena found in \cite{PRY22}, in this paper we include additional dissipation effects due to bottom drag for the rotating shallow water equation. 
We show that on the one hand, the dissipation by bottom drag can balance some growth induced by backscatter, but also creates coherent nonlinear flows. These can be associated with geostrophic equilibria and inertia-gravity waves in terms of the linear modes at bifurcation. Here geostrophic equilibrium refers to a steady solution which exhibits geostrophic balance, i.e., the Coriolis force equals the horizontal pressure gradient force, cf.~\cite{Zeitlin}. For inertia-gravity waves both rotation and gravity provide the restoring force, which yields a characteristic frequency relation \cite{pedlosky1987geophysical}. On the other hand, we find that moderate bottom drag does not prevent the occurrence of unboundedly growing flows that were found in \cite{PRY22}. This indicates some robustness of undesired concentration of energy due to backscatter, which is in contrast to the targeted energy redistribution. 
These results also are consistent with the observed numerical blow-up and physically unrealistic flows in some simulations with subgrid-scale models containing backscatter, including non-parametric, data-driven subgrid-scale models, cf.\ \cite{Guan2022} and the literature therein.

\medskip
The rotating shallow water equations in an f-plane approximation and augmented with backscatter $\B$ as well as bottom drag $\F$ take the form
\begin{subequations}\label{e:sw}
\begin{align}
\frac{\partial\vv}{\partial t} + (\vv\cdot\nabla)\vv 
\ & =\ -f\vv^{\perp} - g\nabla\eta - \B\vv
- \F(\vv,\eta), \label{e:swa}
\\
\frac{\partial\eta}{\partial t} + (\vv\cdot\nabla)\eta 
\ & =\ -(H_0+\eta)\nabla\cdot\vv, \label{e:swb}
\end{align}
\end{subequations}
where $\vv=(u,v)^\intercal=\vv(t,\x)\in\R^2$ is the velocity field on the space $\x = (x,y)^\intercal \in\R^2$ at time $t\geq 0$, $\eta=\eta(t,\x)\in\R$ is the deviation of the fluid layer from the mean fluid depth $H_0>0$, so $\eta$ has zero mean and the thickness of the fluid is $H = H_0+\eta$ (with flat bottom), $f\in\R$ is the Coriolis parameter, with $f=0$ the non-rotational case, $g>0$ is the gravity acceleration. 

The backscatter (and hyperviscosity) operator $\B$ is given by
\[
\B = 
\begin{pmatrix} d_1\Delta^2+b_1\Delta & 0 \\ 0 & d_2\Delta^2+b_2\Delta \end{pmatrix},
\]
where $d_1, d_2>0$ provide hyperviscosity that stabilizes small scales by dissipating energy, and $b_1, b_2>0$ create negative viscosity in order to `backscatter', i.e.\ suitably return kinetic energy into large scales~\cite{JH2014, ZuritaEtAl2015,JansenEtAl2019, juricke2020kinematic,Perezhogin20,DJKO2019}. With slight abuse of terminology, we refer to all of $b_j,d_j,j=1,2$ as backscatter parameters, cf.~\cite{PRY22}. While the isotropic case $d=d_1=d_2$, $b=b_1=b_2$, appears physically natural, the effective coefficients, in particular $b_1,b_2$ may differ as discussed in \cite{PRY22}. We refer to the latter as semi-isotropic, but include also the fully anisotropic case $d_1\neq d_2$, $b_1\neq b_2$, since this causes no additional phenomena for the aspects we study. 
As mentioned, we consider the simplified situation with constant $b_j,d_j>0,j=1,2$.

As to the bottom drag, we take the combined form 
\begin{equation}\label{e:drag}
\F(\vv,\eta) = \frac {C + Q|\vv|} {H_0+\eta}\vv, \quad |\vv| = \sqrt{u^2 + v^2}, 
\end{equation}
where $C, Q\geq 0$ are constant parameters through which the dissipation enters linearly or quadratically with respect to the velocity, respectively. 
We refer to, e.g.\  \cite{TS2008,ZLWW2011,DJKO2019,GHSV2004,GSS1995,Sat1995,KW1991,KJC2018}
for these forms of bottom drag in various contexts; \cite{BM2004} uses 
$C(H_0+\eta)^{-2}$, which does not change the leading order impact. We note that \cite{KJC2018} uses quadratic drag combined with (energy-budget based) backscatter. Linear and quadratic bottom drag are often used alternatively, and here we follow \cite{AS2008}, which includes a combination. An important feature for $Q\neq 0$ is that the quadratic velocity drag term is once continuously differentiable, but not twice (in $\vv=0$).

The trivial steady flow of \eqref{e:sw} is $(\vv,\eta)= (0,0)$ and a major goal in this paper is to understand bifurcations in terms of the linear bottom drag parameter $\lb$. It turns out that this is a non-standard problem since the spectrum approaches the origin in the complex plane as the wave number tends to infinity, see \Cref{s:spec}. This lack of spectral gap also persists on periodic domains and means that center manifold reduction cannot be used, and care has to be taken in Lyapunov-Schmidt reduction. In addition, as for the usual shallow water equation mass is conserved since \eqref{e:swb} can be written as 
\[
\frac{\partial\eta}{\partial t} + \nabla\cdot\big((H_0+\eta)\vv\big) = 0.
\]
This implies, e.g.\ on $L^2$-based spaces or periodic domains, the mass conservation law 
\begin{equation}\label{e:conslaw}
\frac{\dif}{\dif t}\int_\Omega \eta\, \dif \x =0,
\end{equation}
which justifies the assumed zero mean of $\eta$. In general the presence of a conservation law can impact the dynamics at onset of instability, which, in a fluid-type context, has been studied in the presence of a reflection symmetry or Galilean invariance in \cite{MC2000,HSZ2011,SZ2013,MC2000b,Zim2017,SU17}. However, in \eqref{e:sw} the gradient terms and $\F$ or $f\neq 0$, break these structures. 

\medskip
In the bifurcation analysis of this paper we do not pursue center manifold reduction, but study bifurcations to standing and travelling waves by Lyapunov-Schmidt reduction. We prove that for $Q\neq 0$ geostrophic equilibria (GE) and inertia-gravity-type waves (IGW) bifurcate supercritically in terms of $\lb$. For isotropic backscatter both types of solutions bifurcate simultaneously, whereas in the anisotropic case the primary bifurcations are GE. In both cases the bifurcation has the same mildly non-smooth character as in \cite{SR2020}, which yields a linear amplitude scaling law rather than the usual square root form and requires non-standard computations for normal form coefficients. 
A reduction by a certain plane wave ansatz in the non-rotational case, $f=0$, yields a scalar Swift-Hohenberg-type equation with non-smooth quadratic nonlinearity. This admits a direct bifurcation study of GE and aspects of balanced dynamics by modulation equations. Their stability (`Eckhaus'-) region has an unusual scale and is smaller than that in case of cubic nonlinearity. However, insight into full stability and modulation properties of these as solutions to \eqref{e:sw} is limited and will be a subject of future research. We present some numerical computations which suggest that bifurcating GE can be stable also for $f\neq 0$. These computations give branches of GE and IGW that connect to $\lb=0$, i.e.\ purely quadratic bottom drag.

We also analyze the case $Q=0$ of smooth bottom drag terms that results in the usual (square root form) amplitude scaling law and prove the supercriticality of the bifurcating GE in the rotational case. In contrast to $Q\neq 0$, the amplitude scales with the backscatter parameters as $\sqrt{d/b^2}$, which grows rapidly for $d=\calO(b)$ as $b\to 0$, while the wave number scales as $\sqrt{b/(2d)}$. The amplitude of normalized states also scales inversely with the Coriolis parameter $f$, and branches become `vertical' in the non-rotational case, which admits arbitrary amplitude of the GE; these also possess sinusoidal shapes. We additionally find that selected linear modes appear as explicit solutions to the nonlinear equations and feature linear dynamical behaviour analogous to the results in \cite{PRY22}. In particular, there are exponentially and unboundedly growing explicit solutions, which are an undesired consequence of the backscatter scheme as mentioned above.

Indeed, none of these phenomena occur without backscatter,
i.e.\ in the usual viscous or inviscid case without forcing. Lastly, we remark that for both $Q=0$ and $Q\neq0$ a `Squire theorem' holds, meaning that the primary instability in 2D already occurs in 1D, and in the anisotropic case a coordinate direction is selected for the bifurcation waves. 

\medskip
This paper is organized as follows. In \Cref{s:sw} we discuss the options and limitations of plane wave reductions and the resulting modulation equations. In \Cref{s:spec} we study the spectrum of the trivial flow in the full system \eqref{e:sw} and the types of linear onset of instability. \Cref{s:bif} contains the bifurcation analysis of GE and \Cref{s:igw} the IGW. The explicit flows of linear character are discussed in \Cref{s:explicit}. Finally, we present some numerical computations in \Cref{s:num} and provide a short discussion in \Cref{s:discuss}. The appendix contains some technical aspects.

\section{Balanced plane waves and modulation}\label{s:sw}

First basic insight into the impact of backscatter and bottom drag can be gained by considering specific forms of solutions that in particular lead to geostrophic equilibria. For this we consider the specific plane wave-type ansatz in \eqref{e:sw} with wave vector $\k=(k_x,k_y)^\intercal\in\R^2$ of the form 
\begin{equation}\label{e:planeansatz}
\vv(t,\x)=\psi(t,\k\cdot\x)\k^\perp, \quad\eta(t,\x)=\phi(t,\k\cdot \x)
\end{equation}
for wave shapes $\psi, \phi$ and phase variable $\xi=\k\cdot\x$, where $\k^\perp=(-k_y,k_x)^\intercal$.
We remark that this ansatz is not intended for a broader study of waves and with some abuse of terminology, we sometimes refer to any resulting solution as a wave.
 In case of isotropic backscatter $d=d_1=d_2$, $b=b_1=b_2$, from \eqref{e:sw} we then obtain 
\begin{subequations}\label{e:planek}
\begin{align}
\partial_t \psi\, \k^\perp &=  (f\psi - g\partial_\xi\phi)\k  -\left(d k^4\partial_\xi^4\psi + b k^2\partial_\xi^2\psi + \frac{C+Qk|\psi|}{H_0+\phi}\psi\right)\k^\perp,\label{e:planea}\\
\partial_t\phi &= 0,\label{e:planeb}
\end{align}
\end{subequations}
where $k:=|\k|=|\k^\perp|$ is the wave number for the nonlinear plane wave-type solutions. Here, the only remaining nonlinearity (with respect to $\psi$) stems from the bottom drag for $Q\neq 0$; the transport nonlinearity of the fluid is absent for these plane waves. 
From \eqref{e:planeb} we see that $\phi$ needs to be independent of $t$, and for non-trivial $\k$ \eqref{e:planea} gives the two equations
\begin{subequations}\label{e:plane}
\begin{align}
f\psi &= g\partial_\xi\phi,\label{e:plane2a}\\
\partial_t \psi &= -d k^4\partial_\xi^4\psi - b k^2\partial_\xi^2\psi - \frac{C+Qk|\psi|}{H_0+\phi}\psi.\label{e:plane2b}
\end{align}
\end{subequations}
From equation \eqref{e:plane2a} it follows that also $\psi$ is independent of $t$ unless $f=0$. In case $f=0$, $\phi=\phi_0\in\R$ is constant, and \eqref{e:plane2b} has the form of a non-smooth quadratic Swift-Hohenberg-type equation with parameter $\phi_0$. Here the trivial steady states appear as $\psi=0$.
If in addition $Q=f=0$, then the remaining linear equation can be solved explicitly; in particular this gives a family of wave trains that solve the original nonlinear system \eqref{e:sw}, and contains a free amplitude parameter. Such solutions can also be found in case $Q=0$, $f\neq 0$ for anisotropic backscatter as discussed in \Cref{s:explicit}. 

\begin{remark}[Anisotropic backscatter]\label{r:anisoscalar}
In the anisotropic case, $b_1\neq b_2$  or $d_1\neq d_2$, 
we can still reduce to \eqref{e:planek}, when replacing $b,d$ by $b_2,d_2$ (or $b_1,d_1$) and choosing the wave vector $\k=(1,0)^\intercal$ (or $\k=(0,1)^\intercal$) in \eqref{e:planeansatz}.
\end{remark}

In order to study bifurcations for both $f\neq0$ and the limit $f=0$, we rescale the wave shape as $\phi = \tf\tilde\phi$, where $\tf=f/g$. As for $\eta$, here $\phi$  and thus $\tilde \phi$ is assumed to have zero mean without loss; any non-zero mean of $\eta$ corresponds to changing $H_0$.
In case $f\neq0$, substitution into \eqref{e:plane}, accounting for time-independence of $\phi$ as above and dropping the tilde gives 
\begin{subequations}\label{e:rescale}
\begin{align}
\psi&=\partial_\xi\phi,\label{e:scale}\\
0 &= -d k^4\partial_\xi^5\phi - b k^2\partial_\xi^3\phi - \frac{C+Qk|\partial_\xi\phi|}{H_0 + \tf\phi}\partial_\xi\phi.\label{e:planerescale}
\end{align}
\end{subequations}
In case $f=0$, i.e.\ $\tf=0$, zero mean of (the constant) $\phi$ and \eqref{e:plane} directly gives 
\begin{equation}\label{e:scalef0}
\partial_t\psi = -d k^4\partial_\xi^4\psi - b k^2\partial_\xi^2\psi - \frac{C+Qk|\psi|}{H_0}\psi. 
\end{equation}
The latter admits temporal dynamics and thus also contains information on stability with respect to plane wave perturbations of this form. 

\begin{remark}\label{r:geostrophic}
Physically, \eqref{e:plane2a} means that the Coriolis and gradient term are in `geostrophic' balance and therefore bifurcating solutions of this form, see \Cref{s:bif}, can be viewed as (nonlinear) geostrophic equilibria, cf.\ \cite{Zeitlin}.
\end{remark}

\begin{remark}\label{r:largef}
We briefly consider the limit of fast rotation, $|f|\to \infty$. With the scaling $\phi = \tf\tilde\phi$ equation 
\eqref{e:rescale} formally limits to the linear equation with 
$C=Q=0$. More generally, let us scale $\tilde\psi:=\tf^{\gamma}\psi$, $\tilde\phi := \tf^{\gamma-1}\phi$ with $\gamma\in [0,1]$ in \eqref{e:plane}. Dropping tildes, for $\gamma\in(0,1)$ the limiting equations are the same as for $\gamma=0$.
However, for $\gamma=1$ we obtain \eqref{e:rescale} with $\tf=1$ and $Q=0$ so that the nonlinear terms remain, which results in non-trivial bifurcations.
\end{remark}

\subsection{Modulation equations}\label{s:modulation}
The evolution equation \eqref{e:scalef0}, where $f=0$, naturally admits a modulation analysis near $\psi=0$ within the specific class of solutions. This in particular foreshadows aspects of bifurcations that will be studied in later sections. A more complete modulation study would require a fully coupled system. The parameter $\lb$ translates the spectrum of the linearization, which readily implies onset of instability at some $C=C_\crit$ at some wave number $k=k_\crit$ (for details see \Cref{s:spec}). 
The onset occurs at modes $e_{\pm 1}:= \rme^{\pm \rmi x}$, which implies \eqref{e:scalef0} takes the form
\begin{equation}\label{e:scalef0b}
\partial_t\psi = -\td(\partial_x^2+1)^2\psi + \alpha \psi - q |\psi|\psi,
\end{equation}
where $\td := k_\crit^4 d$, $q:=Qk_\crit/H_0$, $\alpha := (C_\crit-C)/H_0$;
the onset at $\alpha=0$ with wave number $1$ removes $b$. 
The linear part of \eqref{e:scalef0b} is (up to a prefactor) the same as that in the classical Swift-Hohenberg equation, e.g.\ \cite{CH1993}, which implies that the parabolic scaling for modulations with scale parameter $0<\eps\ll 1$ is appropriate. Thus we set $\alpha = \eps^2 \mu$ 
and introduce the spatial scale $X=\eps x$ and temporal scale $T=\eps^2 t$. The nonlinear part, for $q\neq 0$, does not involve derivatives and scales quadratically in $\psi$. This means that the amplitude of modulations of $e_{\pm 1}$ scales as $\eps^2$ (rather than $\eps$ for the standard cubic nonlinearity), which leads to the ansatz
\[
\psi(t,x) \approx \eps^2 A(T,X)e_1 + \cc
\]
Upon substitution into \eqref{e:scalef0b}, the lowest order in $\eps$ for a non-trivial contribution is $\eps^4$ (rather than $\eps^3$ for a cubic nonlinearity). 
The projection onto $e_1$ is akin to the real Ginzburg-Landau equation \cite{CH1993}, but with non-smooth quadratic nonlinearity, 
\begin{equation}\label{e:GLf0b}
\partial_T A = 4\td \partial_X^2 A +\mu A -\frac{16}{3\pi}q |A|A.
\end{equation}
Details of these computations are reflected in the rigorous bifurcation analysis in \Cref{s:bif} and therefore omitted here. Wave trains $A(T,X) = R\rme^{ \rmi K X}$ in \eqref{e:GLf0b} satisfy the nonlinear dispersion relation $\mu = 4\left(\frac{4q}{3\pi}|R|+\td K^2\right)$ or, for $q\neq0$, equivalently the bifurcation equation for the amplitude 
\begin{equation}\label{e:modamp}
|R| = \frac{3\pi}{16 q}\left(\mu - 4\td K^2\right).
\end{equation}
For $Q=q=0$, when \eqref{e:scalef0b} is linear, the bifurcating branches are `vertical' at $\mu=4\td K^2$. For $q>0$ (i.e.\ $Q>0$) wave trains
bifurcate supercritically, but with linear (rather than square root form for a cubic nonlinearity) amplitude scaling law, as in the non-smooth bifurcations studied in \cite{SR2020}. 
For the present case of $f=0$, we can also infer some stability information of the wave trains.  A slightly non-standard computation (see \cite{MiriamThesis} for details), but analogous to the standard approach for stability of sideband-modes, also referred to as Eckhaus stability \cite{CH1993}, gives the stability condition 
\[
\mu > 5\cdot (4\td)K^2.
\]
Here the factor $5$ reflects the nature of the term $|A|A$ and is in particular bigger than the factor $3$ for the smooth cubic nonlinearity $|A|^2A$. Thus the stability (`Eckhaus'-) region for this non-smooth quadratic situation is smaller than that in the smooth case. However, \eqref{e:scalef0b} is only valid for the 
specific plane wave forms so that this consideration does not entail a full stability analysis of the wave trains as solutions to \eqref{e:sw}. 

\medskip
In case $f,Q\neq 0$, due to  \eqref{e:planeb}, the plane wave reduction \eqref{e:plane} does not yield an evolution equation so that information on stability requires consideration of the full system \eqref{e:sw}. Purely spatial modulation in \eqref{e:planerescale} can be used to study the existence of steady states, and amounts to a formal version of the bifurcation analysis that will be presented in \Cref{s:bif}. 

\medskip
We recall that \eqref{e:sw} possesses the conservation law \eqref{e:conslaw}. In general the presence of a conservation law can have a significant impact on modulation equations: in conjunction with a reflection symmetry we refer to \cite{MC2000,HSZ2011,SZ2013} and with Galilean invariance to \cite{MC2000b,Zim2017,SU17}. 
However, \eqref{e:sw} possesses neither a global reflection symmetry due to the gradient terms, nor Galilean invariance for $f\neq 0$ or in presence of $\F$, which is required for onset of instability.  For $f=0$ and $Q\neq 0$, a reflection symmetry is present within the reduction to \eqref{e:scalef0}, but since $\phi$ is required to be constant in this case, the conservation law is fully replaced by the parameter $H_0$. 
The Coriolis term for $f\neq 0$ breaks the Galilean invariance already on the linear level in \eqref{e:sw}. We do not enter into details as our main interest lies in the bifurcations.

\section{Spectrum and spectral stability}\label{s:spec}

We return to \eqref{e:sw} and study in detail the spectrum in the trivial constant flows. Since \eqref{e:sw} is a coupled parabolic-hyperbolic system it is not clear that standard center manifold reduction can be employed, even on bounded domains. 
Linearizing \eqref{e:sw} in the trivial steady flow $(\vv,\eta) = (0,0)$ gives the linear operator
\begin{equation}\label{e:linop}
\calL =  
\begin{pmatrix}
- d_1\Delta^2 - b_1\Delta - C/H_0 & f & -g\partial_x  \\
-f & - d_2\Delta^2 - b_2\Delta - C/H_0 & -g\partial_y \\
-H_0\partial_x & -H_0\partial_y & 0
\end{pmatrix}.
\end{equation}
The spectrum of $\calL$ (on e.g.\ $(\Lspace^2(\R^2))^3$) consists of the roots $\lambda$ of the dispersion relation
\begin{equation}\label{e:disp}
d(\lambda,\k) := \det(\lambda \Id - \widehat\calL) = 0,
\end{equation}
for wave vectors $\k = (\kx,\ky)^\intercal\in\R^2$ and with the Fourier transform $\widehat\calL$  of $\calL$. 
The sign of $\Re(\lambda)$ in the solutions to \eqref{e:disp} determines the spectral stability of $(\vv,\eta)=(0,0)$ with respect to perturbations with wave vector $\k$
and we write continuous selections of such solutions in the functional form $\lambda=\lambda(\k)$.
The dispersion relation can be written as
\begin{equation}\label{e:poly}
d(\lambda, \k) = \lambda^3 + a_1 \lambda^2 + a_2 \lambda + a_3 = 0,
\end{equation}
with wave vector dependent coefficients 
\begin{align*}
a_1 \ &:=\ (d_1 + d_2) |\k|^4 - (b_1 + b_2) |\k|^2 + 2C/H_0, \\
a_2 \ &:=\ (d_1 |\k|^4 - b_1 |\k|^2 + C/H_0) (d_2 |\k|^4 - b_2 |\k|^2 + C/H_0) + g H_0 |\k|^2 + f^2, \\
a_3 \ &:=\  g H_0 |\k|^2 \left((d_1|\k|^2-b_1)k_y^2+(d_2|\k|^2-b_2)k_x^2 + C/H_0 \right).
\end{align*}
We highlight two aspects concerning the structure of the spectrum. On the one hand, $\lambda=0$ is a solution at $\k=0$ for any choice of parameters, whose formal eigenfunction is $(\vv,\eta) =(0,1)$. This perturbs the total fluid mass and relates to the mass conservation law, as well as the family of trivial steady flows $(\vv,\eta)\equiv(0,h)$, $h\in\R$. 
Hence, such perturbations can be ignored by imposing that $H_0$ is the mean fluid depth, which means $\eta$ has zero mean. 

On the other hand, the presence of hyperviscous terms for $d_1d_2\neq 0$ implies that there is a continuous solution $\lambda_\infty(\k)$ with $\Re(\lambda_\infty(\k))\to 0$ as $|\k|\to \infty$. See \cref{f:specinf}. Indeed, for $\k=r^{-1}\bar\k$ with $|\bar\k|=1$ we have 
\begin{equation}\label{e:speczero}
\lim_{r\to 0} r^8d(\lambda, r^{-1}k_x,r^{-1}k_y) = d_1d_2\lambda,
\end{equation}
which gives the solution $\lambda=0$ at $r=0$, i.e.\ $|\k|=\infty$. The implicit function theorem yields $\lambda_\infty(\k)$ as claimed for sufficiently large $|\k|$. Hence, there is no spectral gap for \eqref{e:sw} posed on, e.g.\ rectangles with periodic boundaries, which is the relevant case when seeking wave trains.  As a consequence, the standard approach to center manifolds cannot be applied in order to reduce the bifurcation problem from a trivial flow to a finite dimensional ODE. 
We therefore restrict our attention to the existence of certain bifurcating solutions, and omit an analysis of well-posedness as well as stability of the bifurcating solutions. 
In practice backscatter is applied to a numerically discretized situation, which effectively cuts the available wave vectors at some large $|\k|$. Thus, a spectral gap is retained, but the properties of the present continuum equations are nevertheless relevant, in particular for large-scale settings. 

\medskip
In contrast, in the viscous case, $d_1=d_2=0$, $b_1=b_2<0$, we have 
\[
\lim_{r\to 0}r^4 d(\lambda,r^{-1}\kx,r^{-1}\ky) = b^2\lambda - gH_0 b \mbox{ , and }
\lim_{r\to 0}r^6 d(r^{-2}\lambda,r^{-1}\kx,r^{-1}\ky) = \lambda(\lambda-b)^2.
\]
This implies that the real part of the spectrum has one parabolic branch with real part that is asymptotically quadratic in $|\k|$ and one which approaches the real negative value $g H_0/b$ as $|\k|\to\infty$. Similarly, in the inviscid case, $d_1=d_2=b_1=b_2=0$,  scaling gives
\begin{align*}
\lim_{r\to 0} r^2 \Re \left(d(\lambda_\rmr+\rmi r^{-1}\lambda_\rmi, r^{-1}\kx,r^{-1}\ky)\right)\ &=\ -3\lambda_\rmr \lambda_\rmi^2 - \frac{2\lb}{H_0}\lambda_\rmi^2 + gH_0 \lambda_\rmr + g\lb,\\
\lim_{r\to 0} r^3 \Im \left(d(\lambda_\rmr+\rmi r^{-1}\lambda_\rmi, r^{-1}\kx,r^{-1}\ky)\right)\ &=\  \lambda_\rmi(\lambda_\rmi^2 - gH_0),
\end{align*}
where $\lambda_\rmr:=\Re(\lambda)$ and $\lambda_\rmi:=\Im(\lambda)$. Hence, the spectrum for large $|\k|$ approaches the negative real value $-\lb/H_0$, and a complex conjugate pair $-\lb/(2H_0) \pm \rmi\sqrt{gH_0}$, so that real parts for large $|\k|$ are bounded uniformly away from zero.

\subsection{Onset of instabilities}\label{s:Turing}

Given some $\k=(k_x,k_y)^\intercal\in\R^2$, by the Routh-Hurwitz criterion a solution $\lambda$ to \eqref{e:poly} satisfies $\Re(\lambda)<0$ if and only if 
\begin{equation}\label{e:RH}
a_1>0,\; a_3>0,\; a_1a_2-a_3>0.
\end{equation}
More specifically, \eqref{e:poly} possesses a zero root $\lambda=0$ if and only if $a_3=0$, or has purely imaginary roots $\lambda=\pm\rmi\omega\in\rmi\R\setminus\{0\}$ if and only if $a_1a_2-a_3=0$ and $a_2>0$. 
For instance, without backscatter $b_j,d_j=0,j=1,2$, the conditions in \eqref{e:RH} are satisfied for all wave vectors $\k\neq0$ and any  bottom drag coefficient $\lb>0$, which implies that $(\vv,\eta)=(0,0)$ is spectrally stable in this case (with neutral mode at $\k=0$).

\begin{figure}[t!]
\centering
\subfigure[]{\includegraphics[trim = 4cm 8.5cm 5cm 9cm, clip, height=4cm]{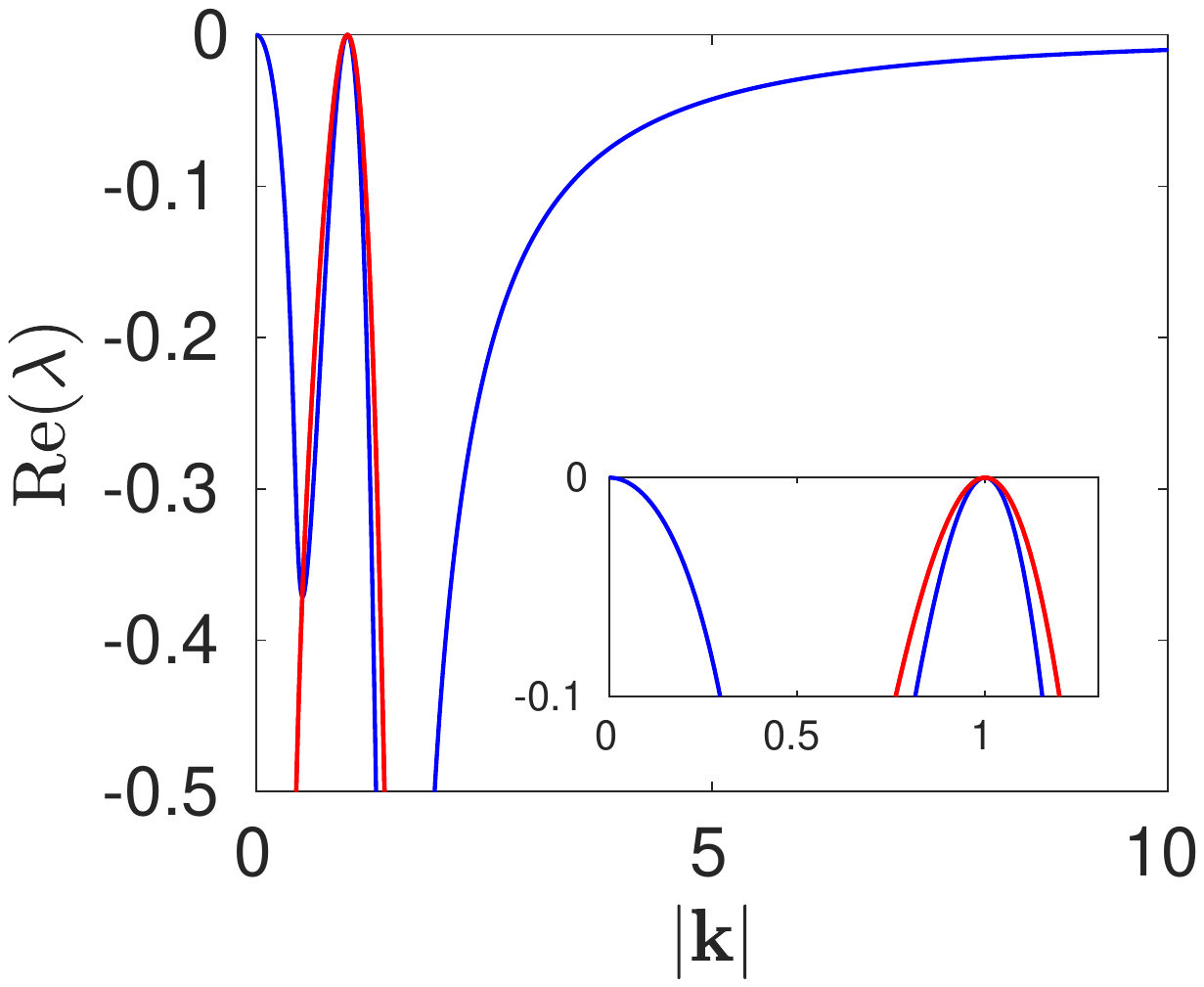}}
\hfil
\subfigure[]{\includegraphics[trim = 4cm 8.5cm 5cm 9cm, clip, height=4cm]{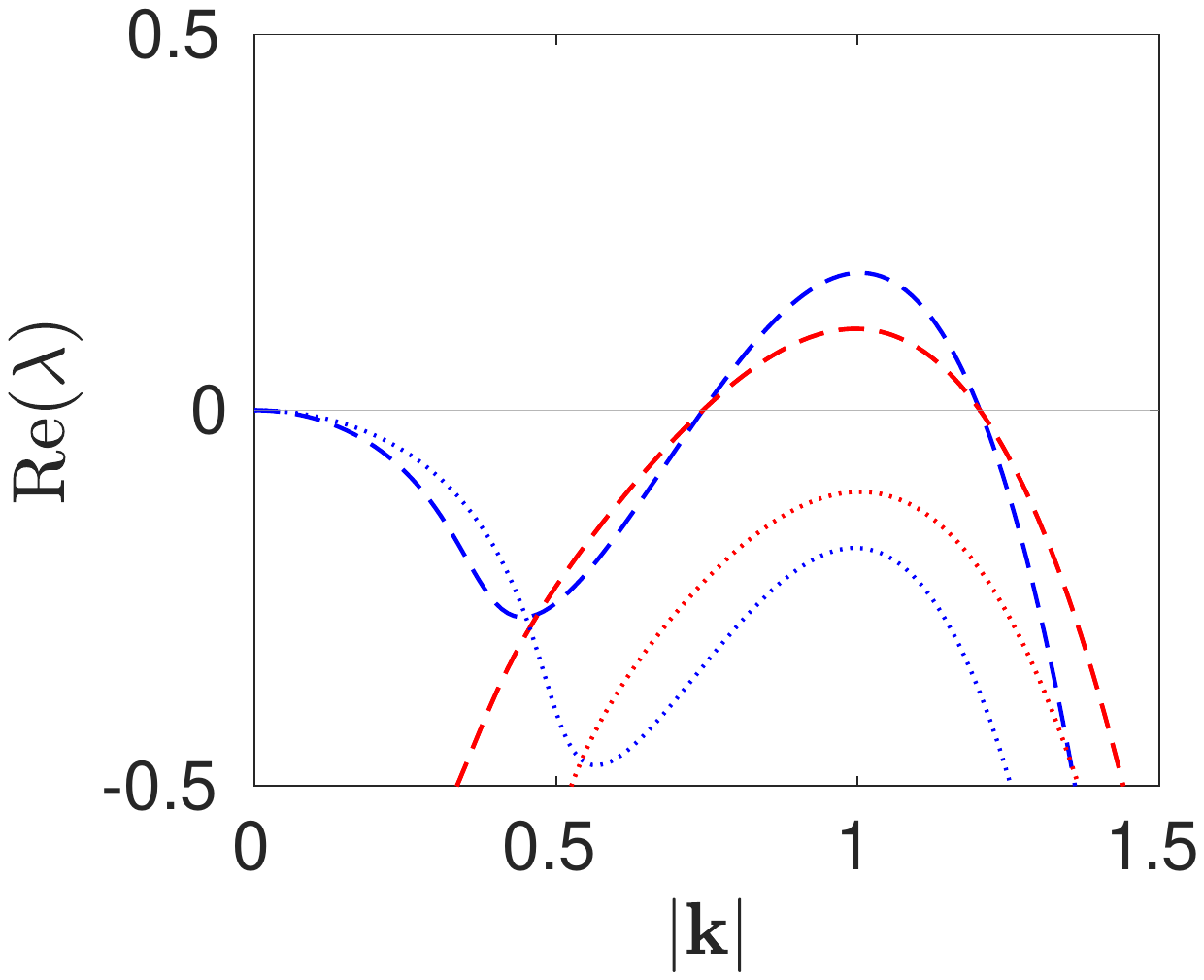}}
\caption{Samples of spectrum in the isotropic case $d_1=d_2 =1$, $b_1=b_2 =2$. Other fixed parameters: $f=0.3$, $g=9.8$, $H_0=0.1$ so that $\lb_\crit=0.1$ and $k_\crit=1$. (a) The real spectrum (blue) and the complex spectrum (red) of $\calL$ (magnification in inset) are simultaneously critical for $\lb=\lb_\crit$ at $\{\k\in\R^2:|\k|= k_\crit =1\}$. The real spectrum is zero at $\k=0$ for any choice of parameters and approaches zero as $|\k|\to \infty$. (b) Plotted are a stable case at $\lb = 0.12$ (dotted) and an unstable case at $\lb = 0.08$ (dashed); notably in the latter case real and complex spectra have zero real parts for the same wave length.}
\label{f:specinf}
\end{figure}

\medskip
\paragraph{Isotropic backscatter} In the isotropic case $d=d_1=d_2$, $b=b_1=b_2$, the dispersion relation depends on $\k$ only through $K:=|\k|^2$. The spectrum is therefore rotationally symmetric with respect to $\k\in\R^2$ in the wave vector plane. Moreover, $a_1$, $a_3$, and thus $a_1a_2-a_3$, possess the common factor $F(K;C) := d K^2 - b K + C/H_0$. In addition, $a_1$ and $a_1a_2-a_3$ have the sign of $F$, and $a_3$ that of $KF$. 
Let $\lambda_j(K;C)$, $j=1,2,3$ denote the roots of the dispersion relation in some ordering. It follows that $\lambda_j(K;C)=0$ for some $j$ occurs at $K=0$ and at roots of $F$. The latter emerge when decreasing $C$ below the threshold 
\begin{equation}\label{e:isocritC}
\lb_\crit := \frac{b^2H_0}{4d},
\end{equation}
where the global minimum of $F$ is a double root in $K$ and lies at wave vectors with
\begin{equation}\label{e:isocritk}
|\k|=k_\crit := \sqrt{\frac{b}{2d}}.
\end{equation}
At $C=\lb_\crit$ we choose indices so $\lambda_1(k_\crit;\lb_\crit)=0$, $\lambda_2(k_\crit;\lb_\crit)=-\lambda_3(k_\crit;\lb_\crit)=-\rmi \omega_\crit$, where 
\begin{equation}\label{e:omc}
\omega_\crit := \sqrt{a_2}|_{\lb=\lb_\crit, |\k|=k_\crit}=\sqrt{g H_0 k_\crit^2+f^2},
\end{equation} 
and $\Re(\lambda_j(K;\lb))<0$ for $|\k|\in\R\setminus\{0,k_\crit\}$, $j=1,2,3$. Hence, in the wave vector plane, the critical spectrum away from the origin forms a circle centered at the origin with radius $k_\crit$, cf.\ \cref{f:specinf}. And, as $C$ decreases below $\lb_\crit$, the zero state becomes unstable \emph{simultaneously} via a stationary (akin to a Turing) and an oscillatory (akin to a Turing-Hopf) instability with finite wave number;  both in presence of an additional neutrally stable mode with zero wave number. More precisely, 
for $0<\lb<\lb_\crit$, there is a positive interval $I_K$ with $a_3<0$ for $K\in I_K$. Hence, in this case \eqref{e:poly} has a positive root in the annulus $|\k|^2\in I_K$ in the wave vector plane.

Regarding the limit of small backscatter $d,b\to 0$, we note that the scalings of $\lb_\crit$ and $k_\crit$ differ so that fixed $k_\crit$ requires $\lb_\crit\to 0$, while fixed $\lb_\crit$ requires $k_\crit\to\infty$.

Any neutral mode corresponds to a mode of the inviscid case, with the usual geophysical terminology, cf.\ e.g.\  \cite{pedlosky1987geophysical}: Due to the frequency relation \eqref{e:omc}, we refer to the corresponding waves as inertia-gravity waves (IGWs) rather than Poincar\'e waves. 
The steady modes are in geostrophic balance and we therefore interpret the instability in the present isotropic case as a backscatter and bottom drag induced (simultaneous) instability of selected geostrophic equilibria and inertia-gravity waves.

\medskip
\paragraph{Anisotropic backscatter} In the anisotropic case $b_1\neq b_2$ and/or $d_1\neq d_2$ 
with $b_j,d_j>0,j=1,2$, the structure of the spectrum is more complicated. As a global constraint, we next show that the real spectrum is always more unstable than any non-real spectrum. 
In particular, the primary instability with respect to $C$ is purely stationary. 

Since the dispersion relation explicitly depends on $\kx$ and $\ky$, the spectrum is anisotropic in the wave vector plane. But, analogous to the isotropic case, $a_3$ has a factor
\[
F(\kx,\ky;\lb) := d_1|\k|^2 k_y^2 + d_2|\k|^2 k_x^2 - b_1 k_y^2 - b_2 k_x^2 + \lb/H_0,
\]
and $a_3$ has the sign of $KF$. We next show that $F(\k;\lb)=0$ is the minimal value of $F(\cdot;\lb)$ at $\lb=\lb_\crit$ and wave vectors $\k=\pm\k_\crit$ given by
\begin{subequations}\label{e:anisocritC}
\begin{align}
\lb_\crit &= \frac{b_2^2H_0}{4d_2}, \;  \k_\crit = \left( k_\crit,0\right)^\intercal, \; k_\crit=\sqrt{\frac{b_2}{2d_2}}, \quad \text{for}\ \frac{b_1^2}{d_1} \leq \frac{b_2^2}{d_2},\label{e:critx}\\
\lb_\crit &= \frac{b_1^2H_0}{4d_1}, \;  \k_\crit = \left(0, k_\crit\right)^\intercal,\; k_\crit=\sqrt{\frac{b_1}{2d_1}}, \quad \text{for}\ \frac{b_1^2}{d_1} \geq \frac{b_2^2}{d_2}.\label{e:crity}
\end{align}
\end{subequations}
The critical points of $F(\cdot;\lb)$ lie at $\k=0$, $\k=\pm\k_\crit$, and (if real) 
\[
\k=\k_{m,n}:=\left(m\sqrt{\frac{b_1(d_1+d_2)-2b_2d_1}{(d_1-d_2)^2}}, n\sqrt{\frac{b_2(d_1+d_2)-2b_1d_2}{(d_1-d_2)^2}}\right), \; m,n=\pm 1.
\]
(Note that $\k_{m,n},\,m,n=\pm 1$ are not critical points for $d_1=d_2$.) 
The (real) critical points $\k_{m,n}$ are not minima since the Hessian matrices of $F$ evaluated at $\k_{m,n}$ are indefinite. Indeed, the determinant of the Hessian is always $-16 \frac{(b_1 (d_1 + d_2)- 2 b_2 d_1)(b_2 (d_1 + d_2)-2 b_1 d_2)}{(d_1 - d_2)^2}$, which is negative for real $\k_{m,n}$. Since $F(\cdot;\lb)$ is coercive and sign symmetric, the global minima are as claimed.

In particular, it is straightforward to switch between the two cases upon moving parameters through the isotropic case. The respective scaling of $\lb_\crit$ and $k_\crit$ as $d_j,b_j\to 0$ for $j=1,2$ is the same as in \eqref{e:isocritC}, \eqref{e:isocritk}.

Due to $\partial_\lambda d(0,\k_\crit)=  gH_0 k_\crit^2+f^2\neq0$ at $\lb=\lb_\crit$, by the implicit function theorem there is a unique continuation $\lambda_1 = \lambda_1(\k;\lb)$ of roots of the dispersion relation \eqref{e:poly} for $\k$ near $\pm\k_\crit$. Moreover, $\lambda_1<0$ for $\k$ near (but not at) $\pm\k_\crit$.
The critical wave vectors $\k_\crit$ determined in \eqref{e:critx} and \eqref{e:crity} differ if $b_1^2/d_1\neq b_2^2/d_2$ so that $\k_\crit$ lies either on the $k_x$- or the $k_y$-axis in the wave vector plane. This is similar to reaction diffusion systems with uni-directional advection \cite{YangThesis}. 
In case $b_1^2/d_1= b_2^2/d_2$ the critical wave vectors lie on both axes so there are four critical wave vectors.  
By continuity, for $\lb<\lb_\crit$, but nearby, we have
$\lambda_1>0$ for wave vectors $\k$ in a disc shaped set near any of the two or four critical $\k_\crit$.
Specifically, for $b_1^2/d_1\neq b_2^2/d_2$, this corresponds to the red regions in \cref{f:growth} of \Cref{s:explicit}; for $b_1^2/d_1= b_2^2/d_2$ there are four such sets in the wave vector plane.

\begin{remark}
In any anisotropic case, the onset of instability upon decreasing $\lb$ occurs first through steady modes with wave vectors on the $\kx$- and/or $\ky$-axis. 
This is analogous to a Squire theorem, in the sense that the onset of instability in 2D coincides with the onset in 1D, which trivially holds in the isotropic case. 
For the anisotropic case, in \Cref{s:critical} we show $a_1, a_1a_2-a_3>0$ for $\lb\geq\lb_\crit$ and $\k\in\R^2$, so that the claim follows by the Routh-Hurwitz criterion. 
\end{remark}

\medskip
We summarize the results on the critical spectrum. In terms of decreasing the bottom drag parameter $\lb$, the spectrum of the linear operator $\calL$ from \eqref{e:linop} changes stability at $\lb=\lb_\crit$. The resulting operator $\calL_\crit := \calL|_{\lb=\lb_\crit}$ can have the following types of marginally stable spectral structure:
\begin{itemize}
\item[3D] A three-dimensional kernel for anisotropic backscatter satisfying $b_1^2/d_1\neq b_2^2/d_2$.\\ A sample is shown in \cref{f:spec3D}. 
\item[5D] A five-dimensional kernel for anisotropic backscatter satisfying $b_1^2/d_1= b_2^2/d_2$.\\ A sample is shown in \cref{f:spec5D}.
\item[$\infty$D] For isotropic backscatter, an infinite-dimensional kernel and simultaneously an infinite-dimensional center space, both parameterized by a circle of wave vectors. Samples are shown in \cref{f:specinf}.
\end{itemize}

\begin{remark}\label{r:smallback}
We briefly consider the regime of small backscatter parameters. Setting $(d_j, b_j) =(\eps^r \hat d_j,\eps\hat b_j)$, $0<\eps\ll1$, yields non-trivial and bounded critical wavenumbers only in case $r=1$. One may interpret $r< 1$ and $\eps\to 0$ as a regime where the backscatter cannot return the energy that is dissipated by the hyperviscosity into the same scales in terms of wavenumbers. We recall that the zeros of the dispersion relation are those of $a_3$ after \eqref{e:poly}, which is linear in $b_j, d_j$ and $C$. Hence, if all these scale in the same way, then the zeros of $a_3$, and thus the range of scales in which energy is backscattered, remain $\eps$-independent. \end{remark}

\begin{figure}[t!]
\centering
\subfigure[3D case]{\includegraphics[trim=0.5cm 9cm 1.5cm 9cm, clip, height=3.7cm]{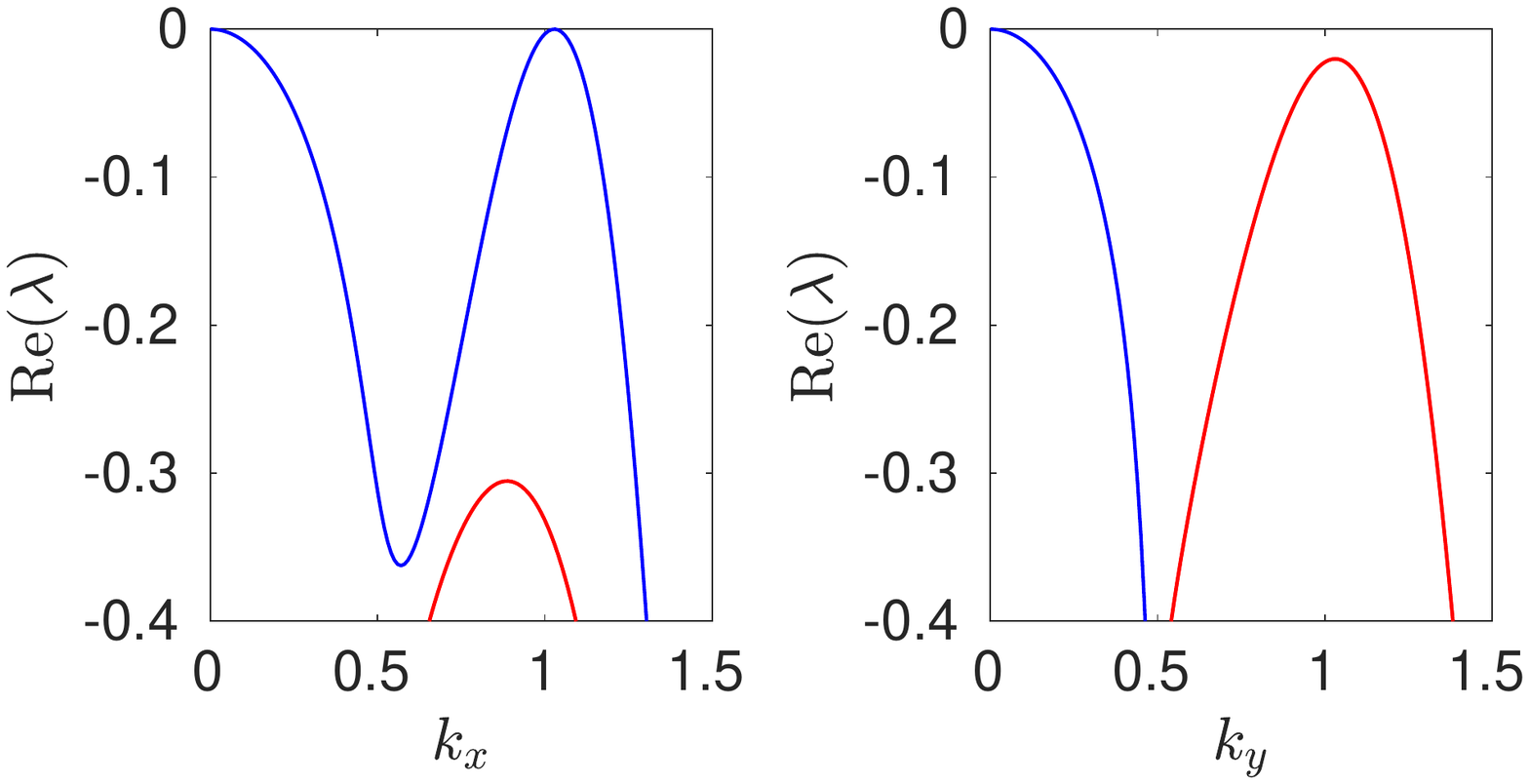}\label{f:spec3D}}
\hfill
\subfigure[5D case]{\includegraphics[trim=0.5cm 9cm 1.5cm 9cm, clip, height=3.7cm]{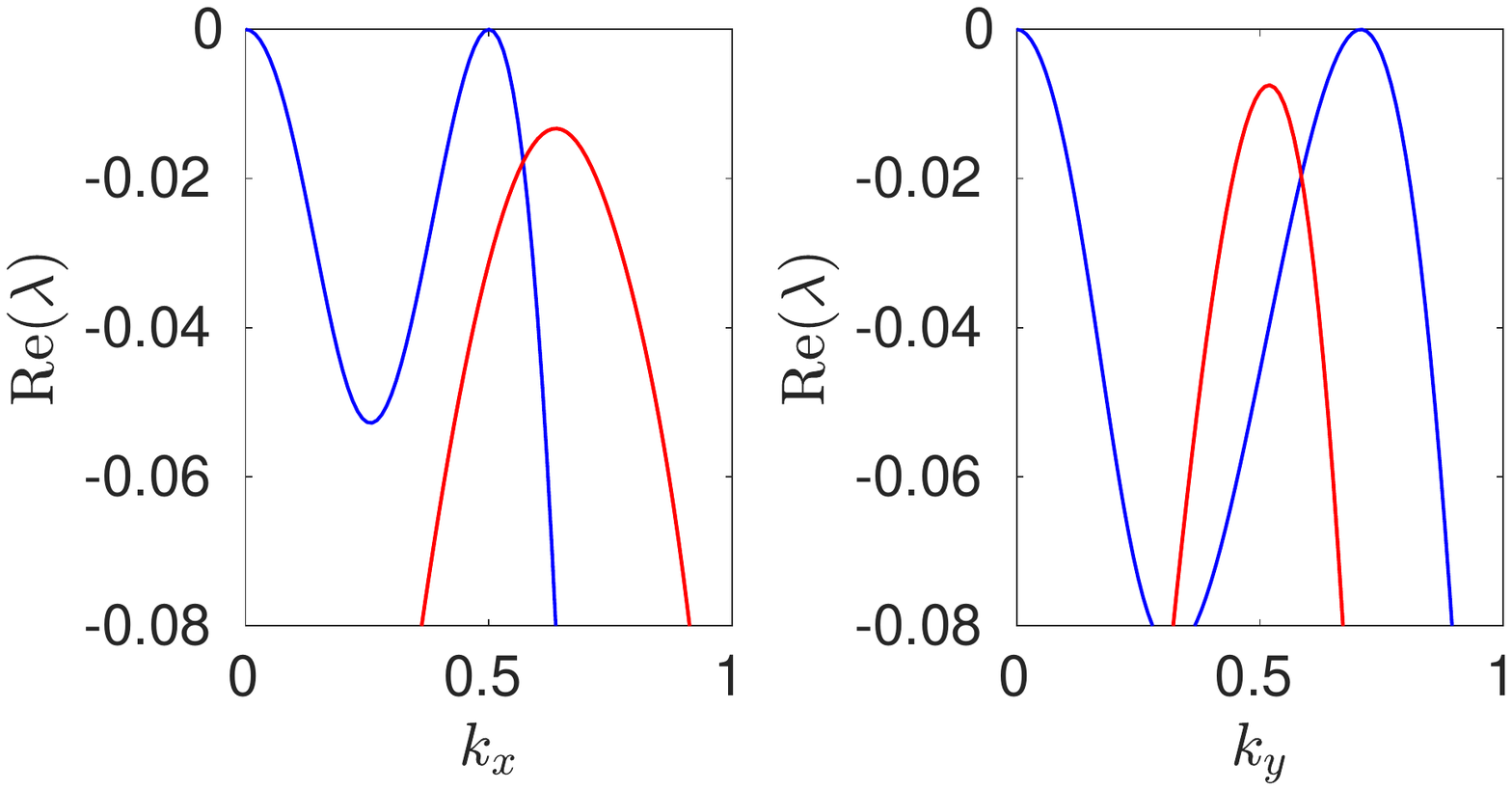}\label{f:spec5D}}
\caption{Sample of critical spectra in two anisotropic cases. The real spectrum (blue) of $\calL$ is critical for $\lb=\lb_\crit$ at (a) $\k = (\pm k_\crit,0)^\intercal$; (b) $\k=(\pm\kcx,0)^\intercal$ and $\k=(0,\pm\kcy)^\intercal$, with neutral mode at $\k=0$ for any choice of parameters, while the complex spectra (red) are strictly stable. Common parameters: $f=0.3$, $g=9.8$, $H_0=0.1$. Other parameters: (a) $d_1=1$, $d_2=1.04$, $b_1=1.5$, $b_2=2.2$ so that $\lb_\crit \approx0.116$ and $k_c\approx1.03$; (b) $d_1=1$, $d_2=4$, $b_1=1$, $b_2=2$ so that $\lb_\crit=0.025$, $\kcx=0.5$ and $\kcy\approx 0.71$.}
\label{f:specaniso}
\end{figure}

Having identified critical configurations, we discuss the form of the linear operator and its critical eigenvectors at onset of linear instability through the near zero bifurcation parameter 
\[
\alpha = (\lb_\crit-\lb) /H_0. 
\]
The linear operator $\calL$ from \eqref{e:linop} is then of the form
\[
\calL\ =\ \calL_\crit +
 \alpha\,\mathrm{diag}(1,1,0).
\]
On the one hand, at $\k=\k_\crit$ we expand $\lambda=\lambda(\alpha)$ solving \eqref{e:poly} with $\lambda(0)=0$ as
\[
\lambda(\alpha) =  \partial_\alpha\lambda(0)\alpha + \calO(\alpha^2),
\]
with $\partial_\alpha\lambda(0) = [(\partial_C a_3)H_0/a_2]_{\lb=\lb_\crit, \k=\k_\crit} = g|\k_\crit|^2 H_0/\omega_\crit^2 > 0$. 
Hence, the real part of the critical spectrum at $\k=\k_\crit$ moves linearly in $\alpha$. 
On the other hand, the spectrum near $\k =0$ 
remains stable, in fact for all $\lb>0$. To show this, we consider the continuation $\lambda_1(\k)\in\R$ with $\lambda_1(0)=0$. In polar coordinates, $\k = R(\cos\theta,\sin\theta)^\intercal$, we compute 
$\partial_R\lambda_1(0) = 0$ and $\partial_R^2\lambda_1(0)=-\frac{2 \lb g H_0^2}{\lb^2 + f^2 H_0^2}$, the latter is negative for $\lb>0$ so that $\lambda_1(\k)$ is negative near $\k=0$. In contrast, for $\lb=0$ the continuation $\lambda_1(\k)$ is positive near $\k=0$ since  its leading order expansion near $\k=0$ is $\frac{b_2 g H_0}{f^2}R^4$, cf.\ \cite{PRY22}.

\medskip
\paragraph{3D kernel}
As shown above, in this case the zero eigenmode of $\calL_\crit$ has wave vector either on the $k_x$-axis or on the $k_y$-axis, and without loss of generality we only discuss the former. Hence, in order to identify the primary bifurcation, it suffices to consider \eqref{e:sw} in 1D, i.e.\ $\x = (x,0)^\intercal$, $x\in[0,2\pi/k]$ with periodic boundary conditions, where $k$ is the wave number for the expected bifurcating nonlinear wave trains. As noted in \cref{r:anisoscalar}, somewhat surprisingly, the bifurcation problem can be reduced a priori to a scalar equation, and we will pursue this later based on the isotropic case. As to the linear structure of the full three-component system, we rescale the domain to $[0,2\pi]$ and restrict $\calL$ from \eqref{e:linop} to 1D. 
The kernel eigenvectors of $\calL$ at $(\lb,k) = (\lb_\crit,k_\crit)$ are 
$\tee_j:=\tE_j \rme^{\rmi j x}$, $\tE_j\in\C^3$ and $\tE_{-j} = \overline{\tE}_j$, for $j=0,\pm1$, and 
\begin{equation}\label{e:rossbyeigenvector}
\tE_j = 
\begin{pmatrix}
0\\ 1\\ -\rmi j f/(g k_\crit)
\end{pmatrix},
\; j=\pm 1, \quad 
\tE_0 = 
\begin{pmatrix}
0\\ 0\\ 1
\end{pmatrix}.
\end{equation}
Corresponding to mass conservation, $\tee_0$ is in the kernel of $\calL$ for all $\lb$. The first two components of $\tE_{\pm 1}$ are orthogonal to $\k=(k_x,0)^\intercal$, which is in accordance with the possibility to reduce to the scalar equation in the form of \eqref{e:planeansatz}, cf.\ \cref{r:anisoscalar}. We note that the eigenvectors depend on the bottom drag parameters only through $k_\crit$ from \eqref{e:isocritk}, i.e.\ $\lb_\crit = d_2H_0k_\crit^4$. Consistent with \cref{r:geostrophic}, the kernel eigenvectors $\tee_{\pm1}$ correspond to geostrophically balanced modes, e.g.\ \cite{pedlosky1987geophysical}.

\medskip
\paragraph{5D kernel} In this case the critical modes  in $k_x$- and $k_y$-direction of the 3D cases occur simultaneously. Hence, the eigenvectors identified in the 3D case combined provide the linear structure at criticality. The reduction to the respective scalar equations admits a partial unfolding of the bifurcation. In a more complete unfolding we expect square-type patterns of geostrophic equilibria in orthogonal directions, analogous to, e.g.\ \cite{YangThesis}. 
However, it is not clear that this can be made rigorous in the present context due to the lack of a spectral gap following from \eqref{e:speczero}. In particular, the required Fredholm properties are not immediately clear. 

\medskip
\paragraph{Infinite-dimensional kernel} Here the backscatter is isotropic and the critical eigenmodes of $\calL_\crit$ lie at the origin and on a circle with radius $k_\crit$ in the wave vector plane. This circle is doubly covered by steady and oscillatory modes. For single steady modes, in contrast to the 3D and 5D cases,  any wave vector direction can be selected in order to reduce to a 1D plane wave problem. This is always \eqref{e:planerescale} and admits an analysis of the stationary bifurcation problem independent of the spectral gap problem. By reducing to lattices, the mentioned technical issues aside, we expect various steady patterns to bifurcate, in particular squares and hexagons.

Towards bifurcations of oscillatory solutions, we consider 
a comoving frame $\x - \mathbf{c} t$ with suitable constant $\mathbf{c}$. 
This change of variables creates the additional term $-\mathbf c\cdot\nabla (\vv,\eta)^\intercal$ on the left-hand side of \eqref{e:sw} so that the dispersion relation \eqref{e:poly} turns into 
\begin{equation}\label{e:dispcomov}
d_\mathbf{c}(\lambda, \k) = d(\lambda - \rmi\mathbf{c}\cdot\k,\k). 
\end{equation}
Hence, exactly the purely imaginary spectrum at $C=\lb_\crit$ from \eqref{e:isocritC} is shifted to the origin. Indeed, if $d(\rmi \omega,\k_\omega)=0$ for $\omega\in\R$ and some $\k_\omega\neq 0$, then $d_\mathbf{c}(0, \k_\omega)=0$ for $\mathbf{c}=-\omega/|\k_\omega|^2\;  \k_\omega$. 
This applies for $\omega =\pm\omega_\crit$ 
 from \eqref{e:omc} with any $\k_\omega=\k_\crit=(\kcx,\kcy)^\intercal$ 
satisfying \eqref{e:isocritk}.  
Concerning bifurcations, we again consider the simplest case of wave trains, 
which are $2\pi$-periodic solutions that depend only on the phase variable 
\[
\zeta=(\x-\mathbf{c} t)\cdot\k_\crit = \kcx x+\kcy y + \omega t,
\]
so that $\partial_x = \kcx\partial_\zeta$, $\partial_y=\kcy\partial_\zeta$ and $\partial_t$ becomes $\partial_t + \omega\partial_\zeta$. 
Hence, in terms of $\zeta$, and using 
$k_\crit^4 d \partial_\zeta^4 + k_\crit^2 b \partial_\zeta^2 + \lb_\crit/H_0 = d k_\crit^4(\partial_\zeta^2+1)^2$, 
for $\omega=-\omega_\crit$ we obtain the linear operator 
\begin{align}\label{e:HopfOp}
\calL_\crit\ := \ \begin{pmatrix}
\omega_\crit\partial_\zeta- d k_\crit^4(\partial_\zeta^2+1)^2 & f & -\kcx g \partial_\zeta\\
-f & \omega_\crit\partial_\zeta- d k_\crit^4(\partial_\zeta^2+1)^2 & -\kcy g \partial_\zeta\\
-\kcx H_0 \partial_\zeta & -\kcy H_0 \partial_\zeta  & \omega_\crit\partial_\zeta
\end{pmatrix}.
\end{align}
Its kernel is three-dimensional, spanned by $\ee_0=\tee_0$ from \eqref{e:rossbyeigenvector} and $\ee_1,\ee_{-1}$ that have the form
$\ee_j:=\E_j \rme^{\rmi j \zeta}$, $\E_j\in\C^3$, where $\E_{-j} = \overline{\E_j}$ and $j=\pm1$. 
We choose  
\[
\E_j = \begin{pmatrix}
\omega_\crit\kcx + j\rmi f\kcy \\
\omega_\crit\kcy - j\rmi f\kcx \\
k_\crit^2 H_0
\end{pmatrix}.
\]
The case $\omega=\omega_\crit$ is analogous. As in the 3D and 5D cases, the eigenvectors $\ee_{\pm1}$ depend on the bottom drag parameters only through $\k_\crit$. 

The structure of $\E_j,\,j=\pm 1$ shows that bifurcating solutions cannot be of the plane wave form \eqref{e:planeansatz}, 
which is required for the reduction to the scalar \eqref{e:planerescale}. Hence, an analysis of the arising bifurcation requires to consider the full system. 
These kernel eigenvectors correspond to so-called inertia-gravity waves \cite{pedlosky1987geophysical}, and bifurcation from these will be studied in \Cref{s:igw}. The isotropic case thus simultaneously generates geostrophic equilibria and inertia-gravity waves, and we expect also mixed waves. Their wave vectors must have length near the critical one, but can have arbitrary directions.

\section{Bifurcation of nonlinear geostrophic equilibria}\label{s:bif}

We combine the results of the previous section with the reduction presented in \Cref{s:sw} for steady plane wave-type solutions. Such bifurcations arise from the kernel of $\calL_\crit$ and in the previous section we found that this occurs in the anisotropic case as the primary instability with axis aligned wave vectors, and in the isotropic case jointly with oscillatory modes in any direction. For convenience, we here assume isotropy, which is not a restriction for what we consider, as noted in \cref{r:anisoscalar}.

In \eqref{e:sw} we thus look for bifurcating solutions of the form 
\begin{equation}\label{e:special}
\vv = \partial_\xi \phi(\k\cdot\x)\k^\perp,\quad \eta =\tilde f\phi(\k\cdot\x),\quad |\k| =|\k^\perp| \approx k_\crit,
\end{equation}
for wave shape $\phi$ and phase variable $\xi=\k\cdot\x$ 
which leads to the reduced 
steady state equation
\[
0=d k^4\partial_{\xi}^5\phi+b k^2\partial_{\xi}^3\phi+\frac{C+Qk|\partial_\xi\phi|}{H_0+\tf\phi}\partial_{\xi}\phi, \quad  k=|\k| =|\k^\perp|.
\]
We recall that $k$ is the wave number for the nonlinear plane wave-type solutions. We remark as before that the nonlinear terms stems from the bottom drag only. 
We also recall from \cref{r:geostrophic} that the bifurcating solutions we investigate here can be viewed as (nonlinear) geostrophic equilibria (GE).

\medskip
\paragraph{Linear analysis}
We augment the stability analysis of the trivial state $(\vv,\eta)=(0,0)$ in \Cref{s:Turing} by including changes in wave number such that $|\k| = k_\crit+\kappa$ with $\kappa\approx 0$.  We recall the bifurcation parameter $\alpha$ such that $\lb = \lb_\crit-\alpha H_0$. For $\mu = (\alpha,\kappa)\approx(0,0)$ 
we expand $\lambda = \lambda(\mu)$ solving \eqref{e:poly} with $\lambda(0) = 0$ 
as
\begin{equation}\label{e:zerospec}
\lambda(\mu) =  M \big(\alpha - 2b\kappa^2 + \calO(|\kappa|^3)\big), 
\end{equation}
where 
$M:= gk_\crit^2H_0/\omega_\crit^2>0$. Hence, the trivial state $(\vv,\eta)=(0,0)$ (i.e.\ $\phi(\xi)\equiv0$) is unstable for $\alpha > 2b\kappa^2$ and the stability boundary is given by $\alpha = 2b\kappa^2$ at leading order.

\medskip
\paragraph{Steady state equation}
We consider the steady state equation in the domain $\xi\in[0, 2\pi]$ under periodic boundary conditions and nonlinear wave number $k=k_\crit+\kappa$.
With parameters $\mu=(\alpha,\kappa)\approx (0,0)$, we thus obtain 
\begin{equation}
G(\phi,\mu):=dk^4\partial_{\xi}^5\phi+bk^2\partial_{\xi}^3\phi+
\frac{\lb_\crit-\alpha H_0+ Q k|\partial_\xi\phi| }{H_0+\tf\phi}\partial_\xi\phi 
= 0, \label{e:normProb2}
\end{equation}
where $G: \Hspace_\per^5\times \R^2 \to \Lspace^2$ is Fr\'echet differentiable. In lack of a reference we include the following proof, which in particular shows the differentiability of the non-smooth term.
\begin{lemma}\label{l:nem}
Let $f:I\to\R$, $I\subset\R$ be an interval.
For the Nemitsky operator $f_N:\mathcal{U}\subset \Hspace^1_\per\to \Lspace^2$ defined by $f_N(\varphi)(x)= f(\varphi(x))$, with $\mathcal{U}$ such that $f_N$ is well-defined, the following holds: 
(a) If $f$ is Lipschitz continuous, then so is $f_N$. (b) If $f$ is $C^k$ smooth for $k\in \N$, then so is $f_N$. 
\end{lemma}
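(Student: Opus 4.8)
The plan is to establish (a) and (b) by exploiting the algebra and embedding properties of $H^1_\per$ in one space dimension, namely that $H^1_\per \hookrightarrow C^0_\per$ with compact embedding, and that $H^1_\per$ is a Banach algebra. For part (a): if $f$ is Lipschitz with constant $L$, then for $\varphi_1,\varphi_2 \in \mathcal U$ we have $|f(\varphi_1(x)) - f(\varphi_2(x))| \le L|\varphi_1(x) - \varphi_2(x)|$ pointwise, and integrating the square over the period immediately gives $\|f_N(\varphi_1) - f_N(\varphi_2)\|_{L^2} \le L\|\varphi_1 - \varphi_2\|_{L^2} \le L\|\varphi_1 - \varphi_2\|_{H^1}$. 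So $f_N$ is Lipschitz $\mathcal U \to L^2$ with the same constant; no embedding is even needed here.

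For part (b), I would argue by induction on $k$. The base case $k=1$ is the crux. Given $\varphi \in \mathcal U$, the candidate derivative is the linear map $h \mapsto f'(\varphi)\, h$, which indeed sends $H^1_\per$ into $L^2$ since $f'(\varphi) \in C^0_\per \subset L^\infty$ (using $\varphi \in C^0_\per$ via the Sobolev embedding and continuity of $f'$) and $h \in H^1_\per \subset L^2$; boundedness of this operator follows from $\|f'(\varphi) h\|_{L^2} \le \|f'(\varphi)\|_{L^\infty}\|h\|_{L^2}$. To verify Fréchet differentiability I would write, for $h \in H^1_\per$ small enough that $\varphi + h \in \mathcal U$,
\[
f_N(\varphi + h) - f_N(\varphi) - f'(\varphi) h = \int_0^1 \big(f'(\varphi + s h) - f'(\varphi)\big)\, h \; \dif s
\]
pointwise in $x$, then estimate the $L^2$ norm by $\sup_{s\in[0,1]} \|f'(\varphi + sh) - f'(\varphi)\|_{L^\infty} \, \|h\|_{L^2}$. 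Since $\|h\|_{L^\infty} \le C\|h\|_{H^1} \to 0$ and $f'$ is uniformly continuous on the relevant compact interval of $\R$, the supremum tends to $0$, giving $o(\|h\|_{H^1})$. Continuity of $\varphi \mapsto f'(\varphi)$ as a map into the bounded operators $H^1_\per \to L^2$ follows the same way: $\|(f'(\varphi_1) - f'(\varphi_2)) h\|_{L^2} \le \|f'(\varphi_1) - f'(\varphi_2)\|_{L^\infty}\|h\|_{L^2}$, and the first factor is controlled by uniform continuity of $f'$ together with $\|\varphi_1 - \varphi_2\|_{L^\infty} \le C\|\varphi_1 - \varphi_2\|_{H^1}$. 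For the inductive step, assuming the statement for $C^{k-1}$ maps, I would differentiate the expression $Df_N(\varphi) = [h \mapsto f'(\varphi) h]$: the dependence on $\varphi$ factors through the multiplication operator by $f'(\varphi)$, and I must show $\varphi \mapsto f'(\varphi)$ is $C^{k-1}$ as a map into $L^\infty$ (not just $L^2$). This requires the sharper statement that the Nemitsky operator of a $C^{k-1}$ function is $C^{k-1}$ from $H^1_\per$ into $L^\infty$ — which is cleaner than into $L^2$, because $H^1_\per \hookrightarrow L^\infty$ is an algebra embedding and $f^{(j)}(\varphi) \in C^0_\per \hookrightarrow L^\infty$. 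So I would actually strengthen the induction hypothesis to ``$f \in C^k \Rightarrow f_N : \mathcal U \subset H^1_\per \to L^\infty$ is $C^k$'' and derive the $L^2$-valued statement as a corollary via the continuous inclusion $L^\infty_\per \hookrightarrow L^2_\per$ on a bounded period.

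The main obstacle is the bookkeeping in the inductive step: one must carefully identify the higher Fréchet derivatives of $f_N$ as multilinear maps $(h_1,\dots,h_k) \mapsto f^{(k)}(\varphi)\, h_1 \cdots h_k$, check these are bounded using that $H^1_\per$ is a Banach algebra (so products $h_1 \cdots h_k$ stay in $H^1_\per$, hence in $L^\infty$), and verify the remainder estimates via the integral form of Taylor's theorem as above. The non-smooth term in \eqref{e:normProb2} — the quadratic bottom drag built from $|\partial_\xi\phi|$ — is only $C^1$, so only part (a) and the $k=1$ case of part (b) are strictly needed there; the $C^k$ statement is included for completeness and applies to the smooth denominator $1/(H_0 + \tf\phi)$, where $f(u) = 1/(H_0 + \tf u)$ is $C^\infty$ on the interval where $H_0 + \tf\phi$ stays positive, which holds for $\phi$ in a neighborhood of $0$ in $H^1_\per$ by the sup-norm bound.
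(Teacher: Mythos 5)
Your proof is correct and rests on the same two ingredients as the paper's: the Sobolev embedding $H^1_\per\hookrightarrow L^\infty$ (to control $\varphi$ and $h$ uniformly on the period) and a uniform Taylor-remainder estimate for $f$, converted into an $o(\|h\|_{H^1})$ bound in $L^2$. Part (a) is literally the paper's argument. For $k=1$ you use the integral form of the remainder plus uniform continuity of $f'$ on a compact neighbourhood of $\range(\varphi)$, whereas the paper uses the $\eps$--$\delta$ form $R(z;y)=f(y+z)-f(y)-f'(y)z=o(|z|)$ locally uniformly in $y$; these are equivalent. The one place you genuinely go further is the case $k\ge 2$: the paper dismisses it in a sentence (``replace $R$ by the appropriate remainder term''), while you correctly flag that verifying $D^{k-1}f_N$ is differentiable requires controlling $f^{(k-1)}(\varphi)$ in a topology that respects operator norms, and you propose the clean fix of strengthening the induction hypothesis to $C^k$ as a map into $L^\infty$ and using that $H^1_\per$ is a Banach algebra. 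That is a legitimate (and arguably tidier) way to make the higher-order claim precise; note, though, that a direct argument also works without the strengthened hypothesis, since the candidate $j$-th derivative $(h_1,\dots,h_j)\mapsto f^{(j)}(\varphi)h_1\cdots h_j$ is bounded $(H^1_\per)^j\to L^2$ via the algebra property, and the paper's remainder estimate applied to $f^{(j-1)}$ then gives $o(\|h\|_{H^1})$ in the appropriate operator norm directly. Either way the conclusion stands, and your closing remark about which regularity is actually used in \eqref{e:normProb2} (only $C^1$ for the $|\partial_\xi\phi|$-term, $C^\infty$ for the denominator near $\phi=0$) is accurate.
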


\begin{proof} The Nemitsky operator is well-defined since $\varphi\in \Hspace_\per^1$ is continuous and thus has bounded range, so that $\|f_N(\varphi)\|_2 \leq 2\pi \|f_N(\varphi)\|_\infty < \infty$ in all cases. 

(a) For $\varphi,\psi\in\Lspace^2$ and $L$ the Lipschitz-constant of $f$ we have $\|f_N(\varphi)-f_N(\psi)\|_2\leq L\|\varphi-\psi\|_2$. 

(b) We consider $k=1$ in detail. For $y\in I$, by smoothness of $f$ we have $R(z;y) := f(y+z)-f(y)-f'(y)z = o(|z|)$, i.e.\ for any $\eps>0$ there is $\delta>0$ such that $|z|\leq \delta$ implies $|R(z;y)|\leq \eps |z|$, locally uniformly in $y$. 
For the Nemitsky operator we get $(f_N(\varphi +h)-f_N(\varphi))(x)-f'(\varphi(x))h(x) = R(h(x);\varphi(x))$.  
With the Sobolev embedding $\|\cdot\|_\infty \leq C \|\cdot\|_{\Hspace^1}$, for $\varphi,h\in \Hspace_\per^1$ with $\|h\|_{\Hspace^1} \leq \delta/C$ we thus have $|R(h(x);\varphi(x))| \leq \eps |h(x)|$ for all $x\in[0,2\pi]$, which implies $\|R(h(\cdot);\varphi(\cdot))\|_2 \leq \eps \|h\|_2$. From $\|\cdot\|_2\leq \|\cdot\|_{\Hspace^1}$ we obtain $\|R(h(\cdot);\varphi(\cdot))\|_2/\|h\|_{\Hspace^1} \leq \eps$, which implies the Fr\'echet derivative of $f_N$ is $f'(\varphi(\cdot))$. 
For general $k$ the proof is the same upon replacing $R$ by the appropriate remainder term of the Taylor expansion of $f$. 
\end{proof}

This lemma implies differentiability of the full operator $G$ due to the higher order smoothness in the domain of $G$ and the differentiability of products of differentiable functions.

Due to the mass conservation, any constant $\phi$ solves \eqref{e:normProb2}, and without loss we consider the bifurcation from the zero state, 
cf.\ \Cref{s:sw}.
The derivative 
\[
\calL_0:=\partial_{\phi} G(0,0)=dk_\crit^4\partial_{\xi}^5+bk_\crit^2\partial_{\xi}^3+ \frac{\lb_\crit} {H_0}
\partial_\xi :\Hspace_\per^5\to\Lspace^2,
\]
is a Fredholm operator with index zero and with the kernel $\ker(\calL_0)=\mathrm{span}\{\e_j: j=0,\pm1\}$, $\e_j:= \rme^{\rmi j\xi}$. By Fredholm properties the domain and range of $\calL_0$ split as $\Hspace_\per^5=\ker(\calL_0)\oplus \setM$, with $\setM:=\ker(\calL_0)^\perp$, and $\Lspace^2= \ker(\calL_0^*)\oplus  \range(\calL_0)$, where $\range(\calL_0)^\perp=\ker(\calL_0^*)$ with respect to the $\Lspace^2$-inner product $\langle u,v \rangle_{\Lspace^2}=\frac{1}{2\pi}\int_0^{2\pi}u\bar v\dif\xi$. 
The kernel of the adjoint operator is $\ker(\calL_0^*)=\mathrm{span}\{\e_j^*: j=0,\pm1\}$ with $\e_j^*=\e_j$,
analogous to $\ker(\calL_0)$. When there is no ambiguity, in the remainder of \Cref{s:bif} we denote  $\langle \cdot,\cdot \rangle_{\Lspace^2}$ as $\langle \cdot,\cdot \rangle$.

\medskip
\paragraph{Lyapunov-Schmidt reduction}
The above decompositions define projections $\tilde P: H_\per^5\to \ker(\calL_0)$ along $\setM$ and $P:\Lspace^2\to \range(\calL_0)$ along $\ker(\calL_0^*)$.
The projection $\tilde P$ in the splitting $\phi = u+w$ with $u\in\ker(\calL_0)$ and $w\in \setM$ can be written as 
\[
u=\tilde P\phi:= \sum_{j=-1}^1\langle \phi,\e_j\rangle_{H^5}\e_j, \quad\langle \phi,\e_j\rangle_{H^5}=\sum_{n=0}^5\langle \partial_\xi^n  \phi,\e_j\rangle_{L^2}.
\]
Integration by parts under periodic boundary conditions gives $\langle \partial_\xi^{n+1} \phi,\e_j\rangle_{L^2} = \rmi j \langle \partial_\xi^n \phi,\e_j\rangle_{L^2}$, so that for $j=0,\pm1$ we have $\langle \phi,\e_j\rangle_{H^5}=(1+\rmi j)\langle \phi,\e_j\rangle_{L^2}$.
Thus, the inner products $\langle\cdot,\cdot\rangle_{H^5}$ and $\langle\cdot,\cdot\rangle_{L^2}$ are equivalent concerning the orthogonality between $\ker(\calL_0)$ and $\setM$. 
The projection $P$ in the decomposition of $L^2$  can be written as
\begin{equation}\label{e:projP}
P := \Id - \sum_{j=-1}^1 \langle \cdot , \e_j \rangle_{L^2}\e_j.
\end{equation}
We consider $u\in \ker (\calL_0)$, $w\in \setM$ and the projected problem
\begin{equation}\label{e:PG}
P G(u+ w,\mu)=0.
\end{equation}
Differentiating \eqref{e:PG} with respect to $w$ at zero gives $P\partial_\phi G(0,0) = P\calL_0 = \calL_0: \setM\to \range(\calL_0)$.
By the implicit function theorem, there is an open neighbourhood of $(0,0,0)$ in $\ker(\calL_0)\times \setM\times\R^2$  of the form $N_0\times M_0\times (-\eps,\eps)^2$ and a unique function  $W:N_0\times (-\eps,\eps)^2\to  M_0$ such that $W(0,0)=0$ and $w=W(u,\mu)$ solves \eqref{e:PG} for all $(u,\mu)\in  N_0\times (-\eps,\eps)^2$.
In order to solve \eqref{e:normProb2} it thus remains to determine $(u,\mu)\in N_0\times (-\eps,\eps)^2$ 
such that 
\[
(\mathrm{Id}- P) G(u+W(u,\mu),\mu)=0,
\]
which is equivalent to the bifurcation equations
\begin{equation}\label{e:bifross}
\langle G(u+W(u,\mu),\mu),\e_j\rangle=0,\quad j=0,\pm1.
\end{equation}
We write $G$ as
\[
G(\phi,\mu) = \calL_0\phi + L_\mu\phi + N_\lb(\phi,\mu) + N_Q(\phi,\mu),
\]
in terms of the operators defined next, where we denote $\phi_\xi = \partial_\xi\phi$:
\begin{subequations}\label{e:op}
\begin{align}
L_\mu\  :=&\  d(k^4-k_\crit^4)\partial_\xi^5 + b(k^2-k_\crit^2)\partial_\xi^3 - \alpha\partial_\xi, \\
N_\lb(\phi,\mu)\ :=&\ (\lb_\crit -\alpha H_0)\left(\frac{1}{H_0 + \tf\phi} - \frac{1}{H_0}\right)\phi_\xi 
\nonumber \\
=&\ -\frac{\lb_\crit - \alpha H_0}{H_0^2}\left(\tf\phi-\frac{\tf^2\phi^2}{H_0} + \tf^3\calO(|\phi|^3)\right)\phi_\xi, \label{e:NC}\\
N_Q(\phi,\mu) \ :=&\ \frac{Qk|\phi_\xi|\phi_\xi}{H_0+\tf\phi} 
\ =\ \frac{Qk}{H_0}|\phi_\xi|\phi_\xi - \frac{Qk}{H_0^2} |\phi_\xi|\phi_\xi\left( \tf\phi-\frac{\tf^2\phi^2}{H_0} +\tf^3\calO(|\phi|^3)\right).
\end{align}
\end{subequations}
Using $\calL_0 u=0$, $\langle \calL_0 w, \e_j\rangle =0$ and with $\phi = u + W(u,\mu)$, equations \eqref{e:bifross} become
\begin{equation}\label{e:bifG}
\langle G(\phi,\mu), \e_j\rangle = \langle L_\mu\phi, \e_j \rangle + \langle N_\lb(\phi,\mu),\e_j \rangle + \langle N_Q(\phi,\mu), \e_j\rangle=0 ,\quad j=0,\pm1.
\end{equation}
There are parameters $A_j\in\C,\,j=0,\pm 1$ in a neighbourhood of zero, such that $u=A_0\e_0+ A_1\e_1 + A_{-1}\e_{-1}\in N_0\subset\ker(\calL_0)$. For the real bifurcating solutions we have $A_0\in\R$ and $A_{-1} = \overline{A_1}$. Moreover, we can assume $A_1 = A_{-1}\in\R$ due to the translation symmetry. 
Due to $W\in\setM$ and $\e_0\in\ker(\calL_0)$ it follows $\langle W,\e_0\rangle=0$, so $W$ has always zero mean.
Thus, nonzero mean comes from the constant contribution $A_0\neq0$ only.
Since we consider zero mean for $\eta$, we set $A_0=0$. 
Denoting $u_\xi = \partial_\xi u$, this gives
\begin{equation}\label{e:u}
u = 2A_1\cos(\xi),\quad u_\xi = \rmi(A_1\e_1 - A_{-1}\e_{-1}) = -2A_1\sin(\xi),
\end{equation}
as well as $\langle u_\xi,\e_0\rangle=0$, $\langle u_\xi,\e_1\rangle=\rmi A_1$, and $\langle u_\xi,\e_{-1}\rangle=-\rmi A_{-1}$.

\subsection{Smooth case}

We first discuss the bifurcation and expansion of the small amplitude plane wave-type solutions near $\mu =(\alpha,\kappa)=0$ with smooth bottom drag ($Q=0$). To state this, we recall the critical wave number $k_\crit$ as in \eqref{e:isocritk}, and the bottom drag $\lb = \lb_\crit - \alpha H_0$ with $\lb_\crit$ as in \eqref{e:isocritC}.
\begin{theorem}[Bifurcation of GE for $Q=0$]\label{t:bifQ0}
Let $Q=0$, $\alpha,\kappa\in\R$ sufficiently close to zero, and $\k\in\R^2$ with $|\k| =k_\crit+\kappa$. Consider steady plane wave-type solutions to \eqref{e:sw} of the form \eqref{e:special} 
with $2\pi$-periodic mean zero $\phi$ 
and $\xi=\k\cdot\x$.
These waves are (up to spatial translations) in one-to-one correspondence with solutions $A_1\in\R$ near zero of
\begin{equation}
0\ =\ A_1\left(
\alpha  -  2b\kappa^2  - \frac{17b^2\tf^2}{72 d  H_0^2}A_1^2 + \calR_\s \right),\label{e:bifQ0}
\end{equation}
with remainder term $\calR_\s = \tf \calO(|A_1|^2 (|A_1|^2+|\alpha| +|\kappa|)) +\calO(|\kappa|^3)$. In addition, 
\begin{equation}\label{e:solQ0}
\phi(\xi) = \phi_\s(\xi;\mu)\ =\ 2A_1\cos(\xi) + \frac{\tf}{9 H_0}A_1^2 \cos(2\xi) + \tf\calO\left(|A_1|^2 (|A_1| +|\alpha|+|\kappa|)\right).
\end{equation}
\end{theorem}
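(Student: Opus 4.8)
The plan is to solve the three bifurcation equations \eqref{e:bifG} for $j=0,\pm1$ with $Q=0$, using the Lyapunov-Schmidt setup already in place. Since $A_0=0$ is imposed by the zero-mean condition on $\eta$, the $j=0$ equation must be checked for consistency: with $u=2A_1\cos\xi$ and $W\in\setM$ having zero mean, one verifies $\langle L_\mu\phi,\e_0\rangle=0$ (all terms carry a $\partial_\xi$) and $\langle N_{\lb}(\phi,\mu),\e_0\rangle=0$ to the relevant order because the reduced nonlinearity $N_{\lb}$ is a total derivative up to higher-order corrections — so the $j=0$ equation is automatically satisfied and no constraint on $A_1$ arises from it. The real content is then the single (real) equation from $j=\pm 1$, which by the reflection/translation symmetry reduces to a scalar equation in $A_1$. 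First I would compute the linear contribution $\langle L_\mu\phi,\e_1\rangle$: using \eqref{e:u} and $k=k_\crit+\kappa$, expanding $dk^4-dk_\crit^4$ and $bk^2-bk_\crit^2$ in $\kappa$, and recalling the critical relation $\lb_\crit=d_2H_0k_\crit^4$, the $\calO(\kappa)$ term cancels and one is left with $\rmi A_1(\alpha - 2b\kappa^2)$ at leading order, matching the dispersion expansion \eqref{e:zerospec}.

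Next I would determine $W(u,\mu)$ to the order needed. Since $N_{\lb}$ is quadratic at leading order (the $\tf\phi\cdot\phi_\xi$ term), the quadratic part of $W$ solves $\calL_0 W^{(2)} = -P(\text{quadratic part of }N_{\lb}(u,0))$; with $u=2A_1\cos\xi$ the quadratic forcing $\propto \tf A_1^2 \sin(2\xi)$ projects onto the $\e_{\pm2}$ modes, and inverting $\calL_0$ on $\setM$ (where its symbol at wavenumber $2$ is $\rmi(2)^3 dk_\crit^2(\dots)$, explicitly nonzero) yields $W^{(2)} = \frac{\tf}{9H_0}A_1^2\cos(2\xi)+\dots$, which is exactly the second-harmonic term displayed in \eqref{e:solQ0}. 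I would also record that $W$ has no $\e_{\pm1}$ or $\e_0$ component by construction ($W\in\setM$), and that $W$ depends smoothly on $(A_1,\mu)$ by the implicit function theorem (here the smoothness of $N_{\lb}$ for $Q=0$, via \Cref{l:nem}(b), is what makes $G$ smooth). Then the cubic coefficient comes from two sources fed into $\langle N_{\lb}(\phi,\mu),\e_1\rangle$: the cubic term $\frac{\lb_\crit}{H_0^2}\cdot\frac{\tf^2\phi^2}{H_0}\phi_\xi$ evaluated at $\phi=u$, and the quadratic term $-\frac{\lb_\crit}{H_0^2}\tf\phi\,\phi_\xi$ evaluated at the cross term $\phi=u+W^{(2)}$ (i.e.\ $u\,W^{(2)}_\xi + W^{(2)}u_\xi$). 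Projecting the resulting $\sin\xi$- and $\cos\xi$-type trigonometric products onto $\e_1$, using $\lb_\crit/H_0 = dk_\crit^2\cdot\tfrac12 b/\dots$ — more precisely $\lb_\crit/H_0^3 = b^2/(4dH_0^2)$ from \eqref{e:isocritC} — and collecting, should produce the coefficient $-\frac{17b^2\tf^2}{72dH_0^2}A_1^3$ (the $17/72$ being the sum of the self-interaction cubic contribution and the second-harmonic feedback contribution, each a rational multiple of $b^2/(dH_0^2)$).

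Finally I would divide the scalar bifurcation equation by $A_1$ (legitimate since we seek the nontrivial branch; $A_1=0$ is the trivial solution), collect all higher-order terms — products of $W$ with itself, cubic-and-higher parts of $\frac{1}{H_0+\tf\phi}$, and the $\calO(\kappa^3)$ tail from $L_\mu$ — into the remainder $\calR_\s = \tf\calO(|A_1|^2(|A_1|^2+|\alpha|+|\kappa|)) + \calO(|\kappa|^3)$, and invoke the implicit function theorem / standard bifurcation-from-simple-eigenvalue argument to conclude the one-to-one correspondence between small solutions $\phi$ of \eqref{e:normProb2} and small $A_1$ solving \eqref{e:bifQ0}, with $\phi$ recovered as $u+W$ giving \eqref{e:solQ0}. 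The one-to-one claim "up to spatial translations" uses the $S^1$ translation symmetry $\xi\mapsto\xi+\xi_0$ of \eqref{e:normProb2} together with reflection to fix $A_1=A_{-1}\in\R$; this equivariant reduction is standard but should be stated. The main obstacle I anticipate is purely computational: tracking the second-harmonic correction $W^{(2)}$ accurately through the quadratic nonlinearity to extract the exact rational coefficient $17/72$ — sign errors and dropped cross terms are easy here, and one must be careful that $\calL_0$ acts on $\setM$ with the correct (nonzero) symbol at wavenumber $2$, and that the $\tf$-dependence is bookkept so the $\tf\to 0$ (i.e.\ $f\to 0$) limit correctly degenerates the branch to "vertical" as anticipated in \Cref{s:sw}.
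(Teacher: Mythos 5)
Your proposal follows essentially the same route as the paper's proof: Lyapunov--Schmidt with the second-harmonic correction $W^{(2)}=\tfrac{\tf}{9H_0}A_1^2\cos(2\xi)$ obtained by inverting $\calL_0$ at wavenumber $2$, the trivial $j=0$ projection, and the cubic coefficient assembled from the $\phi^2\phi_\xi$ self-interaction and the $\phi\phi_\xi$ cross-term with $W^{(2)}$, exactly as in the paper (which uses $\lb_\crit/H_0^3=b^2/(4dH_0^2)$ and $L_2=9b^2/(4d)$ to land on $17/72$). Two small imprecisions worth fixing: the $j=0$ projection vanishes \emph{exactly}, since $\phi_\xi/(H_0+\tf\phi)=\tf^{-1}\partial_\xi\log(H_0+\tf\phi)$ is a total derivative, not merely ``up to higher order''; and your displayed symbol ``$\rmi(2)^3 dk_\crit^2(\dots)$'' should read $2\rmi\bigl(16dk_\crit^4-4bk_\crit^2+\lb_\crit/H_0\bigr)=2\rmi\cdot 9b^2/(4d)$, though your stated conclusion for $W^{(2)}$ is nevertheless correct.
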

For $\tf\neq0$, since the coefficient of $A_1^3$ in \eqref{e:bifQ0} is negative and the zero state is unstable for $\alpha > 2b\kappa^2$, we always obtain a supercritical pitchfork bifurcation, cf.\ \cref{f:ampalp}. After bifurcation, the leading order amplitude is given by 
\[
|A_1| = \frac{1}{|\tf|}\sqrt{\frac{72 d H_0^2}{17b^2}(\alpha - 2b \kappa^2 )}.
\]
For any fixed $\alpha>0$ the amplitude $A_1=A_1(\kappa)$ forms a semi-ellipse with maximum at $\kappa=0$ and $|A_1|>0$ for $\kappa\in(-\kappa_0,\kappa_0)$, where $\kappa_0=\sqrt{\alpha/(2b)}$ is independent of $\tf$. The amplitude is proportional to $\sqrt{d\alpha}$, while it is inversely proportional to $b\tf$. Hence, existence requires  the backscatter to be nonzero,
and  as $b\to 0$ the amplitude diverges pointwise in $\kappa\in(-\kappa_0,\kappa_0)$, where $\kappa_0$ remains $\eps$-independent, if the drag $\alpha$ scales as the backscatter term $b$. 
The amplitudes also diverge pointwise for equally small backscatter parameters $(d,b)=(\eps\hat d,\eps\hat b)$, as $\eps\to 0$, with $|A_1| = \calO(\eps^{-1/2})$, while the wave number is fixed at $k=k_\crit = (\hat b/(2\hat d))^{1/2}$. 
 See Remark~\ref{r:smallback}. 

For $\tf\to 0$ the amplitude $A_1$ also diverges, as illustrated in \cref{f:ampkap}, but that of $\eta=\tf\phi$ does not. At $\tf=0$, the leading order bifurcation equation is given by 
\[
0 = A_1(\alpha - 2b\kappa^2).
\]
Hence, the value of $A_1$ can be arbitrary at $\alpha = 2b\kappa^2$ (or at $\kappa=\pm\kappa_0$), forming `vertical' bifurcating branches, cf.\ \cref{f:amp}. For these solutions the wave shapes are sinusoidal $\phi_\s(\xi) = 2A_1\cos(\xi)\in N_0$, i.e.\ $w=W(u,\mu)\equiv0$ from the splitting $\phi = u+w$ with \eqref{e:u}. Indeed, revisiting the wave-type solution \eqref{e:special}, $\eta = \tf\phi_\s=0$ for $\tf = 0$, and the amplitude of the velocity $\vv = \partial_\xi\phi_\s(\xi)\k^\perp = -2A_1\sin(\xi)\k^\perp$ is free. 
In \Cref{s:explicit}, we study related explicit flows. See \cref{r:isobifflows} and \cref{r:anisobifflows} for details.
\begin{figure}[t!]
\centering
\subfigure[$\kappa=0$]{\includegraphics[trim = 5cm 8.5cm 5.5cm 8.5cm, clip, height=4cm]{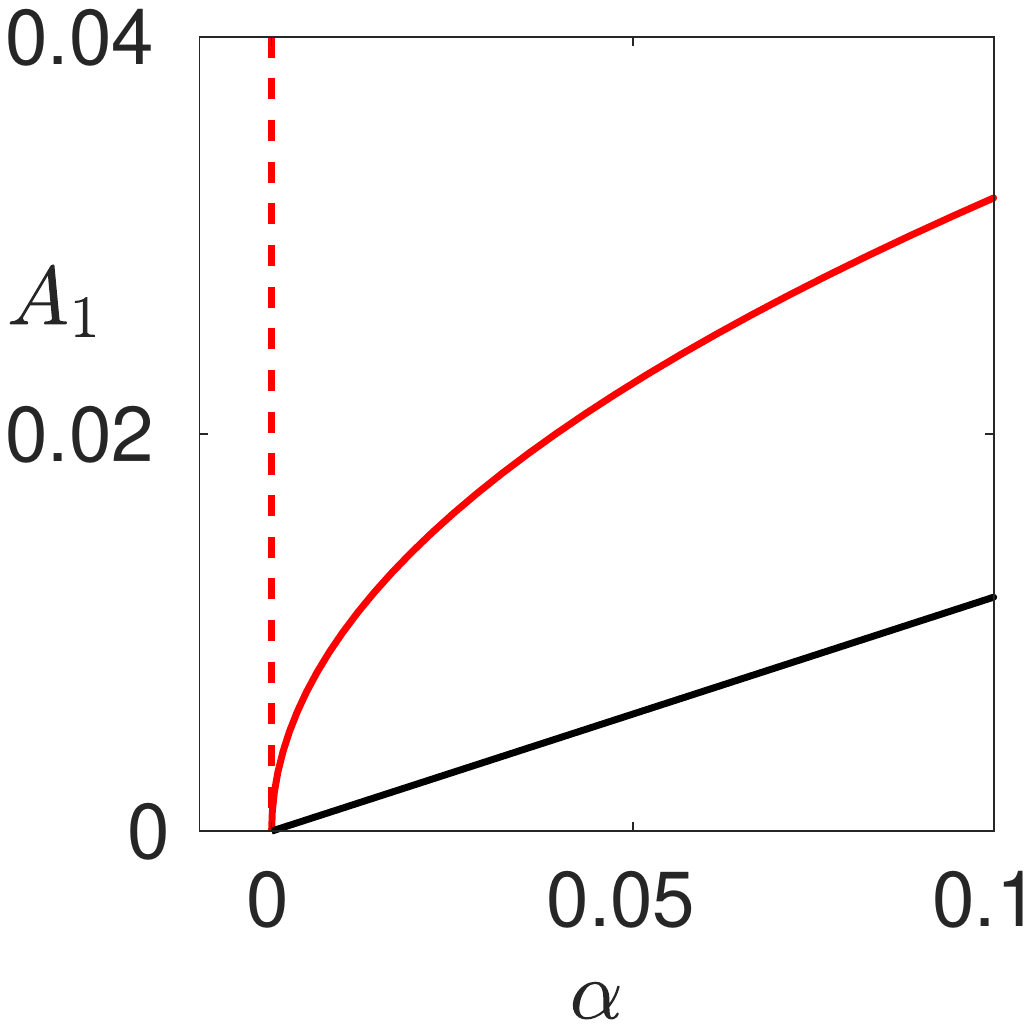}\label{f:ampalp}}
\hfil
\subfigure[$\alpha = 0.1$]{\includegraphics[trim = 5cm 8.5cm 5.5cm 8.5cm, clip, height=4cm]{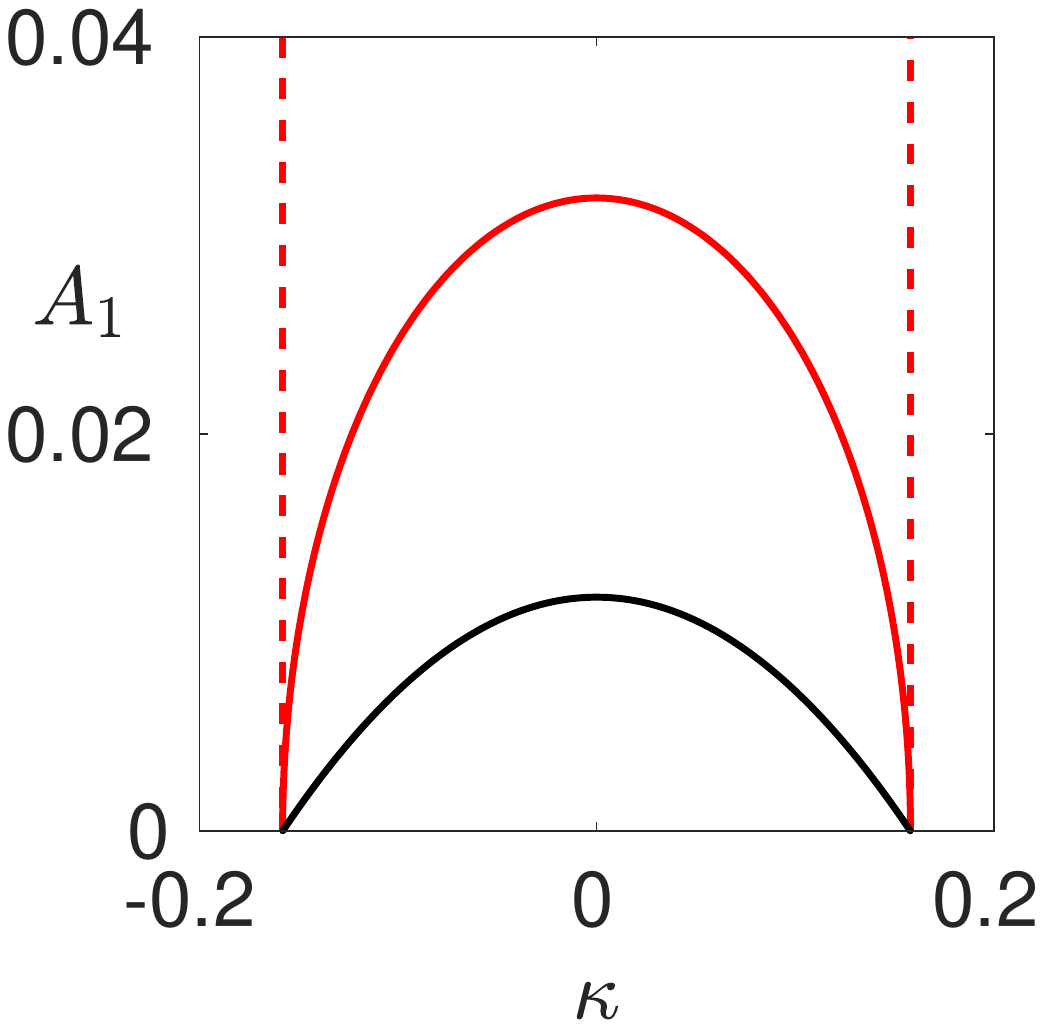}\label{f:ampkap}}
\caption{Leading order amplitudes in the isotropic case $d_1=d_2=1$, $b_1=b_2=2$ for smooth ($Q=0$, red) and non-smooth ($Q=0.5$, black) cases. Other parameters are $g=9.8$, $H_0=0.1$ so that $\lb_\crit = 0.1$. In smooth case, the amplitudes are plotted for $f=10$ (solid) and $f=0$ (dashed); in the non-smooth case, the amplitude is to leading order independent of $f$.}
\label{f:amp}
\end{figure}

\begin{proof}[Proof of \cref{t:bifQ0}]
For $Q=0$, we first solve $W$ from the equation \eqref{e:PG}. We write $PG(\phi,\mu) = P\calL_0\phi + PL_\mu\phi + PN_\lb(\phi,\mu)$. Then we rewrite \eqref{e:PG} as the fixed point equation for $W$ given by
\[
P\calL_0 W = -PL_\mu (u + W) - PN_\lb(u+W,\mu).
\]
Using implicit function theorem, and expanding $W = W(u,\mu)$ in $(A_1,\alpha,\kappa)$ 
near zero, yields
\[
W(u,\mu)\ =\ \sum_{j=\pm 1} \frac{\lb_\crit \tf}{2 H_0^2 L_2} \rme^{2\rmi j \xi} A_j^2 + \widetilde\calR_W
=\ \frac{\tf}{9 H_0} A_1^2 \cos(2\xi) + \widetilde\calR_W,
\]
where 
$L_2:=16dk_\crit^4-4bk_\crit^2+\lb_\crit/H_0 = 9b^2/(4d)>0$, 
$\widetilde\calR_W = \tf\calO\left(|A_1|^2 (|A_1| +|\alpha|+|\kappa|)\right)$.
Hence, the wave shape has the form as in \eqref{e:solQ0}.

Next we consider the projections \eqref{e:bifG}. For $j=0$, the projection is trivial, i.e.\ $\langle G,\e_0 \rangle\equiv0$, 
since the linear part $\langle\partial_\xi^j\phi,\e_0 \rangle = 0$ for $j=1,3,5$, and the integration of the full nonlinear part in \eqref{e:normProb2} is given by $\int_0^{2\pi}(H_0+\tf\phi)^{-1}\phi_\xi\dif\xi=0$.
Next we consider $j=1$. We note that the same result applies to $j=-1$ due to the assumption $A_1=A_{-1}\in\R$. For the linear part $\langle L_\mu\phi,\e_1 \rangle$ we use the integration by parts and the consideration of periodic boundary conditions to show 
$\langle W_\xi,\e_1\rangle=[W\rme^{-\rmi\xi}]_0^{2\pi}/2\pi+\rmi\langle W,\e_1\rangle=0$,
since $\e_1\in\ker(\calL_0)$ and $W\in \setM$. This yields $\langle \phi_\xi,\e_1 \rangle = \langle u_\xi,\e_1 \rangle = \rmi A_1$, and in the same way one can show $\langle\partial^3_\xi\phi,\e_1\rangle=-\rmi A_1$ and $\langle\partial^5_\xi\phi,\e_1\rangle=\rmi A_1$. This results in 
\begin{equation}\label{e:bifGlin}
\langle L_\mu\phi,\e_1 \rangle = - \rmi A_1 \left( \alpha - 2b\kappa^2 - 4dk_\crit \kappa^3 - d\kappa^4\right).
\end{equation}
We consider the nonlinear part that is involving the smooth nonlinear term \eqref{e:NC}. As to the first and second terms, we respectively compute
\begin{align*}
\langle \phi\phi_\xi,\e_1 \rangle &= \frac{\rmi\tf}{18H_0}A_1^3 + \calO(|A_1|^3(|\alpha| + |\kappa|)),\\
\langle \phi^2\phi_\xi,\e_1 \rangle &= \rmi A_1^3 + \calO(|A_1|^5).
\end{align*}
The remainder terms in $\langle N_\lb,\e_1 \rangle$ are of order $\calO(|A_1|^5)$. 
Combining this with the linear and nonlinear terms above, \eqref{e:bifG} becomes
\[
\langle G(\phi,\mu),\e_1 \rangle = -\rmi A_1\left( \alpha - 2b\kappa^2 
- \frac{17\lb_\crit\tf^2}{18H_0^3}A_1^2 + \calR_\s\right),
\]
where $\calR_\s$ as given in the statement of the theorem. 
Substituting the value \eqref{e:isocritC} and dividing out the factor $-\rmi$, it follows the bifurcation equation \eqref{e:bifQ0}. 
\end{proof}

\subsection{Non-smooth case}\label{s:nonsm}
For the case $Q\neq0$ we note that $ G$ is not smooth in $\phi$, but at least once continuously differentiable in $\phi$, so that we can still use the same method to derive the bifurcation equations. 
We first discuss the bifurcation in the following theorem. 
\begin{theorem}[Bifurcation of GE for $Q\neq 0$]\label{t:bifQn0}
Let $Q\neq0$, $\alpha,\kappa\in\R$ sufficiently close to zero, and $\k\in\R^2$ arbitrary with $|\k| =k_\crit+\kappa$. Consider steady plane wave-type solutions to \eqref{e:sw} of the form \eqref{e:special} with 
$2\pi$-periodic mean zero $\phi$ and $\xi=\k\cdot\x$.
These waves are (up to spatial translations) in one-to-one correspondence with solutions $A_1\in\R$ near zero of
\begin{equation}\label{e:bifQn0}
0 = A_1\left( 
\alpha - 2b\kappa^2 - \frac{16 Q \sqrt b}{3\pi H_0 \sqrt{2d}}|A_1| +  \calR_\ns \right),
\end{equation}
with remainder term $\calR_\ns = Q\tf\calO(|A_1|^2)+(Q+\tf)\calO(|A_1|(|A_1|+|\alpha|+|\kappa|))+\calO(|\kappa|^3)$. In addition,
\[
\phi(\xi) = \phi_\ns(\xi;\mu) = 
2A_1\cos(\xi) + \calO(|A_1|(|A_1| + |\alpha| +|\kappa|)).
\]
\end{theorem}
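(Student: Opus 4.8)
The plan is to repeat the Lyapunov--Schmidt reduction used for \cref{t:bifQ0}, the essential difference being that for $Q\neq 0$ the map $G:\Hspace^5_\per\times\R^2\to\Lspace^2$ in \eqref{e:normProb2} is only continuously differentiable (by \cref{l:nem}(b) with $k=1$ applied to $p\mapsto|p|p$, composed with the bounded linear map $\phi\mapsto\partial_\xi\phi$ and multiplied by the smooth factor $(H_0+\tf\phi)^{-1}$), so every step that in the smooth case relied on a second-order Taylor expansion must be replaced by a Lipschitz-type estimate. First I would solve the range equation: $\partial_w[PG](0,0,0)=\calL_0|_\setM:\setM\to\range(\calL_0)$ is an isomorphism by the Fredholm analysis preceding the theorem, so the implicit function theorem gives a unique $C^1$ map $w=W(u,\mu)$ with $W(0,0)=0$. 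Since $W$ cannot be Taylor-expanded, I would read off its size from the fixed-point identity
\[
W=\calL_0^{-1}\big(-P\big[L_\mu(u+W)+N_\lb(u+W,\mu)+N_Q(u+W,\mu)\big]\big),
\]
using that $PL_\mu u=0$ (since $L_\mu u\in\mathrm{span}\{\e_{1},\e_{-1}\}$ is annihilated by $P$), the bounds $N_\lb(\phi,\mu)=\tf\,\calO(\|\phi\|^2)$, $N_Q(\phi,\mu)=\calO(\|\phi\|^2)$, and $\big||p|p-|q|q\big|\le 2(|p|+|q|)|p-q|$ to close the contraction in $W$; this yields $\|W(u,\mu)\|_{\Hspace^5_\per}=\calO\big(|A_1|(|A_1|+|\alpha|+|\kappa|)\big)$. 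I would note that $\calL_0$ is smoothing, $\calL_0\e_j=\rmi j\tfrac{b^2}{4d}(j^2-1)^2\e_j$, so $\calL_0^{-1}$ gains five derivatives; in particular the dominant contribution to $W$, namely $\calL_0^{-1}$ applied to the part of $\tfrac{Qk_\crit}{H_0}|u_\xi|u_\xi$ orthogonal to $\ker(\calL_0)$ (together with the familiar $\tf A_1^2\cos(2\xi)$ piece produced by $N_\lb$), lies in $\Hspace^5_\per$ even though $|u_\xi|u_\xi$ only lies in $\Hspace^s_\per$ for $s<5/2$.

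Next I would substitute $\phi=u+W(u,\mu)$ with $u$ as in \eqref{e:u}, i.e.\ $A_0=0$ (zero mean of $\eta$) and $A_1=A_{-1}\in\R$ (translation symmetry), so $u=2A_1\cos\xi$, into the bifurcation equations \eqref{e:bifG} for $j=0,\pm1$. The $j=0$ equation holds automatically: $\langle\partial_\xi^n\phi,\e_0\rangle=0$ for odd $n$ by periodicity, the linear-drag term integrates a total $\xi$-derivative, and for the genuinely new term $\langle Qk|\phi_\xi|\phi_\xi/(H_0+\tf\phi),\e_0\rangle$ one uses that $u$ is even in $\xi$, hence so is $W$ (by the reflection equivariance $G(\phi(-\cdot),\mu)=-(G(\phi,\mu))(-\cdot)$ and uniqueness in the implicit function theorem), so $|\phi_\xi|\phi_\xi/(H_0+\tf\phi)$ is odd and its mean vanishes. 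It then remains to evaluate the $j=1$ component (the $j=-1$ one being identical since $A_1=A_{-1}$). The linear part is exactly as in \cref{t:bifQ0}: using $\langle\partial_\xi^n W,\e_1\rangle=0$, $\langle\partial_\xi^n u,\e_1\rangle=\rmi^n A_1$ and the identities $4dk_\crit^3-2bk_\crit=0$, $6dk_\crit^2-b=2b$ (consequences of \eqref{e:isocritC}), it equals $\langle L_\mu\phi,\e_1\rangle=-\rmi A_1\big(\alpha-2b\kappa^2-4dk_\crit\kappa^3-d\kappa^4\big)$.

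For the nonlinearity, the term $\langle N_\lb(\phi,\mu),\e_1\rangle$ that dominated the smooth case is now subdominant: since $\langle uu_\xi,\e_1\rangle=\frac{\rmi}{2}\langle u^2,\e_1\rangle=0$ and $W=\calO(|A_1|(|A_1|+|\mu|))$, one obtains $\langle N_\lb,\e_1\rangle=-\rmi A_1\,\tf\,\calO(|A_1|(|A_1|+|\alpha|+|\kappa|))$. The dominant nonlinear term is $\langle N_Q(\phi,\mu),\e_1\rangle$, whose leading part is $\frac{Qk_\crit}{H_0}\langle|u_\xi|u_\xi,\e_1\rangle$ with $u_\xi=-2A_1\sin\xi$. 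The key Fourier integral is
\[
\langle|\sin\xi|\sin\xi,\e_1\rangle=\frac{1}{2\pi}\int_0^{2\pi}|\sin\xi|\sin\xi\,\rme^{-\rmi\xi}\,\dif\xi=-\frac{\rmi}{\pi}\int_0^{\pi}\sin^3\xi\,\dif\xi=-\frac{4\rmi}{3\pi},
\]
so $\langle|u_\xi|u_\xi,\e_1\rangle=\frac{16\rmi}{3\pi}|A_1|A_1$ and hence $\frac{Qk_\crit}{H_0}\langle|u_\xi|u_\xi,\e_1\rangle=\frac{16\rmi Q}{3\pi H_0}\sqrt{\frac{b}{2d}}\,|A_1|A_1$, which is exactly the stated coefficient $\frac{16Q\sqrt b}{3\pi H_0\sqrt{2d}}$. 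The remaining pieces --- the $W_\xi$-correction to $|\phi_\xi|\phi_\xi$, estimated through $\big||p|p-|q|q\big|\le 2(|p|+|q|)|p-q|$ in $\Lspace^2$; the $\calO(\kappa)$-difference between $k$ and $k_\crit$; and the factor $\tf\phi/(H_0+\tf\phi)$ in $N_Q$ --- are $Q\,\calO(|A_1|^2(|A_1|+|\kappa|))+Q\tf\,\calO(|A_1|^2)$. Collecting all contributions gives $\langle G(\phi,\mu),\e_1\rangle=-\rmi A_1\big(\alpha-2b\kappa^2-\frac{16Q\sqrt b}{3\pi H_0\sqrt{2d}}|A_1|+\calR_\ns\big)$ with $\calR_\ns$ as stated; dividing by $-\rmi$ yields \eqref{e:bifQn0}, and $\phi=u+W$ together with the bound on $W$ gives the claimed expansion of $\phi_\ns$. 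The one-to-one correspondence up to spatial translations is, as usual, part of the Lyapunov--Schmidt reduction.

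The main obstacle throughout is the lack of $C^2$ regularity of $N_Q$ at $\vv=0$. This forces two adaptations: one cannot Taylor-expand $W$ (nor the reduced map), so $W$ and every remainder must be controlled by $C^1$-compatible estimates, with particular care near the zeros of $u_\xi$, where $|u_\xi|$ is small; and the leading reduced nonlinearity is $|A_1|A_1$ rather than $A_1^2$, which is precisely the mildly non-smooth mechanism of \cite{SR2020} and is responsible for the \emph{linear} amplitude law $|A_1|\sim(\alpha-2b\kappa^2)$ --- hence, for $\tf\neq0$, a supercritical bifurcation with a corner in the amplitude branch rather than the smooth pitchfork of \cref{t:bifQ0}. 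A final, purely bookkeeping point is to keep the $Q$-, $\tf$- and $Q\tf$-dependence of $\calR_\ns$ separate, so that the degenerations $Q\to 0$ (recovering \cref{t:bifQ0}) and $\tf\to 0$ (cf.\ \cref{r:largef}) remain transparent.
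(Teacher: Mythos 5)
Your proposal is correct and follows essentially the same Lyapunov--Schmidt route as the paper's proof: the same fixed-point estimate $\|W\|_{H^5}=\calO(|A_1|(|A_1|+|\alpha|+|\kappa|))$, the same evenness argument for the $j=0$ projection, and the same leading non-smooth coefficient $\tfrac{16\rmi}{3\pi}|A_1|A_1$ for $\langle N_Q,\e_1\rangle$. The only differences are cosmetic: you make the key Fourier integral $\langle|\sin\xi|\sin\xi,\e_1\rangle=-\tfrac{4\rmi}{3\pi}$ and the eigenvalues $\calL_0\e_j=\rmi j\tfrac{b^2}{4d}(j^2-1)^2\e_j$ explicit, you derive evenness of $W$ from the reflection equivariance $G(\phi(-\cdot),\mu)=-(G(\phi,\mu))(-\cdot)$ instead of the Fourier-series argument in the paper's Appendix B, and you bound the $W$-correction to $N_Q$ via $\bigl||p|p-|q|q\bigr|\le 2(|p|+|q|)|p-q|$ rather than the paper's direct reverse-triangle-inequality rewriting of $|u_\xi+W_\xi|$ --- all equivalent in effect.
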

Since the coefficient of $A_1|A_1|$ is negative and the zero state is unstable for $\alpha>2b\kappa^2$ (at leading order, cf.\ \eqref{e:zerospec}), the bifurcation is always supercritical. It is a degenerate pitchfork bifurcation as in \cite{SR2020}, where the bifurcating branch behaves linearly in $\alpha$ near zero,  rather than the square root form for the smooth case, cf.\ \cref{f:ampalp}. In contrast to the smooth case, the coefficient of $A_1|A_1|$ in \eqref{e:bifQn0} is independent of Coriolis parameter $\tf$. This implies that the balance between linear bottom drag and Coriolis force is invisible at leading order, and thus the `vertical branch' does not occur for any $\tf$. After bifurcation, the leading order amplitude is given by
\[
|A_1| = \frac{3\pi H_0 \sqrt{2d}}{16Q \sqrt b} \left( \alpha - 2b\kappa^2 \right). 
\]
We readily see that it behaves parabolic in $\kappa$, cf.\ \cref{f:ampkap}; $Q$ enters through only a prefactor so that the value of amplitude is monotonically decreasing in $Q$. In contrast to the smooth case, for small backscatter parameter $(d,b)=(\eps\hat d, \eps\hat b)$ as $\eps\to 0$, the prefactor is independent of $\eps$ so that the `vertical branch' does not occur. 
We note, with the rescaling $\kappa = \eps k_\crit K$, that the expression of $|A_1|$ is the same as that of $|R|$ from \eqref{e:modamp}, where $\eps$ and $K$ are defined in \Cref{s:modulation}. 

\begin{remark}
For $Q\rightarrow0$ in \eqref{e:bifQn0} the quadratic term in $|A_1|$ goes to zero and the remainder term limits to $\calR_\ns = \tf\calO(|A_1|(|A_1|+|\alpha|+|\kappa|))+\calO(|\kappa|^3)$. 
Compared with the bifurcation equation \eqref{e:bifQ0} at $Q=0$ there may be additional terms of order $\tf\calO(|A_1|(|\alpha|+|\kappa|))$. 
We expect that it can be shown with a refined analysis that such terms do not occur, which would prove that the bifurcation equation \eqref{e:bifQn0} is continuous with respect to $Q$ at $Q=0$. 
\end{remark}

\begin{remark}[Stability of steady solutions]
For $f=0$ we consider the evolution equation \eqref{e:scalef0}, which provides temporal dynamics and therefore we are able to determine stability aspects of bifurcating steady solutions. Due to above proof of supercriticality, these are stable under perturbations of the same form \eqref{e:special} and with the same wave vector. The Eckhaus region discussed in \Cref{s:modulation} gives additional stable bifurcating solutions whose wave vectors are close enough to the critical one. For $f\neq0$ one has to transform the system of equations \eqref{e:rescale} into an evolution equation first, which yields an integro-differential equation. However, its stability analysis is beyond the scope of this paper.
\end{remark}

In the remainder of the present subsection, we give a proof of \cref{t:bifQn0}. In contrast to the smooth case, we need to estimate the function $W$ of the decomposition $\phi = u+W(u,\mu)$ instead of expanding it, since $G$ does not have enough smoothness in $\phi$. Since $W$ inherits the differentiability of $G$,
we can approximate near the zero state via
\[
W(u,\mu)=\partial_u W(0,0)u+\partial_\alpha W(0,0)\alpha+\partial_\kappa W(0,0)\kappa+o(\|u\|_{H^5}+|\mu|).
\]
Since $W$ solves \eqref{e:PG}, it follows from differentiation that the first derivatives of $W$ are zero, which means $W(u,\mu)=o(\|u\|_{H^5}+|\mu|)$. 
We can refine this estimate as follows: since $W(u,\mu)= \calO((|\mu| + \|u\|_{H^5})\|u\|_{H^5})$, which is shown in \Cref{s:proofWestX}, and $\|u\|_{\Hspace^5(0,2\pi)} = \calO(|A_1|)$ from \eqref{e:u}, we obtain the estimate 
\begin{equation}\label{e:estW}
W(A_1,\alpha,\kappa) = \calO(|A_1|(|A_1|+|\alpha|+|\kappa|)).
\end{equation}

\begin{remark}\label{r:sobolev}
Since we consider $u,W\in\Hspace^5(0,2\pi)$, it follows by the Sobolev embedding theorem that $u,W\in C^{4,1/2}([0,2\pi])$. In addition, there is a constant $a>0$ so that $\|v\|_{C^{4,1/2}}\leq a\|v\|_{H^5}$ for all $v\in H^5$.
\end{remark}

\begin{proof}[Proof of \cref{t:bifQn0}]
With the given approximations we can estimate the smooth and non-smooth nonlinear parts in the projection \eqref{e:bifG}. We first consider $j=0$. As noted we make the splitting $\phi = u+w$, where $\phi\in H_\per^5$, and $u$ from \eqref{e:u} is even in $\xi$. As proved in \Cref{s:bif}, by the implicit function theorem there exists a unique function $W$ with values in $M_0$ such that $w=W(u,\mu)$ solves the projected problem \eqref{e:PG} near zero. We prove in \Cref{s:proofproj0} that in the space of even function in $H_\per^5$, i.e.\ for even $\phi\in H_\per^5$, such a function $W$ is even in $\xi$. Combining these we conclude that the solution $\phi = u+W$ to the steady state equation \eqref{e:normProb2} is always even. The operators $L_\mu, N_\lb, N_Q$ from \eqref{e:op} map even functions to odd functions, thus for any even and $2\pi$-periodic function $\phi$ the functions $L_\mu\phi$, $N_\lb(\phi)$, $N_Q(\phi)$ have zero mean on any interval of length $2\pi$. It follows that for such $\phi$ the right-hand side of \eqref{e:bifG} vanishes for $j=0$, i.e.,
\[
\langle G(\phi,\mu),\e_0\rangle \equiv 0.
\]

We next consider $j=1$; for $j=-1$ we get the same result due to the choice $A_1=A_{-1}\in\R$. 
Defining $\calR_W:=\calO(|A_1|(|A_1|+|\alpha|+|\kappa|))$ and using \cref{r:sobolev}, 
we approximate the absolute value 
\[
\bigl||u_\xi+W_\xi| - |u_\xi|\bigr| \leq |u_\xi + W_\xi - u_\xi| \leq a\|W\|_{H^5} = \calR_W,
\]
which in particular means $|u_\xi+W_\xi| = |u_\xi| + \calR_W$. 
This gives
\[
|u_\xi+W_\xi|(u_\xi+W_\xi) = |u_\xi|u_\xi + |u_\xi|W_\xi + (u_\xi+W_\xi)\calR_W.
\]
Thus, the leading order part in $\langle N_Q,\e_1 \rangle$ can be approximated by
\[
\langle |\phi_\xi|\phi_\xi, \e_1 \rangle = \langle |u_\xi|u_\xi ,\e_1 \rangle + \langle |u_\xi|W_\xi + (u_\xi+W_\xi)\calR_W, \e_1\rangle 
= \frac{16\rmi}{3\pi} |A_1|A_1 + A_1\calR_W, 
\]
where we used the form of $u_\xi$ given in \eqref{e:u}. 
Note that the coefficient $\frac{16}{3\pi}\rmi$ results from the nature of the absolute value and reflects the non-smooth character, similar to \cite{SR2020}.
The higher order parts of $\langle N_Q,\e_1 \rangle$ are 
\[
\langle |\phi_\xi|\phi_\xi(\tf\phi-\frac{\tf^2\phi^2}{H_0} + \tf^3\calO(|\phi|^3)),\e_1 \rangle = \tf A_1\calO(|A_1|^2).
\]
The smooth nonlinear term $\langle N_\lb,\e_1 \rangle$ can be estimated by
\[
\langle (\tf\phi-\frac{\tf^2\phi^2}{H_0} + \tf^3\calO(|\phi|^3))\phi_\xi,\e_1 \rangle = \tf A_1\calR_W.
\]
Analogous to the proof of \cref{t:bifQ0} we have $\langle \phi_\xi,\e_1 \rangle = \langle\partial^5_\xi\phi,\e_1\rangle =\rmi A_1$ and $\langle\partial^3_\xi\phi,\e_1\rangle=-\rmi A_1$. 
Combining these estimates with the linear part \eqref{e:bifGlin}, \eqref{e:bifG} becomes
\[
\langle G(\phi,\mu),\e_1 \rangle = 
-\rmi A_1\left( 
\alpha  - 2b\kappa^2 - \frac{16Q k_\crit}{3\pi H_0}|A_1| + \calR_\ns \right),
\]
where $\calR_\ns = (Q+\tf)\calR_W + Q\tf\calO(|A_1|^2) +\calO(|\kappa|^3)$.
Substituting the value \eqref{e:isocritk} 
and dividing out the factor $-\rmi$, it follows the bifurcation equation \eqref{e:bifQn0}.
\end{proof}

\section{Bifurcation of nonlinear inertia-gravity waves}\label{s:igw}

In this section we consider one-dimen\-si\-onal bifurcations due to the purely imaginary spectrum in the case of marginal stability for isotropic backscatter. 
We recall from \Cref{s:spec} that in the isotropic case pattern forming stationary and oscillatory modes destabilize simultaneously. For anisotropic perturbations from isotropy the oscillatory modes destabilize slightly after the steady ones, and the bifurcations of IGW studied next are perturbed as shown numerically in \Cref{s:num}.
We focus entirely on the non-smooth case $Q\neq 0$, which gives non-standard bifurcation equations and is more subtle than that in \Cref{s:nonsm}.

To set up Lyapunov-Schmidt reduction, we cast \eqref{e:sw} as prepared by \eqref{e:HopfOp}, but with deviations $\kappa$ from the critical wave number $k_\crit$ and $s$ from the critical wave speed $-\omega_\crit$. That is, we choose $\k_\crit=(\kcx,\kcy)^\intercal$ satisfying \eqref{e:isocritk}, 
change variables to $\zz,\tilde \zz\in \R$ via  
\begin{equation}\label{e:xphase}
\x=\big((1+\kappa)^{-1}\zz+st\big)/|\k_\crit|^2 \;\k_\crit + \tilde \zz \k_\crit^\perp,
\end{equation}
and seek solutions that are independent of $\tilde \zz$.  
Then, with $U=(\vv,\eta)^\intercal$ and parameters $\mu=(\alpha,s,\kappa)$, the existence of such travelling waves of \eqref{e:sw} near the critical modes of \eqref{e:HopfOp} can be cast in the form
\begin{equation}\label{e:igwprob}
G(U;\mu) = \calL_\mu U - B_Q(U) - B(U;\mu) - N(U;\mu) =0,
\end{equation} 
with suitable linear $\calL_\mu$ and nonlinear $B_Q, B, N$ as defined in the following. 
The term $B(U;\mu)$ will contain all smooth quadratic terms with respect to $U$, $N(U;\mu) = \calO(|U|^3)$ the higher order terms, and 
\[
B_Q(U) = \begin{pmatrix}\frac{Q}{H_0}|\vv|\vv\\0\end{pmatrix}
\]
is the quadratic non-smooth term. 
The linear part $\calL_\mu$ at $\alpha=s=0$ is given by $\calL_\crit$ from \eqref{e:HopfOp} with $\partial_\zeta$ replaced by $(1+\kappa)\partial_\zz$, i.e.\ $\calL_\crit$ depends on $\kappa$. 
We split $\calL_\mu$ into $\alpha$-, $s$- and $\kappa$-dependent parts
\begin{equation}
\calL_\mu := \calL_\crit(\kappa) + 
 \alpha\,\mathrm{diag}(1,1,0) - (1+\kappa )s\,\mathrm{diag}(1,1,1) \partial_\zz.  \label{e:calLmu}
\end{equation} 
The linear operator $\calL_\crit(\kappa)$ is then given by
\begin{align}
\calL_\crit(\kappa) &:= \calL_\crit+ \kappa\calK,\label{e:calLkappa}\\[2mm]
\calK &:=
\begin{pmatrix}
\omega_\crit\partial_\zz- dk_\crit^4 \calS & 0 & -\kcx g\partial_\zz\\
0& \omega_\crit\partial_\zz- dk_\crit^4 \calS & -\kcy g\partial_\zz\\
-\kcx H_0\partial_\zz & -\kcy H_0 \partial_\zz & \omega_\crit\partial_\zz
\end{pmatrix},\nonumber\\[2mm]
\calS &:=(2+\kappa)\left(\kappa(2+\kappa) \partial_\zz^2 + 2(\partial_\zz^2+1)\right)\partial_\zz^2,\nonumber
\end{align}
where $\calL_\crit$ is the operator \eqref{e:HopfOp} with $\partial_\zeta$ replaced by $\partial_\zz$; note that $\calL_\crit(0)=\calL_\crit$. 
We consider $G:X\times \R^3\to Y$ with $X:=\Hspace_\per^4([0,2\pi])\times \Hspace_\per^4([0,2\pi])\times \Hspace_\per^1([0,2\pi])$ and $Y:=(\Lspace^2([0,2\pi]))^3$. Due to \cref{l:nem} $G$ is then continuously differentiable.
The following lemma admits to apply Lyapunov-Schmidt reduction in order to solve \eqref{e:igwprob} near $(U,\mu)=(0,0)$. 

\begin{lemma}\label{l:calLmuFredholm}
For $\mu=0$ the linear operator $\calL_\mu=\calL_\crit:X\to Y$, defined in \eqref{e:calLmu} and \eqref{e:HopfOp}, is a Fredholm operator with index zero. Its generalized kernel and that of its adjoint are three-dimensional.
\end{lemma}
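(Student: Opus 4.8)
The plan is to exploit that, for $\mu=0$, the operator $\calL_\crit$ from \eqref{e:HopfOp} is \emph{constant-coefficient} on the circle $[0,2\pi]$, hence block-diagonalised by the Fourier basis $\{\rme^{\rmi n\zz}\}_{n\in\Z}$: writing $U=\sum_{n\in\Z}U_n\rme^{\rmi n\zz}$, $U_n\in\C^3$, one has $\calL_\crit U=\sum_n(\widehat\calL_\crit(n)U_n)\rme^{\rmi n\zz}$ with symbol matrices $\widehat\calL_\crit(n)\in\C^{3\times3}$ obtained from \eqref{e:HopfOp} via $\partial_\zz\mapsto\rmi n$, $(\partial_\zz^2+1)^2\mapsto(1-n^2)^2$. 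Identifying $X$ with the weighted space $\{(U_n):\sum_n\|D(n)U_n\|^2<\infty\}$, where $D(n):=\mathrm{diag}(\langle n\rangle^4,\langle n\rangle^4,\langle n\rangle)$, and $Y$ with $\ell^2(\C^3)$, the operator becomes the bounded block-diagonal multiplier $(V_n)\mapsto(\widehat\calL_\crit(n)D(n)^{-1}V_n)$, and all three assertions reduce to $n$-uniform linear algebra on the $\widehat\calL_\crit(n)$.

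For the Fredholm property the key input is the determinant. Expanding the $3\times3$ determinant and using $\kcx^2+\kcy^2=k_\crit^2$ (so the mixed terms cancel) gives
\[
\det\widehat\calL_\crit(n)=c\bigl(c^2+f^2\bigr)+a\,gH_0k_\crit^2n^2,\qquad a:=\rmi n\omega_\crit-dk_\crit^4(1-n^2)^2,\quad c:=\rmi n\omega_\crit,
\]
a polynomial in $n$ of degree nine with leading coefficient $\rmi\omega_\crit d^2k_\crit^8\neq0$. Hence $\widehat\calL_\crit(n)$ is invertible for all but finitely many $n$ and $|\det\widehat\calL_\crit(n)|\gtrsim|n|^9$ for $|n|$ large; together with elementary size bounds on the cofactors of $\widehat\calL_\crit(n)$, this shows that the entries of $\widehat\calL_\crit(n)^{-1}$ decay at exactly the rates making the multiplier with symbol $\widehat\calL_\crit(n)^{-1}$, cut off to $|n|\ge N_0$, a bounded operator $S:Y\to X$. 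Since $S\calL_\crit-\Id$ and $\calL_\crit S-\Id$ are then the finite-rank projections onto the modes $\{|n|<N_0\}$, $\calL_\crit$ is Fredholm; and the index vanishes for free because each symbol $\widehat\calL_\crit(n)$ is a \emph{square} matrix, so $\dim\ker\widehat\calL_\crit(n)=\dim\mathrm{coker}\,\widehat\calL_\crit(n)$ for every $n$.

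To pin down the (generalized) kernel I would compare $\widehat\calL_\crit(n)$ with the dispersion relation of \Cref{s:spec}: by construction of the comoving frame, $\widehat\calL_\crit(n)=\widehat\calL_{\mathbf{c}}(n\k_\crit)$ with $\mathbf{c}=\omega_\crit k_\crit^{-2}\k_\crit$, so $\widehat\calL_\crit(n)$ is singular iff $-\rmi n\omega_\crit$ is a root of $d(\cdot,n\k_\crit)$ from \eqref{e:poly}. For $|n|\ge2$, $|n\k_\crit|=|n|k_\crit\notin\{0,k_\crit\}$, so by the isotropic critical-spectrum analysis of \Cref{s:Turing} every root of $d(\cdot,n\k_\crit)$ at $\lb=\lb_\crit$ has strictly negative real part and hence differs from the purely imaginary $-\rmi n\omega_\crit$; thus $\widehat\calL_\crit(n)$ is invertible there. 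For $n=0$, $\widehat\calL_\crit(0)$ is block-diagonal with an invertible $2\times2$ Coriolis block (determinant $d^2k_\crit^8+f^2$) and a single zero entry; for $n=\pm1$ the vanishing of $(1-n^2)$ makes the upper-left $2\times2$ minor equal $(\rmi\omega_\crit)^2+f^2=-gH_0k_\crit^2\neq0$, so $\mathrm{rank}\,\widehat\calL_\crit(\pm1)=2$. In all three cases the kernel is one-dimensional — spanned by the $\ee_0,\ee_{\pm1}$ already exhibited after \eqref{e:HopfOp} — so $\dim\ker\calL_\crit=3$. Moreover, the characteristic polynomial of $\widehat\calL_\crit(\pm1)$ agrees, up to the comoving shift, with $d(\cdot,\k_\crit)$, whose three roots $0,\pm\rmi\omega_\crit$ are pairwise distinct ($\omega_\crit>0$), so $0$ is a simple eigenvalue of $\widehat\calL_\crit(\pm1)$, and likewise of $\widehat\calL_\crit(0)$; hence no Jordan block occurs and the generalized kernel of $\calL_\crit$ coincides with $\ker\calL_\crit$, which is three-dimensional. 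Finally, the formal $L^2$-adjoint acts by the symbols $\widehat\calL_\crit(n)^*$, which have the same rank and the same algebraic multiplicity of $0$ as $\widehat\calL_\crit(n)$ (or, more briefly, $\dim\ker\calL_\crit^*=\dim\mathrm{coker}\,\calL_\crit=\dim\ker\calL_\crit$ since the index is $0$), so its generalized kernel is three-dimensional as well.

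The hard part, as I see it, is the Fredholm property itself: the three components of $X$ carry different Sobolev orders, and the coupling terms $-\kcx g\partial_\zz$, $-\kcy g\partial_\zz$ are of order one into $L^2$, so $\calL_\crit$ is not elliptic and is not a compact perturbation of its diagonal part — the parametrix must be built at the level of the full matrix symbol with the anisotropic weights $D(n)$, and one has to check the cofactor decay rates explicitly rather than invoking standard elliptic theory. Everything else is finite-dimensional linear algebra on the matrices $\widehat\calL_\crit(n)$, with the identification of the singular set $\{-1,0,1\}$ and the absence of Jordan blocks both resting on the spectral information already established in \Cref{s:Turing}.
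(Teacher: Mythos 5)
Your proof is correct and takes a genuinely different route from the paper's. The paper sidesteps direct symbol analysis by factoring a reference operator $\check\calL_\crit = RDS$, where $D$ is block-diagonal with scalar Fredholm entries and the lower-triangular $S$ and upper-triangular $R$ are boundedly invertible (this factorization essentially performs Gaussian elimination on the first-order coupling terms $-\kcx g\partial_\zz$, $-\kcy g\partial_\zz$ and $-\kcx H_0\partial_\zz$, $-\kcy H_0\partial_\zz$, which is why the anisotropic Sobolev orders are compatible); the remaining difference $\check\calL_\crit - \calL_\crit$ is of lower order (first-order into $H^3\times H^3\times\{0\}$) and hence compact, and the Fredholm property follows. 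Your approach instead builds the parametrix on the Fourier side from the inverse symbol $\widehat\calL_\crit(n)^{-1}$ for $|n|\ge N_0$, which requires explicit degree counting on the determinant ($\deg 9$) and all nine cofactors against the anisotropic weights $\mathrm{diag}(\langle n\rangle^4,\langle n\rangle^4,\langle n\rangle)$ — a step you leave implicit but which does check out (the row-$3$ cofactors have degrees $5,5,8$ and rows $1,2$ have degrees $5,2,5$, so the weighted inverse symbol is uniformly bounded). Both routes are legitimate; the paper's avoids the cofactor bookkeeping entirely, while yours is more self-contained and also produces the index-zero conclusion directly from the square shape of the symbol blocks rather than by counting the adjoint kernel. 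The identification $\widehat\calL_\crit(n)=\widehat\calL(n\k_\crit)+\rmi n\omega_\crit\Id$ that you use to locate the singular modes $n\in\{-1,0,1\}$ is exactly the dispersion-shift \eqref{e:dispcomov}, and your simplicity argument for the zero eigenvalue (distinct roots $0,\pm\rmi\omega_\crit$ of $d(\cdot,\k_\crit)$ before the shift) is the same spectral fact the paper expresses as the non-vanishing linear coefficient $-2\omega_\crit^2$. One small slip: your determinant formula should read $\det\widehat\calL_\crit(n)=c(a^2+f^2)+a\,gH_0k_\crit^2 n^2$, not $c(c^2+f^2)+\dots$; the degree-$9$ and leading-coefficient claims you draw from it are consistent only with the corrected version, so this is evidently a typo rather than an error in the computation.
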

\begin{proof}
We will show that for $\mu=0$ the operator $\calL_\mu=\calL_\crit(0)=\calL_\crit$ is a compact perturbation of a Fredholm operator. 
To see this, let us define the following diagonal operator and matrix Nemitsky operators. 
\begin{align*}
D&: X\to Y, &D&:=\mathrm{diag}\big( \widetilde{D}(\kcx), \widetilde{D}(\kcy),\omega_\crit\partial_\zz\big), &\\
&&&\widetilde{D}(k):=\big(\omega_\crit-gH_0k^2/\omega_\crit\big)\partial_\zz- d k_\crit^4(\partial_\zz^2+1)^2&\\
R&:Y\to Y,&R&:= \begin{pmatrix}1& 0&  -\kcx g /\omega_\crit\\0&1& -\kcy g /\omega_\crit\\0&0&1\end{pmatrix}&\\
S&:X\to X,&S&:=\begin{pmatrix}1& 0& 0\\0&1&0\\ -\kcx H_0 /\omega_\crit&-\kcy H_0 /\omega_\crit&1\end{pmatrix}&
\end{align*}
Each of the diagonal elements in $D$ is a Fredholm operator, which implies $D$ is. The matrix $R$ is invertible and since on $Y$ all components come from the same space, also $R$ is boundedly invertible. For the operator $S:X\to X$ the same argument applies to the upper left block since the first two components of $X$ are the same. The entire operator $S$ is well defined since the last row maps into $H^1([0,2\pi])$ due to the inclusion $H^4([0,2\pi])\subset H^1([0,2\pi])$. Using the inverse matrix, which is of the same form, this also implies that $S$ is boundedly invertible.

The product $\check\calL_\crit:=RDS$ of boundedly invertible and Fredholm operators is a Fredholm operator and takes the form
\[
\check\calL_\crit = 
\begin{pmatrix}
\omega_\crit\partial_\zz- d k_\crit^4(\partial_\zz^2+1)^2 & \frac{g H_0 \kcx\kcy}{\omega_\crit}\partial_\zz & -\kcx g \partial_\zz\\
 \frac{g H_0 \kcx\kcy}{\omega_\crit}\partial_\zz & \omega_\crit\partial_\zz- d k_\crit^4(\partial_\zz^2+1)^2 & -\kcy g \partial_\zz\\
-\kcx H_0 \partial_\zz & -\kcy H_0 \partial_\zz  & \omega_\crit\partial_\zz
\end{pmatrix}.
\]
The difference to $\calL_\crit$ reads
\[
\check\calL_\crit-\calL_\crit  = 
\begin{pmatrix}
0 & \frac{g H_0 \kcx\kcy}{\omega_\crit}\partial_\zz-f& 0\\
f+ \frac{g H_0 \kcx\kcy}{\omega_\crit}\partial_\zz & 0 & 0\\
0 & 0  &0
\end{pmatrix},
\]
which is a compact perturbation of $\calL_\crit$ since the range of $\check\calL_\crit-\calL_\crit$ lies in the compact subset $H^3([0,2\pi])\times H^3([0,2\pi])\times\{0\}$ of $Y$. 
Hence, the unperturbed operator $\calL_\crit$ shares the Fredholm property of $\check\calL_\crit$. The index is zero since kernel and kernel of adjoint have the same dimension.

The kernel of the adjoint operator $\calL_\crit^*$ to \eqref{e:HopfOp} is spanned by $\ee_0^*=\ee_0$ and $\ee_1^*,\ee_{-1}^*$ of the form $\ee_j^*:=\E_j^*\rme^{\rmi j \zz},\ \E_j^*\in\C^3$. We recall the eigenvectors $\ee_j=\E_j \rme^{\rmi j \zz}$ of $\calL_\crit$ and choose $\E_j^*$ as follows,
\begin{equation}\label{e:eigenvector2}
\E_j = \begin{pmatrix}
\omega_\crit \kcx + j\rmi f\kcy \\
\omega_\crit \kcy - j\rmi f\kcx \\
k_\crit^2 H_0
\end{pmatrix},
\quad 
\E_j^* =   \frac{1}{m}
\begin{pmatrix}
\omega_\crit \kcx + j\rmi f\kcy \\
\omega_\crit \kcy - j\rmi f\kcx \\
k_\crit^2 g
\end{pmatrix},
\end{equation}
with $m=2 \omega_\crit^2 k_\crit^2$ so that $\langle \E_j , \E_j^*\rangle = 1$ for $j=0,-1,1$. 

Consider the dispersion relation \eqref{e:dispcomov} expressed via \eqref{e:poly}. At the critical bottom drag and wave number the constant term with respect to the eigenvalue parameter $\lambda$ vanishes, 
and the linear coefficient is $-2\omega_\crit^2$. 
This is non-zero so that there is no double root and thus no generalized eigenfunction for $\calL_\crit$ and its adjoint. 
\end{proof}

Due to \cref{l:calLmuFredholm} we can split domain and range analogous to \Cref{s:bif} as $X=\ker(\calL_\crit)\oplus \setM$ where $\setM=\ker(\calL_\crit)^\perp$, 
and $Y= \ker(\calL_\crit^*)\oplus  \range(\calL_\crit)$, where $\range(\calL_\crit)^\perp = \ker(\calL_\crit^*)$ with respect to the inner product $\langle U,V \rangle_Y=\frac{1}{2\pi}\int_0^{2\pi}\langle U(\zz), V(\zz)\rangle_{\C^3}\dif \zz$, and the kernel of the adjoint operator $\ker(\calL_\crit^*)=\mathrm{span}\{\ee_j^*: j=0,\pm1\}$ as in the proof of \cref{l:calLmuFredholm}. 
With the inner product for $U,V\in X$ given by $\langle U,V\rangle_X= \langle  U_1,V_1\rangle_{H^4} + \langle  U_2, V_2\rangle_{H^4} + \langle  U_3, V_3\rangle_{H^1}$, we split $U = u+w$ with $u\in \ker (\calL_\crit)$ and $w\in \setM$ by the projection $\tilde P: X\to \ker(\calL_\crit)$, which can be written as
\[
u=\tilde P U:= \sum_{j=-1}^1\langle U, \ee_j^* \rangle_X\ee_j.
\]
The projection $P:Y\to \range(\calL_\crit)$ along $\ker(\calL_\crit^*)$ can be written as 
\[
P:= \Id - \sum_{j=-1}^1 \langle \cdot, \ee_j^*\rangle_Y \ee_j,
\]
and this gives the reduced problem of \eqref{e:igwprob}
\begin{equation}\label{e:PG2}
P G(u+ w;\mu)=0.
\end{equation}
As in \Cref{s:bif}, 
by the implicit function theorem, there is an open neighbourhood  of $(0,0,0)$ in $\ker(\calL_\crit)\times \setM\times \R^3$ of the form $N_0\times M_0\times (-\eps,\eps)^3$ and a unique function $W:N_0 \times(-\varepsilon,\varepsilon)^3\to  M_0$ such that $W(0,0)=0$ and $w=W(u;\mu)$ solves \eqref{e:PG2} for all $(u,\mu)\in N_0\times(-\varepsilon,\varepsilon)^3$. 
We will make use of the following estimate, which holds for possibly smaller $\eps$ and $N_0$: 
\begin{equation}\label{e:WestX}
W(u;\mu)= \calO((|\mu| + \|u\|_X)\|u\|_X).
\end{equation}
Since this estimate is in some sense standard and can be shown in the same way as in \Cref{s:nonsm}, we give the proof in \Cref{s:proofWestX} again.

In order to solve \eqref{e:igwprob} 
it remains to determine $(u,\mu)\in N_0\times(-\varepsilon,\varepsilon)^3$ such that 
\[
(\mathrm{Id}- P) G(u+W(u;\mu);\mu)=0,
\]
which is equivalent to the bifurcation equations 
\begin{equation}\label{e:bifigw} 
\langle G(u+W(u;\mu);\mu), \ee_j^*\rangle_Y=0,\quad j=0,\pm1.
\end{equation}
The next theorem gives the bifurcation near $\mu=(\alpha,s,\kappa)=0$. We recall $\lb = \lb_\crit -\alpha H_0$ with critical bottom drag $\lb_\crit$ as in \eqref{e:isocritC}, the critical wave length $k_\crit$ as in \eqref{e:isocritk} and the critical wave speed $\omega_\crit$ as in \eqref{e:omc}. 

\begin{theorem}[Bifurcation of inertia-gravity waves for $Q\neq 0$]\label{t:bifIGWQn0}
Let $Q\neq0$, $\alpha,\kappa\in \R$ sufficiently close to zero and $\k_\crit=(\kcx,\kcy)^\intercal$ arbitrary with $|\k_\crit|=k_\crit$. 
Consider $2\pi$-periodic steady travelling wave-type solutions $(\vv,\eta)^\intercal$ to \eqref{e:sw} with mean zero $\eta$ and $\zeta=(1+\kappa)^{-1}\zz-(\omega_\crit-s)t$. These waves are (up to spatial translations) in one-to-one correspondence with solutions $s, A_1$ near zero of 
\begin{subequations}\label{e:gwBifEqu}
\begin{align}
0 &= A_1\left(-s + \kappa \frac{f^2}{\omega_\crit} 
+ \calO(
|A_1|^2+|\mu|^2) 
\right), \label{e:gwBifEqua}\\
0&= A_1\left(\alpha 
- \frac{2 Q  k_\crit}{ H_0} \left( I_1 + \frac{k_\crit^2 g H_0}{2f^2+ k_\crit^2 g H_0} I_2\right)|A_1|+ 
\calO(
|A_1|^2+|\mu|^2) 
\right),\label{e:gwBifEqub}
\end{align}
\end{subequations}
with positive quantities
\begin{subequations}\label{e:gwCoefficient}
\begin{align}
I_1 &= \frac{1}{2\pi}\int_0^{2\pi}\sqrt{f^2 + k_\crit^2 g H_0 \cos(\zz)^2}\dif \zz,\\
I_2 &=  \frac{1}{2\pi}\int_0^{2\pi}\sqrt{f^2 + k_\crit^2 g H_0 \cos(\zz)^2}\cos(2\zz)\dif \zz.
\end{align}
\end{subequations}
With $\x$ as in \eqref{e:xphase}, these waves have the form
\[
\begin{pmatrix}\vv\\\eta\end{pmatrix}(t,{\x})=
2A_1\begin{pmatrix}
\omega_\crit \kcx\cos\zeta-f\kcy\sin\zeta\\
\omega_\crit \kcy\cos\zeta + f\kcx\sin\zeta\\
k_\crit^2H_0\cos\zeta
\end{pmatrix}
+\calO(|A_1|(
|A_1|+|\mu|)).
\]
\end{theorem}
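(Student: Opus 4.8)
The plan is to evaluate the bifurcation equations \eqref{e:bifigw} up to the order displayed in \eqref{e:gwBifEqu}, using only the a priori estimate \eqref{e:WestX} for $W$ (since $G$ is merely $C^1$, $W$ cannot be Taylor-expanded) together with the Lipschitz property of the non-smooth nonlinearity. Write $u=A_0\ee_0+A_1\ee_1+A_{-1}\ee_{-1}$ with $\ee_0=(0,0,1)^\intercal$. Since $W\in\setM$ is $X$-orthogonal to $\ee_0$, its third component has zero $\zz$-average, so the mean of $\eta$ equals $A_0$; imposing mean-zero $\eta$ forces $A_0=0$, and the translation symmetry $\zz\mapsto\zz+\zz_0$ of \eqref{e:igwprob}, which rotates the phase of $A_1$, lets us take $A_1=A_{-1}\in\R$. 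For $j=0$ the equation $\langle G(u+W;\mu),\ee_0^*\rangle_Y=0$ holds identically: by the conservation form of the $\eta$-equation in \eqref{e:sw} (cf.\ \eqref{e:conslaw}), the third component of $G$ is a total $\zz$-derivative of a $2\pi$-periodic function and hence has zero average. Because $G$ maps real functions to real functions, the $j=-1$ equation is the complex conjugate of the $j=1$ equation, so it remains to analyze $j=1$, which is one complex, equivalently two real, equation.

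For $j=1$ I would expand $G(u+W;\mu)=\calL_\mu(u+W)-B_Q(u+W)-B(u+W;\mu)-N(u+W;\mu)$ and project onto $\ee_1^*$. In the linear part, $\calL_\crit u=0$ and $\langle\calL_\crit W,\ee_1^*\rangle_Y=0$ since $W\in\range(\calL_\crit)$, so by \eqref{e:calLmu}, \eqref{e:calLkappa} only the $\alpha$-, $s$- and $\kappa$-pieces acting on $u=2A_1\Re(\E_1\rme^{\rmi\zz})$ contribute up to $\calO(|A_1||\mu|^2)$; here one uses $(\partial_\zz^2+1)\rme^{\rmi\zz}=0$, which makes $\calS\rme^{\rmi\zz}=\calO(\kappa)\rme^{\rmi\zz}$. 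A direct computation with $\omega_\crit^2=gH_0k_\crit^2+f^2$ gives $\langle\mathrm{diag}(1,1,0)\E_1,\E_1^*\rangle_{\C^3}=(\omega_\crit^2+f^2)/(2\omega_\crit^2)$ (real, positive), $\langle\mathrm{diag}(1,1,1)\E_1,\E_1^*\rangle_{\C^3}=1$, and $\calK\E_1\rme^{\rmi\zz}=\rmi f(f\kcx+\rmi\omega_\crit\kcy,\ f\kcy-\rmi\omega_\crit\kcx,\ 0)^\intercal\rme^{\rmi\zz}+\calO(\kappa)$, whence $\langle\calK\E_1,\E_1^*\rangle_{\C^3}=\rmi f^2/\omega_\crit$ (purely imaginary). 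Thus the linear contribution to the $j=1$ projection is $A_1\bigl(\alpha\,\tfrac{\omega_\crit^2+f^2}{2\omega_\crit^2}-\rmi s+\rmi\kappa\tfrac{f^2}{\omega_\crit}\bigr)+A_1\calO(|\mu|^2)$. The smooth quadratic part $B(u;\mu)$ contains only the harmonics $\rme^{0}$ and $\rme^{\pm2\rmi\zz}$ and therefore projects to zero onto $\rme^{\rmi\zz}$; the cross term of $B$ involving $W$, the cubic $N$, the products $\alpha W,sW,\kappa W$, and the difference $B_Q(u+W)-B_Q(u)$ are all $\calO(|A_1|(|A_1|^2+|\mu|^2))$ by \eqref{e:WestX}, $\|u\|_X=\calO(|A_1|)$, Young's inequality, and the Lipschitz bound $\bigl|\,|\vv_1|\vv_1-|\vv_2|\vv_2\,\bigr|\le C(|\vv_1|+|\vv_2|)\,|\vv_1-\vv_2|$ for the non-smooth term.

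The decisive contribution is the leading non-smooth term $\langle B_Q(u),\ee_1^*\rangle_Y$, where $B_Q(u)=(Q/H_0)|\vv|\vv$ in the first two components with $\vv=2A_1(\omega_\crit\kcx\cos\zz-f\kcy\sin\zz,\ \omega_\crit\kcy\cos\zz+f\kcx\sin\zz)^\intercal+\calO(|A_1|(|A_1|+|\mu|))$. A cancellation of the cross terms gives $|\vv|^2=4A_1^2k_\crit^2\bigl(f^2+gH_0k_\crit^2\cos^2\zz\bigr)+\dots$, so $|\vv|=2|A_1|k_\crit P(\zz)$ with $P(\zz):=\sqrt{f^2+gH_0k_\crit^2\cos^2\zz}$. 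Projecting onto $\ee_1^*$ (only the first two components of $B_Q$ enter), the angular integrand simplifies, again using $\omega_\crit^2=gH_0k_\crit^2+f^2$ and that the $\sin\zz\cos\zz$ part integrates to zero, to a real multiple of $P(\zz)(\omega_\crit^2\cos^2\zz+f^2\sin^2\zz)=P(\zz)^3$, so $\langle B_Q(u),\ee_1^*\rangle_Y$ equals a real multiple of $|A_1|A_1\,\tfrac1{2\pi}\int_0^{2\pi}P(\zz)^3\dif\zz$. Using $2\cos^2\zz-1=\cos2\zz$ one writes $P^3=P\cdot\tfrac{\omega_\crit^2+f^2}{2}\bigl(1-c+2c\cos^2\zz\bigr)$ with $c=gH_0k_\crit^2/(2f^2+gH_0k_\crit^2)$, which turns $\tfrac1{2\pi}\int P^3$ into $\tfrac{\omega_\crit^2+f^2}{2}(I_1+cI_2)$ for $I_1,I_2$ as in \eqref{e:gwCoefficient}. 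Collecting all contributions, factoring out $A_1$, normalizing the $\alpha$-coefficient to $1$ by dividing by the positive number $(\omega_\crit^2+f^2)/(2\omega_\crit^2)$, and separating imaginary and real parts then yields \eqref{e:gwBifEqua} and \eqref{e:gwBifEqub}, with the $\calO(|A_1|^2+|\mu|^2)$ remainder absorbing the $\calO(\kappa^3)$ and $\calS$-corrections, the terms involving $W$, the cubic terms, and $B_Q(u+W)-B_Q(u)$. Finally, the stated wave profile follows from $(\vv,\eta)^\intercal=u+W=2A_1\Re(\E_1\rme^{\rmi\zz})+\calO(|A_1|(|A_1|+|\mu|))$ with $\E_1$ from \eqref{e:eigenvector2}, after undoing the coordinate change \eqref{e:xphase}, under which $\zz$ corresponds at leading order to the phase $\zeta$.

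The main obstacle is the non-smooth term, on two fronts. First, its projection onto the kernel mode is not an elementary Fourier coefficient but produces the complete-elliptic-integral-type quantities $I_1,I_2$, and pinning down the explicit coefficient requires the identity linking $\tfrac1{2\pi}\int P^3$ to $I_1+cI_2$. Second, because $G$ is only $C^1$, every remainder estimate must be routed through the a priori bound \eqref{e:WestX} and the Lipschitz property of $|\vv|\vv$ rather than through an expansion of $W$. A secondary but structurally essential point, which I would verify carefully, is that the $s$- and $\kappa$-contributions are purely imaginary while the $\alpha$- and $B_Q$-contributions are purely real, so that the two scalar bifurcation equations decouple at leading order into the clean form \eqref{e:gwBifEqu}.
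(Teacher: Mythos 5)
Your proposal follows essentially the same route as the paper: Lyapunov--Schmidt with only the a priori bound \eqref{e:WestX} on $W$ (no Taylor expansion of $W$, since $G$ is merely $C^1$), projection onto $\ee_1^*$ after noting that the smooth quadratic $B$ produces only the harmonics $\rme^{0},\rme^{\pm2\rmi\zz}$, the reduction of the non-smooth integrand to $P(\zz)^3$ with $P(\zz)=\sqrt{f^2+gH_0k_\crit^2\cos^2\zz}$, and the decoupling of the $j=1$ projection into a real (amplitude) and imaginary (speed-correction) equation because the $\alpha$- and $B_Q$-contributions are real while the $s$- and $\kappa$-contributions are imaginary. The one detail you omit is the separate verification that $I_2>0$, which the theorem asserts; the paper proves it by comparing $P(\zz)$ with $P(\zz+\pi/2)$ on $[0,\pi/4]$ versus $[\pi/4,\pi/2]$, although, as your reduction makes clear, supercriticality already follows from $\tfrac1{2\pi}\int_0^{2\pi}P^3\,\dif\zz>0$, i.e.\ $I_1+\tfrac{k_\crit^2gH_0}{2f^2+k_\crit^2gH_0}I_2>0$, without needing the sign of $I_2$ alone.
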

Note that \eqref{e:gwBifEqua} specifies the deviation through $s$ from the travelling wave velocity and \eqref{e:gwBifEqub} the amplitude $|A_1|$. Since the coefficient of $A_1|A_1|$ in \eqref{e:gwBifEqub} is negative and the zero state is unstable for $\alpha>0$, the bifurcation is always supercritical. 
In contrast to the bifurcation of GE, a linear term in $\kappa$ appears, which balances the deviation $s$ in \eqref{e:gwBifEqua}. In \eqref{e:gwBifEqub} the dependence on $\kappa$ enters through the remainder term, which we do not further resolve here since this form suffices to infer supercriticality of the  bifurcation for sufficiently small $\kappa$.

\begin{proof}[Proof of \cref{t:bifIGWQn0}]
We derive 
the bifurcation equations \eqref{e:gwBifEqu} from \eqref{e:bifigw}.
For $j=0$ the equation is trivial, since $\ee_0^*=(0,0,1)^\intercal$ and with $U=(\vv,\eta)^\intercal$ the third component of $G$ can be written as the derivative
\[
-\partial_\zz (1+\kappa)  (  \k_\crit \cdot \vv )( H_0+ \eta )-(s-\omega_\crit)(1+\kappa)\partial_\zz\eta,
\]
whose integral vanishes on the space of periodic functions. 

Next we consider $j=\pm 1$. 
Since $\range(\calL_\crit)$ is orthogonal to $\ker(\calL_\crit^*)$, 
the linear part \eqref{e:calLmu} of $G$ can be replaced by 
\[
L_\mu:= \calL_\mu-\calL_\crit = \calO(|\mu|).
\]
With \eqref{e:igwprob} the non-trivial bifurcation equations \eqref{e:bifigw} can be written as
\begin{equation}\label{e:bifigw2}
\langle L_\mu u,\ee_j^*\rangle_Y - \langle B_Q (u),\ee_j^*\rangle_Y - \langle B (u;\mu),\ee_j^*\rangle_Y = \calR_j,
\quad j=\pm1,
\end{equation}
where $\calR_j$ is a remainder term, which includes $L_\mu W = \calO(|\mu| \|W\|_X)$ and 
$B(u+W;\mu)-B(u;\mu) = \calO(\|W\|_X(\|u\|_X + \|W\|_X))$, as well as
$B_Q (u+W)-B_Q(u) = (\|W\|_X(\|u\|_X + \|W\|_X))$. 
The latter follows from 
the reverse triangle inequality for the Euclidean norm of $\vv$ within $B_Q$ and we obtain 
\[
\calR_j = \calO(\|W\|_X(\|u\|_X + \|W\|_X+|\mu|)).
\]
Analogous to \Cref{s:bif}, $u$ in \eqref{e:bifigw2} can be written in terms of the amplitudes $A_j\in\C$,  $j=-1,0,1$, for the kernel modes $\ee_j$ of $\calL_\crit$ as
\[
u = A_0\ee_0 + A_1\ee_1 + A_{-1}\ee_{-1}.
\]
Here $A_{-1} = \overline{A_1}$, and since $\ee_0 = (0,0,1)^\intercal$ is real we have $A_0\in\R$; $\ee_1, \ee_{-1}$ are given in \eqref{e:eigenvector2}. As in \Cref{s:bif}, by translation symmetry we can assume that $A_1=A_{-1}\in \R$. Since \eqref{e:bifigw2} for $j=1$ is the complex conjugate of that for $j=-1$, it suffices to consider $j=1$.
Additionally, analogous to \Cref{s:bif} we can set $A_0=0$ without loss. Indeed, the orthogonality between $\ker(\calL_\crit)$ and $\setM$ is equivalent for the inner products $\langle\cdot,\cdot\rangle_X$ and $\langle\cdot,\cdot\rangle_Y$. 
Thus, $\langle W,\ee_0\rangle_Y=0$ and it follows that the third component of $W$ has always zero mean.
In particular, nonzero mean of the solutions $\eta$ comes from the constant contribution $A_0\neq0$ only.
With \eqref{e:WestX} and the form of $u$, 
this implies $W= \calO(|A_1|(|A_1|+|\mu|))$. 
Hence, 
\begin{equation}\label{e:Rest}
\calR_1 = \calO(|A_1|(
|A_1|^2+|\mu|^2)).
\end{equation}
We also write $u=(\tvv,\teta)^\intercal$ in the following so that 
\begin{align*}
\tvv &= A_1 
\begin{pmatrix}
\omega_\crit \kcx + \rmi f\kcy \\
\omega_\crit \kcy - \rmi f\kcx 
\end{pmatrix} 
\rme^{\rmi \zz}
+ A_{1} 
\begin{pmatrix}
\omega_\crit \kcx - \rmi f\kcy \\
\omega_\crit \kcy + \rmi f\kcx 
\end{pmatrix}
\rme^{-\rmi \zz},
\\
\teta &= 
A_1 k_\crit^2 H_0(\rme^{\rmi\zz} + \rme^{-\rmi\zz}) . 
\end{align*}
Next, we derive the leading order part in \eqref{e:bifigw2}.
We first consider the term involving the smooth quadratic terms of \eqref{e:sw} given by
\[
B(u;\mu) = 
(1+\kappa)(\tvv\cdot \k_\crit)\begin{pmatrix} \partial_\zz \tvv\\ \partial_\zz \teta\end{pmatrix}
+(1+\kappa)\begin{pmatrix}0\\ \teta \, \k_\crit\cdot \partial_\zz \tvv\end{pmatrix} 
- \frac{C_\crit-\alpha H_0}{H_0^2}\begin{pmatrix}\teta\tvv\\ 0\end{pmatrix}. 
\]
Here $\nabla$ from \eqref{e:sw} is replaced by $(1+\kappa)\k_\crit\partial_\zz$ due to the choice of variables. It is straightforward that all the terms are orthogonal to $\ker(\calL_\crit^*)$ due to the quadratic combinations of $\rme^{\pm \rmi\zz}$, so
\begin{equation}\label{e:Best}
\langle B(u;\mu),\ee_1^* \rangle = 0.
\end{equation}
This simplifies \eqref{e:bifigw2} for $j=1$, and we next consider the term in \eqref{e:bifigw2}, which involves the non-smooth quadratic term
\[
\langle B_Q (u),\ee_1^*\rangle_Y =  \langle \begin{pmatrix}\frac{Q}{H_0}|\tvv|\tvv\\0\end{pmatrix},\ee_1^*\rangle_Y,
\]
where
\[
|\tvv| = 2|A_1| k_\crit \sqrt{\omega_\crit^2  \cos(\zz)^2 + f^2\sin(\zz)^2} = 2|A_1| k_\crit \sqrt{f^2 + k_\crit^2 g H_0 \cos(\zz)^2}.
\]
We thus compute that 
\begin{equation}\label{e:BQest}
\langle B_Q (u),\ee_1^*\rangle_Y = \frac{2k_\crit^3 Q}{H_0 m} |A_1| A_1
\big( (\omega_\crit^2 + f^2) I_1 + (\omega_\crit^2 - f^2) I_2\big), 
\end{equation}
where the decisive coefficients $I_1, I_2$, are given in \eqref{e:gwCoefficient}, which is characterized by the non-smooth nonlinearity.
In contrast to the analogous integrals in \Cref{s:nonsm}, here we cannot find explicit expressions.  
However, for the qualitative result it suffices to determine the signs. Clearly, $I_1>0$ and to show that $I_2>0$ we abbreviate
$f(\zz):=\sqrt{f^2 + k_\crit^2 g H_0 \cos(\zz)^2}$ and compute 
\[
\pi I_2  = \int_0^{\pi/4} \underbrace{\Bigl(f(\zz)-f(\zz+\frac\pi 2)\Bigr)}_{>0 \hspace{2mm} (*)}\underbrace{\cos(2\zz)}_{>0}\dif \zz 
+ \int_{\pi/4}^{\pi/2} \underbrace{\Bigl(f(\zz)-f(\zz+\frac\pi 2)\Bigr)}_{<0 \hspace{2mm}(**)}\underbrace{\cos(2\zz)}_{<0}\dif \zz >0,
\]
where $(*)$ holds since $\cos(\zz)^2 > 1/2 > \cos(\zz+\pi/2)^2$ for $\zz\in[0,\pi/4)$, and $(**)$ holds since $\cos(\zz)^2 < 1/2 < \cos(\zz+\pi/2)^2$ for $\zz\in(\pi/4,\pi/2]$. 

Concerning the linear part of \eqref{e:bifigw2}, we first note that due to $\calL_\mu$ in \eqref{e:calLmu}
\[
L_\mu =  \alpha\,\mathrm{diag}(1,1,0) -(1+\kappa) s\,\mathrm{diag}(1,1,1) \partial_\zz + \kappa\, \calK. \label{e:Lmu}
\]
We next compute $\langle L_\mu  u, \ee_1^*\rangle_Y$. The first term yields 
\[
\langle \mathrm{diag}(1,1,0) u, \ee_1^*\rangle_Y = \langle \begin{pmatrix}\tvv\\ 0\end{pmatrix}, \ee_1^*\rangle_Y = A_1 k_\crit^2 \frac{\omega_\crit^2 + f^2}{m}.
\]
Further, for the comoving frame term with factor $s$ we obtain
\begin{equation}\label{e:uzej}
\langle \partial_\zz u, \ee_1^*\rangle_Y = \rmi  A_1 \frac{2k_\crit^2\omega_\crit^2}{m}.
\end{equation}
Regarding $\calK$ as in \eqref{e:calLkappa}, we are only interested in the leading order terms and thus consider $\calK|_{\kappa=0}$. Here $\calS|_{\kappa=0}$ from \eqref{e:calLkappa} simplifies and for the projection onto 
$\ee_1^*$ 
we note that $(\partial_\zz^2+1)\rme^{\rmi j\zz}=0$, so that the relevant diagonal part of $\calK|_{\kappa=0}$ is $\mathrm{diag}(1,1,1) \omega_\crit \partial_\zz$, for which we can use \eqref{e:uzej}. The remaining terms of the third column in $\calK$ give 
\[
- g\langle (\k_\crit,0)^\intercal \partial_\zz \tilde\eta ,\ee_1^*\rangle_Y = - \rmi A_1 g H_0 k_\crit^2 \langle (\k_\crit,0)^\intercal, \E_1^* \rangle_Y
= -\rmi A_1 g H_0 k_\crit^4 \frac{\omega_\crit}{m},
\]
which is the same as the term created by the third row, so it is doubled.
Gathering terms,
\begin{equation}\label{e:Lmuest}
\langle L_\mu  u, \ee_1^*\rangle_Y = \left( \alpha (\omega_\crit^2+ f^2) 
+ 2 \rmi (-s\omega_\crit +\kappa f^2) \omega_\crit\right) \frac{k_\crit^2}{m}  A_1
+\calO(|\mu|^2 |A_1|),
\end{equation}
where we have used $\omega_\crit^2  - g H_0 k_\crit^2=f^2$. 
Concerning \eqref{e:bifigw2} we observe that the order of the remainder in \eqref{e:Rest} includes the error term in \eqref{e:Lmuest}. 
Using the results \eqref{e:Best}, \eqref{e:BQest} and \eqref{e:Lmuest}, equation \eqref{e:bifigw2} for $j=1$ becomes
\begin{equation}\label{e:igweq1}
\begin{split}
&\left( \alpha (\omega_\crit^2+ f^2) + 2 \rmi(-s\omega_\crit +\kappa f^2) \omega_\crit\right) \frac{k_\crit^2}{m} A_1\\
& - \frac{2k_\crit^3 Q}{H_0 m} |A_1|A_1 \big((\omega_\crit^2 + f^2) I_1 + (\omega_\crit^2 - f^2) I_2\big)
=  \calO(|A_1|(
|A_1|^2+|\mu|^2)).
\end{split}
\end{equation}
Upon dividing by 
$k_\crit^2/m$,  
\eqref{e:igweq1} can be split into real and imaginary parts as 
\begin{align*}
0&= A_1\left(2 (-s\omega_\crit +\kappa f^2) \omega_\crit 
+ \calO(|A_1|^2+|\mu|^2)\right) \\
0&=  A_1\left(\alpha (\omega_\crit^2+ f^2) 
- \frac{2 Qk_\crit}{H_0} |A_1| \big((\omega_\crit^2 + f^2) I_1 + (\omega_\crit^2 - f^2) I_2\big)+ \calO(|A_1|^2+|\mu|^2)\right).
\end{align*}
Using $\omega_\crit^2 - f^2 = k_\crit^2 g H_0$, $\omega_\crit^2 +f^2 = 2f^2+ k_\crit^2 g H_0$ 
and rearranging terms, we obtain the bifurcation equations \eqref{e:gwBifEqu}.
\end{proof}

\section{Explicit nonlinear flows with arbitrary amplitudes}\label{s:explicit}

The shallow water equations \eqref{e:sw} admit explicit solutions with linear dynamics in the nonlinear setting, cf.\ e.g.\ \cite{PR2020}. These come in linear spaces and for $C=Q=0$, backscatter can create solutions of this kind that grow unboundedly and exponentially, which is one indication of undesired concentration of energy due to backscatter \cite{PRY22}.  
In this section we study the impact of bottom drag on these solutions, in particular on the robustness of the degenerate growth.  
Since such  explicit solutions can only be found for smooth bottom drag, we restrict to $Q=0$ in this section. 
In contrast to the solutions determined in \Cref{s:bif}, we find explicit solutions that have arbitrary amplitude and can depend on time. 

Our starting point is the specific plane wave-type ansatz with wave vector $\k=(\kx,\ky)^\intercal$ of the form 
\begin{equation}\label{e:sol}
\vv = \ampA \rme^{\beta t}\cos(\k\cdot\x + \tau)\k^{\perp}, \quad   \eta = 0,
\end{equation}
where $\k^\perp = (-\ky,\kx)^\intercal$. 
Compared with \cite{PRY22} $\eta$ is constant due to its presence in the denominator of the bottom drag term \eqref{e:drag}. 
This is also a special case of \eqref{e:planeansatz}, although the solutions will not be geostrophically balanced.
Substituting \eqref{e:sol} into \eqref{e:sw} gives the existence conditions 
\begin{subequations}\label{e:cond}
\begin{align}
\beta\ &= \ (b_1-d_1|\k|^2)k_y^2+(b_2-d_2|\k|^2)k_x^2 -  \lb/H_0, \label{e:conda} \\
0\ &=\ p(\k):=\kx\ky \left((d_1-d_2)|\k|^2+b_2-b_1\right) + f. \label{e:condb}
\end{align}
\end{subequations}
Since the conditions \eqref{e:cond} are independent of $\ampA$, each solution yields a `vertical branch' of explicit flows \eqref{e:sol}, parameterized by $\ampA\in\R$. In the steady case $\beta=0$ this can be viewed as bifurcating from the trivial state for fixed parameters, cf.\ \cref{f:ampalp} (dashed). 
The wave vectors $\k$, for which explicit flows of the form \eqref{e:sol} exist, lie on a union of curves determined by the condition \eqref{e:condb}. Their growth rates $\beta=\beta(\k)$ are given by the dispersion relation \eqref{e:conda}, so that 
the wave vectors of steady flows lie on the intersections with curves determined by \eqref{e:conda} for $\beta=0$. 
For $\sigma\in\{0,\pm1\}$ let us define 
\[
\Gamma_\sigma:=\{\k\in\R^2: p(\k) = 0,\, \sgn(\beta(\k))=\sigma\}.
\]
Steady flows of the form \eqref{e:sol} exist if and only if $\Gamma_0\neq \emptyset$ (the intersection of black and blue curves in \cref{f:growth}), exponentially and unboundedly growing such flows if and only if $\Gamma_1\neq\emptyset$ (the intersection of red regions and blue curves in \cref{f:growth}), and exponentially decaying ones have wave vectors $\k\in\Gamma_{-1}$ (the intersection of white regions and blue curves in \cref{f:growth}). The latter is non-empty except in the isotropic case with rotation, where solutions \eqref{e:sol} do not exist.

\begin{figure}[t!]
\centering
\subfigure[$\lb=0.11$]{\includegraphics[trim = 5.5cm 8.5cm 6cm 8.5cm, clip, width=0.24\linewidth]{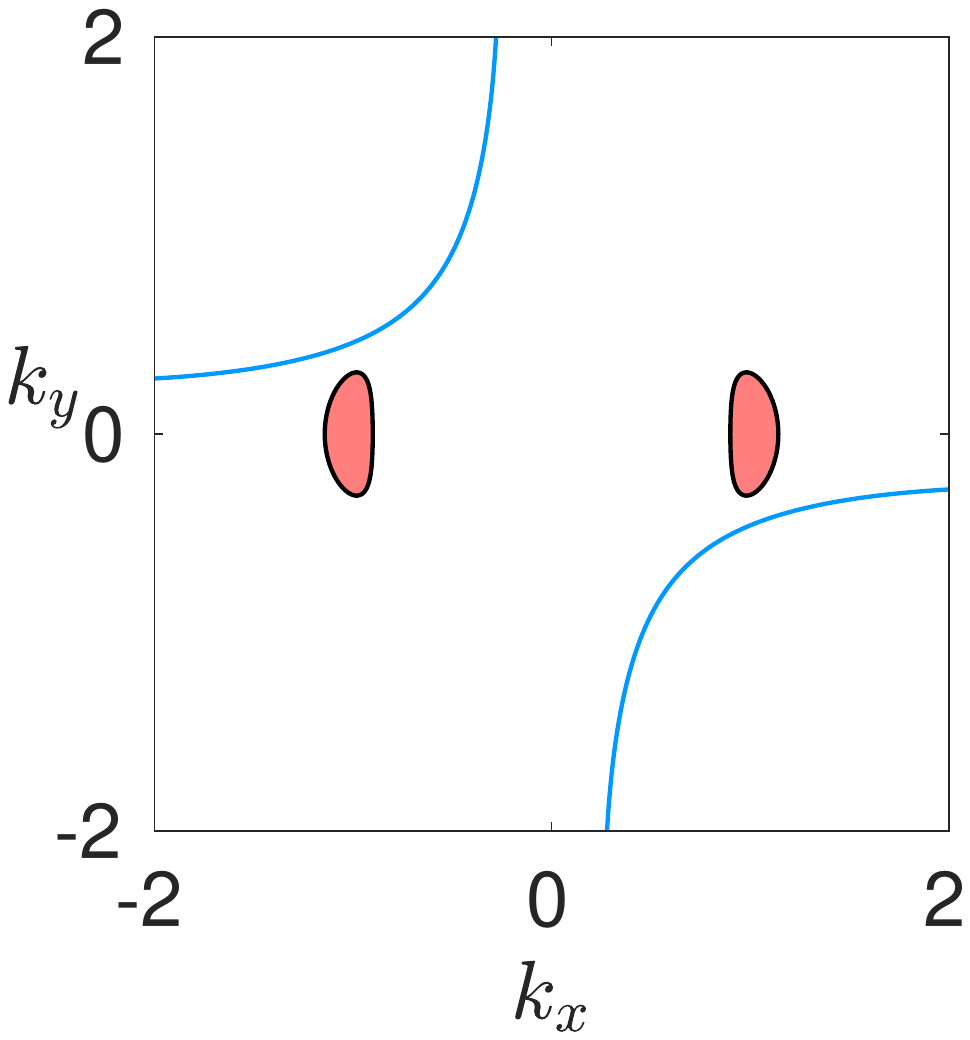}\label{f:anisoC0p11}}
\hfil
\subfigure[$\lb=0.08$]{\includegraphics[trim = 5.5cm 8.5cm 6cm 8.5cm, clip, width=0.24\linewidth]{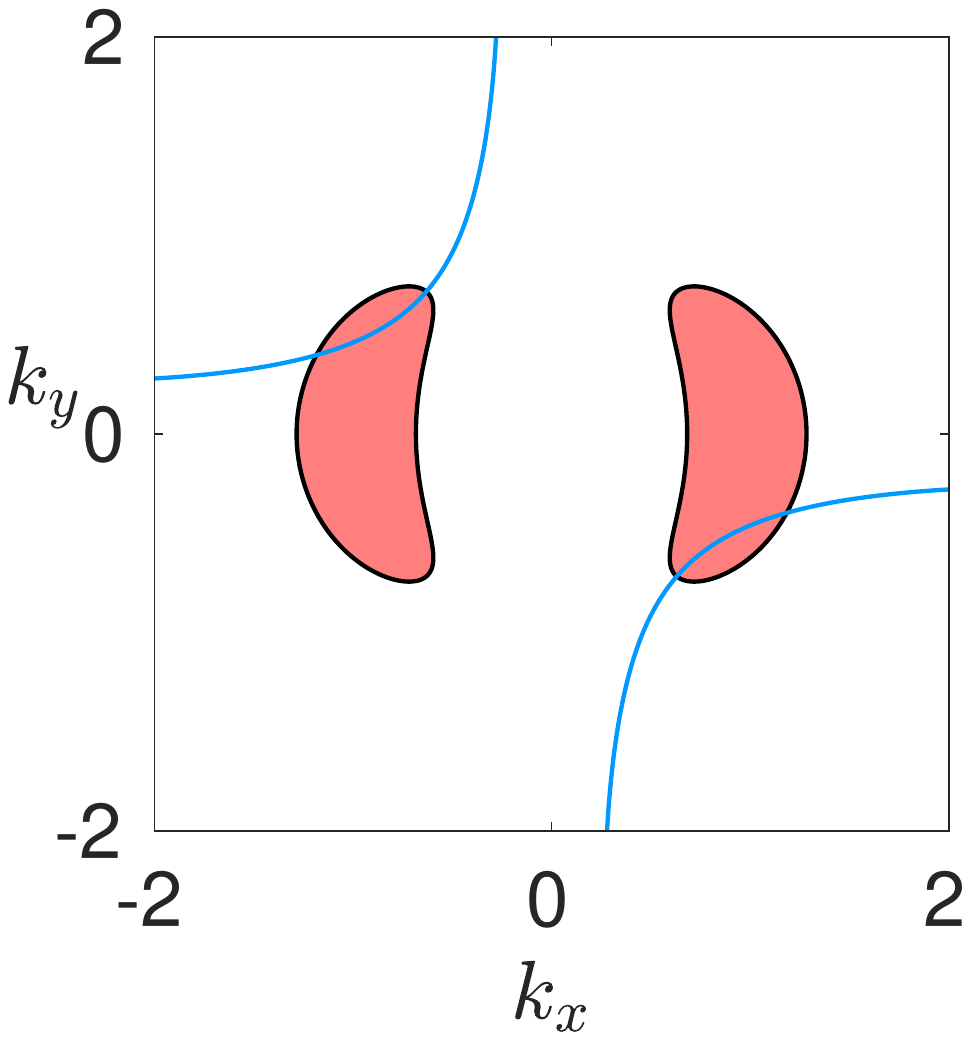}\label{f:anisoC0p08}}
\hfil
\subfigure[$\lb=0.05$]{\includegraphics[trim = 5.5cm 8.5cm 6cm 8.5cm, clip, width=0.24\linewidth]{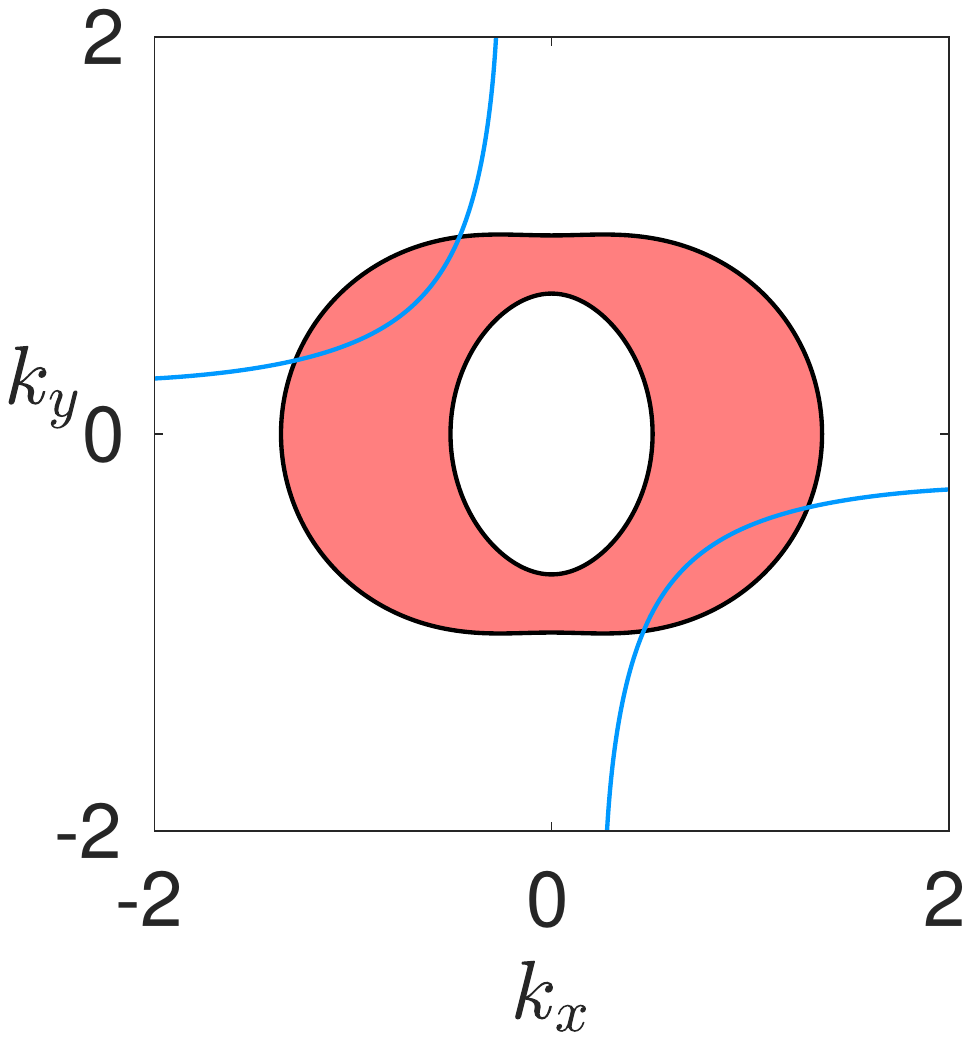}\label{f:anisoC0p05}}
\hfil
\subfigure[$\lb=0$]{\includegraphics[trim = 5.5cm 8.5cm 6cm 8.5cm, clip, width=0.24\linewidth]{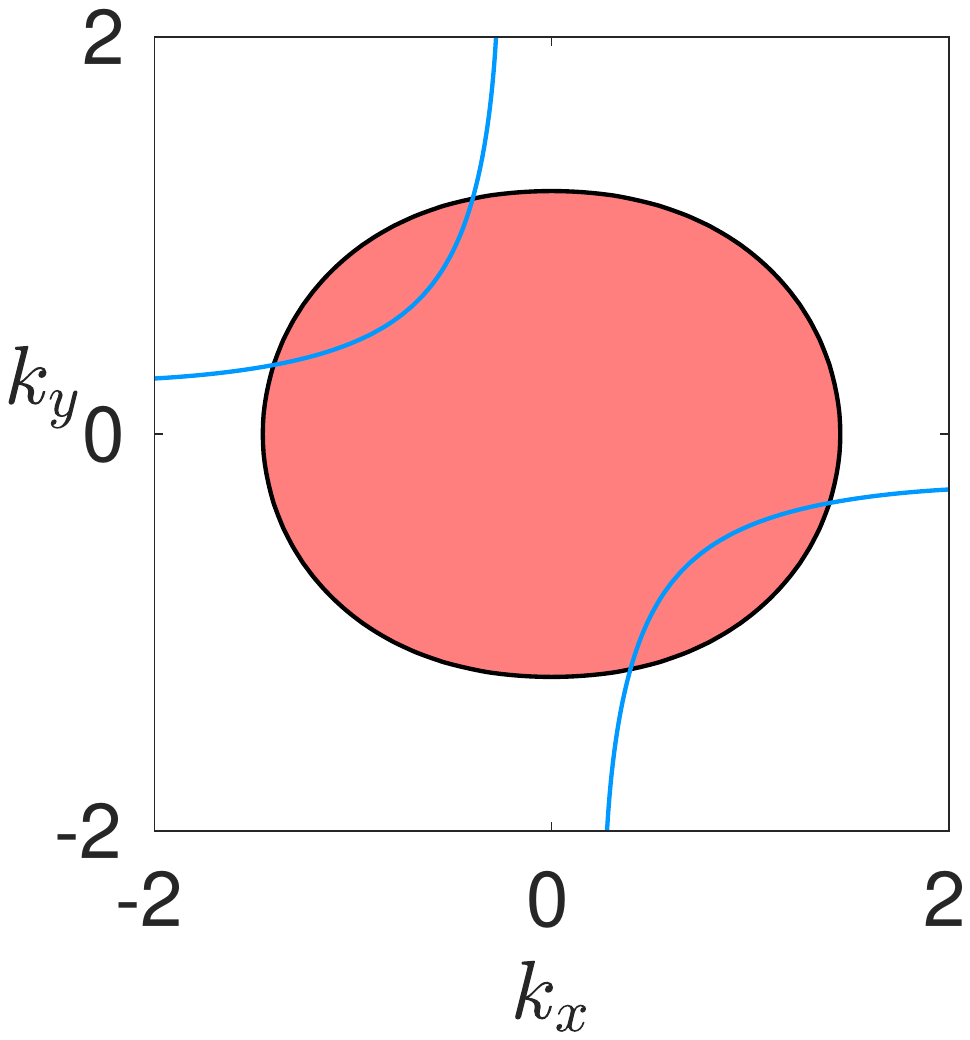}\label{f:anisoC0}}
\caption{Sample of the loci of explicit flows \eqref{e:sol} in the wave vector plane. The parameters are $d_1=1$, $d_2=1.04$, $b_1=1.5$, $b_2=2.2$, $f=0.3$, $g=9.8$, $H_0=0.1$, $Q=0$ so that $\lb_\crit\approx0.116$. Red: $\beta>0$; white: $\beta<0$; black: $\beta=0$; blue: the wave vectors satisfying \eqref{e:condb}. The intersections of blue curves with black curves are steady flows, those within red regions unboundedly growing flows; $\beta = 0$ at the origin is excluded in each figure.}
\label{f:growth}
\end{figure}
 
The explicit flows \eqref{e:sol} with $\k\in\Gamma_1$ grow unboundedly and exponentially in the nonlinear systems, and thus form a linear and unbounded part of the unstable manifold of the trivial state due to the arbitrary choice of the amplitude $\ampA$; analogous for $\k\in\Gamma_{-1}$ and the stable manifold.
Since explicit flows \eqref{e:sol} with \eqref{e:cond} also satisfy \eqref{e:sw} without the nonlinear terms, they are selected real eigenmodes of the linearization in the trivial state. 
In particular, the existence of an explicit flow \eqref{e:sol} requires a solution to the eigenvalue problem of \eqref{e:linop} studied in \Cref{s:spec} with $\lambda=\beta$. 
In fact, for $\beta$ from \eqref{e:conda}, the term $-|\k|\beta$ has the same sign as $a_3$ in the dispersion relation \eqref{e:poly}. Thus, for any $\k\neq0$ with $\beta>0$ it follows $a_3<0$ and a positive real eigenvalue exists for these $\k$. This means that regions with positive $\beta$ in the wave vector plane (the red regions in \cref{f:growth}) give subspaces of unstable real eigenmodes of trivial states. We refer to \cite{PRY22} for a broader discussion in the case $\lb=0$.  

The bottom drag parameter $C$ only affects the growth rate $\beta$, and has a monotonically stabilizing effect on the explicit flows. The previous  paragraph means that steady flows \eqref{e:sol} cannot exist if there are no steady eigenmodes. 
Hence, for  $\lb>\lb_\crit$ from \eqref{e:isocritC} or \eqref{e:anisocritC}  we have $\Gamma_0=\Gamma_1=\emptyset$.
At $\lb=0$, for sufficiently small $\kx$ and $\ky$ the leading order term of $\beta$ is given by the positive quantity $b_1k_y^2+b_2k_x^2$, which means any explicit flow with $\k\approx 0$ is growing. 
For $\lb>0$, the leading order term of $\beta$ for small $\kx$ and $\ky$ is given by $-C/H_0$, and for large $(\kx,\ky)$ by $-(d_1k_y^2+d_2k_x^2)|\k|^2$. Both quantities are negative, thus the explicit flows are decaying for $\k$ close to and sufficiently far from the origin in the wave vector plane. 
We plot examples in \cref{f:growth} that illustrate these situations.

\begin{remark}\label{r:stabexplicit}
We briefly comment on linear stability properties of the explicit flows \eqref{e:sol} for $A\approx 0$ and refer to \cite{PRY22} for further details. 
For $A\approx 0$, a part of the spectrum of the linearization in the explicit flows is close to the spectrum of the underlying trivial flow. Since this is unstable precisely for $\lb<\lb_\crit$, the explicit flows with $A\approx 0$ are also unstable. From the results in \cite{PRY22} we expect that the explicit flows are unstable for all $A$, although we do not have a proof. For $|A|\gg 1$ it might be possible to exploit the scaling argument in  \cite{PRY22}.
\end{remark}

\begin{remark}
We briefly consider fast rotation $|f|\gg1$, which requires $|\k|\gg 1$ to solve \eqref{e:condb}. Since the right-hand side of \eqref{e:conda} is negative for $|\k|\gg 1$ (and fixed $\lb$), there are no steady explicit flows of the form \eqref{e:sol} in this regime. However, as noted in \cref{r:largef}, $|f|\to\infty$ with $\phi=\tf\tilde\phi$ gives the formal limit $\psi=\partial_\xi\tilde\phi$ and $0=(d k^2\partial^2_\xi+b)\partial_\xi^2 \psi$. These have the steady solutions of the different form $\psi(\xi) = A\cos(\xi)$, $\tilde\phi(\xi)= A\sin(\xi)$ with $k=\sqrt{b/d}$. 
\end{remark}

\subsection{Number of steady explicit flows}
In order to gain insight into the structure of the explicit flows, we next investigate the number of different steady explicit flows \eqref{e:sol} that occur for fixed parameters. 
This is equivalent to the different intersections of the curves defined by \eqref{e:conda} and \eqref{e:condb} in the wave vector plane with $\beta=0$.
We order this analysis by the types of isotropy of backscatter terms and consider both, the rotational and non-rotational case.

\medskip
\paragraph{Isotropic backscatter}
For $d:=d_1=d_2$, $b:=b_1=b_2$, explicit flows \eqref{e:sol} exist if and only if $f=0$ due to \eqref{e:condb}. The condition \eqref{e:conda} reduces to $\beta = \beta(K) = (b-dK)K - C/H_0$, with $K:=|\k|^2$, and the solutions correspond to those of \eqref{e:scalef0}, which is linear in this case. Steady explicit flows \eqref{e:sol} exist (i.e.\ $\Gamma_0\neq \emptyset$) for $0<\lb\leq \lb_\crit$, but not for $\lb>\lb_\crit$; for $\lb=\lb_\crit$ the parabola $\beta(K)$ has a double root at $K=k_\crit^2$ from \eqref{e:isocritk} and the wave vectors of steady explicit flows form a circle with radius $|\k|=k_\crit$. Hence, in terms of decreasing $\lb$, the primary bifurcation of the steady explicit flows  occurs at $\lb=\lb_\crit$ with a vertical branch as in \cref{f:ampalp} (dashed), 
parameterized by the amplitude $\ampA$ analogous to \eqref{e:solQ0} for $f=0$. 
For $0<\lb<\lb_\crit$ the negative parabola $\beta(K)$ has two different roots and is monotonically shifted upwards upon decreasing $\lb$, so that the wave vectors of the steady explicit flows \eqref{e:sol} form two concentric circles as in \cref{f:iso}. Their radii depend monotonically on $\lb$ and the region in between defines $\Gamma_1$, i.e.\ unboundedly and exponentially growing flows. 

\begin{figure}[t!]
\centering
\subfigure[$f=0$, $\lb=0.08$]{\includegraphics[trim = 5.5cm 8cm 5.5cm 7.5cm, clip, width=0.24\linewidth]{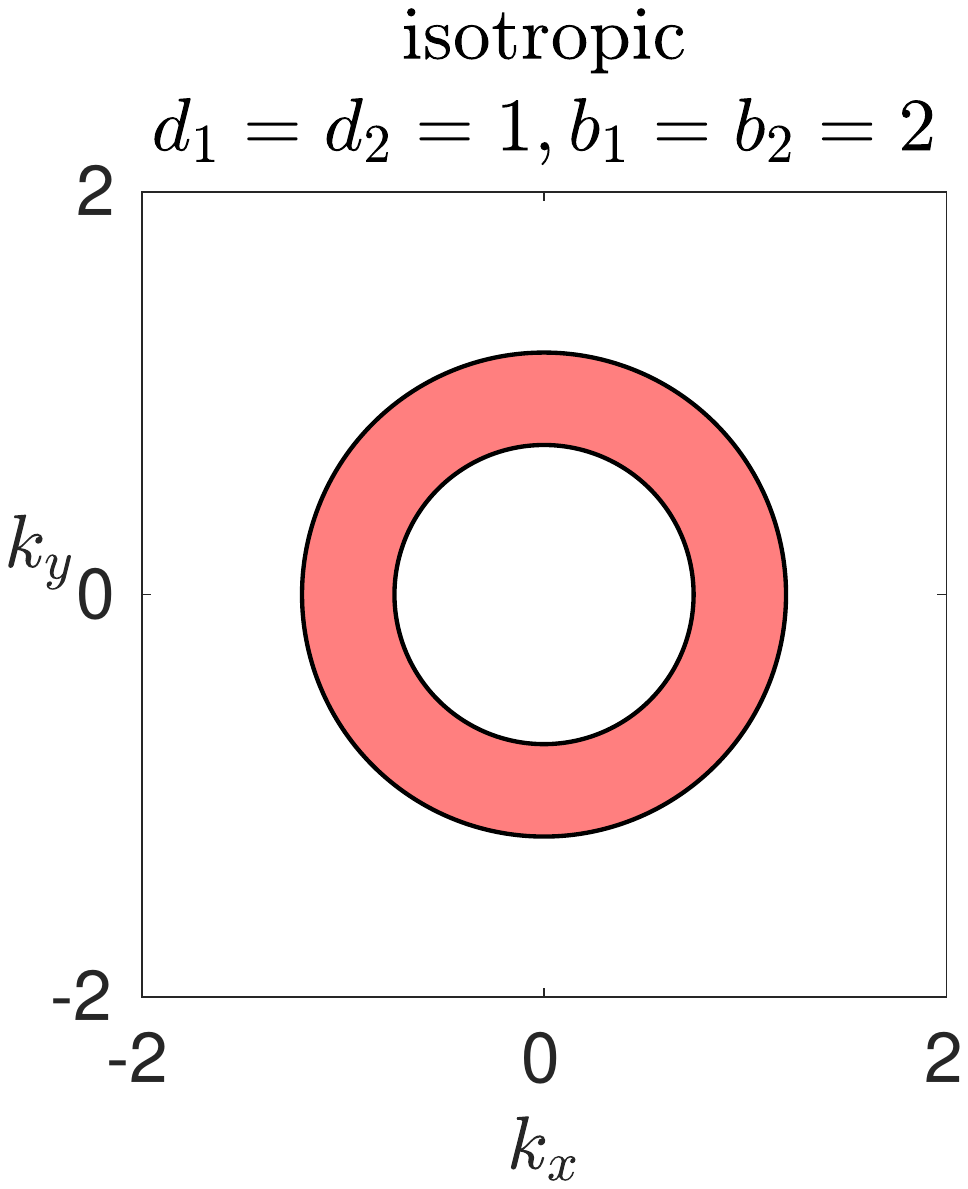}\label{f:iso}}
\hfil
\subfigure[$f=0$, $\lb=0.04$]{\includegraphics[trim = 5.5cm 8cm 5.5cm 7.5cm, clip, width=0.24\linewidth]{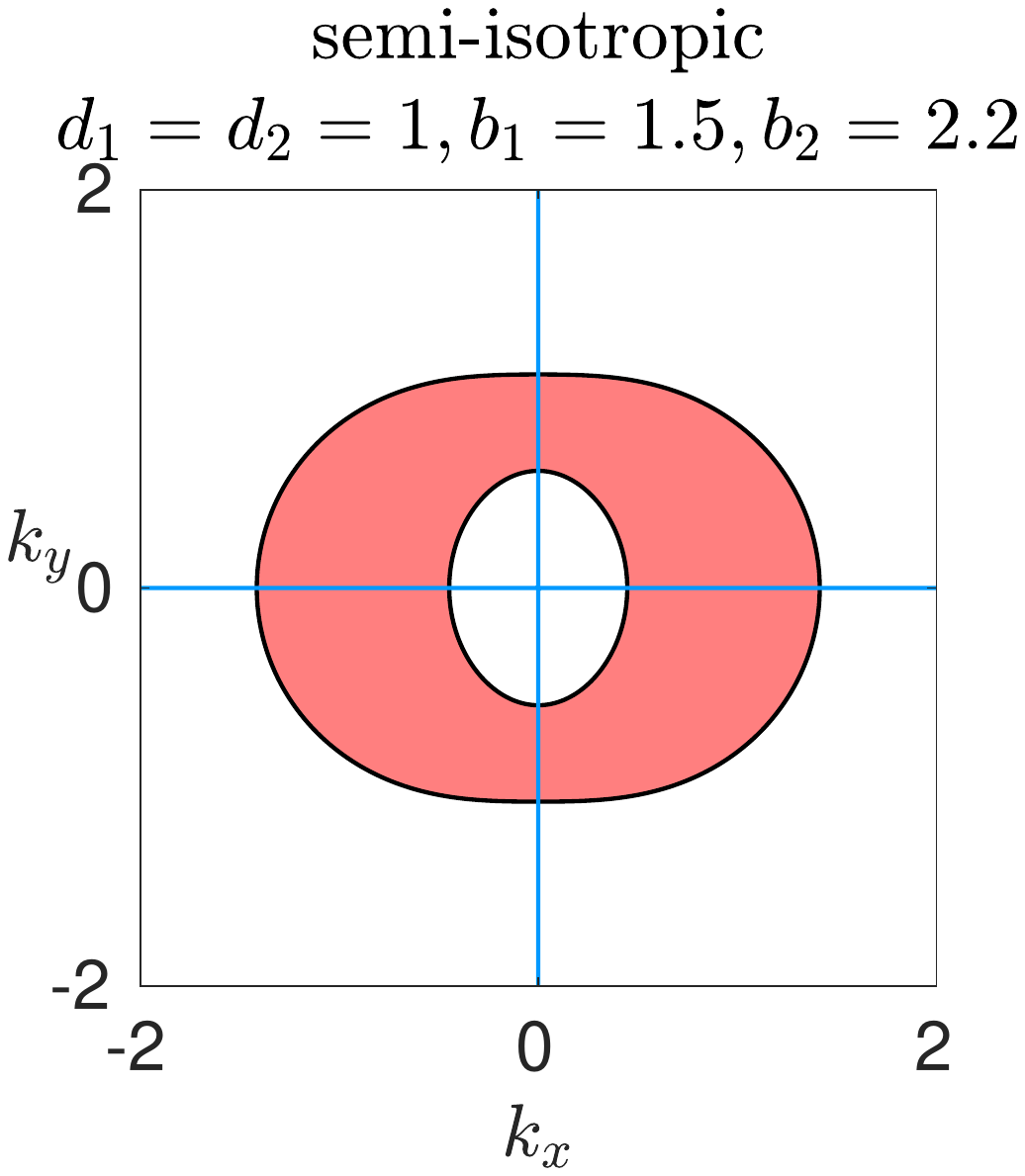}\label{f:semi-iso1}}
\hfil
\subfigure[$f=0.05$, $\lb=0.04$]{\includegraphics[trim = 5.5cm 8cm 5.5cm 7.5cm, clip, width=0.24\linewidth]{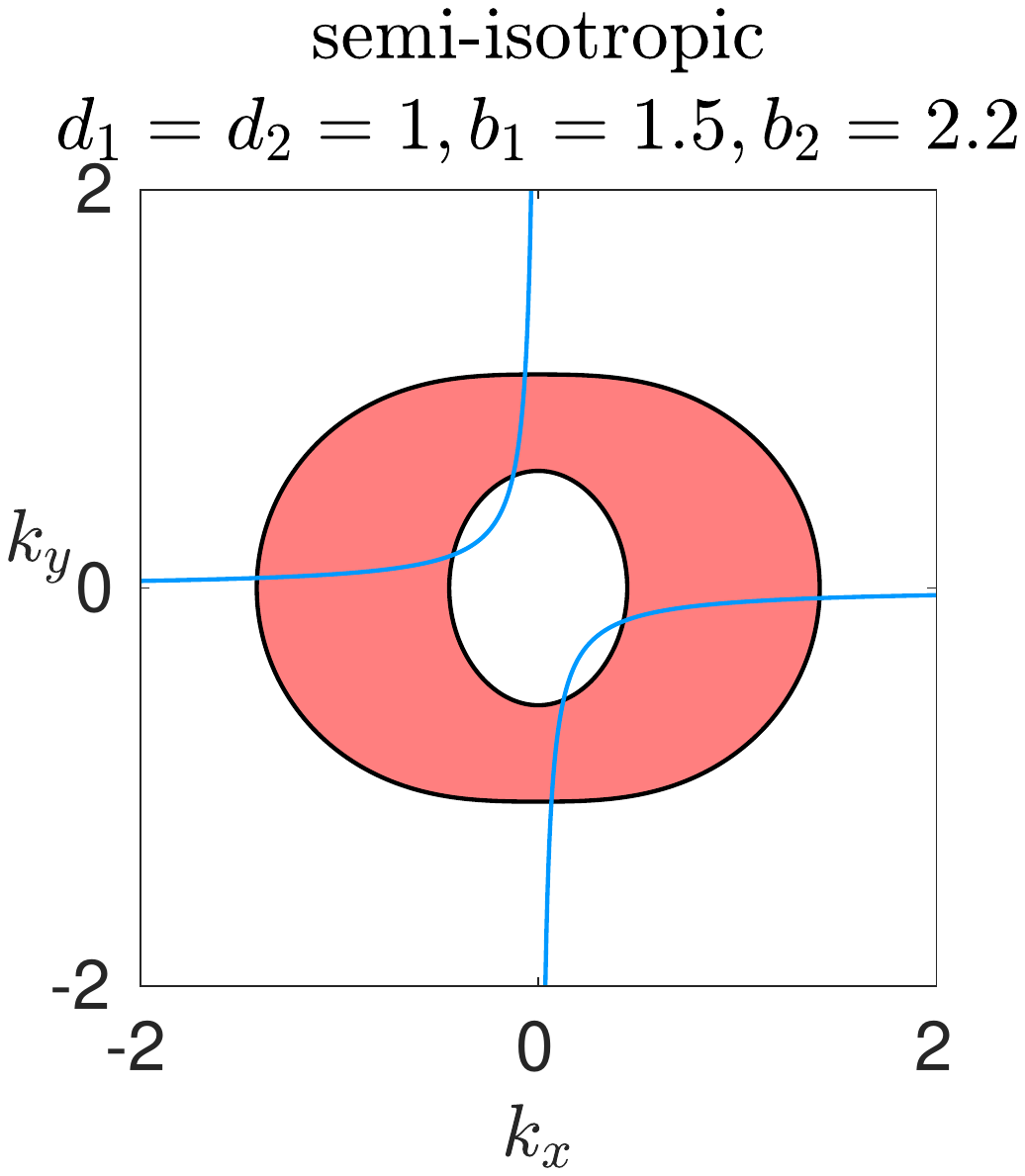}\label{f:semi-iso2}}
\hfil
\subfigure[$f=0.2$, $\lb=0.1$]{\includegraphics[trim = 5.5cm 8cm 5.5cm 7.5cm, clip, width=0.24\linewidth]{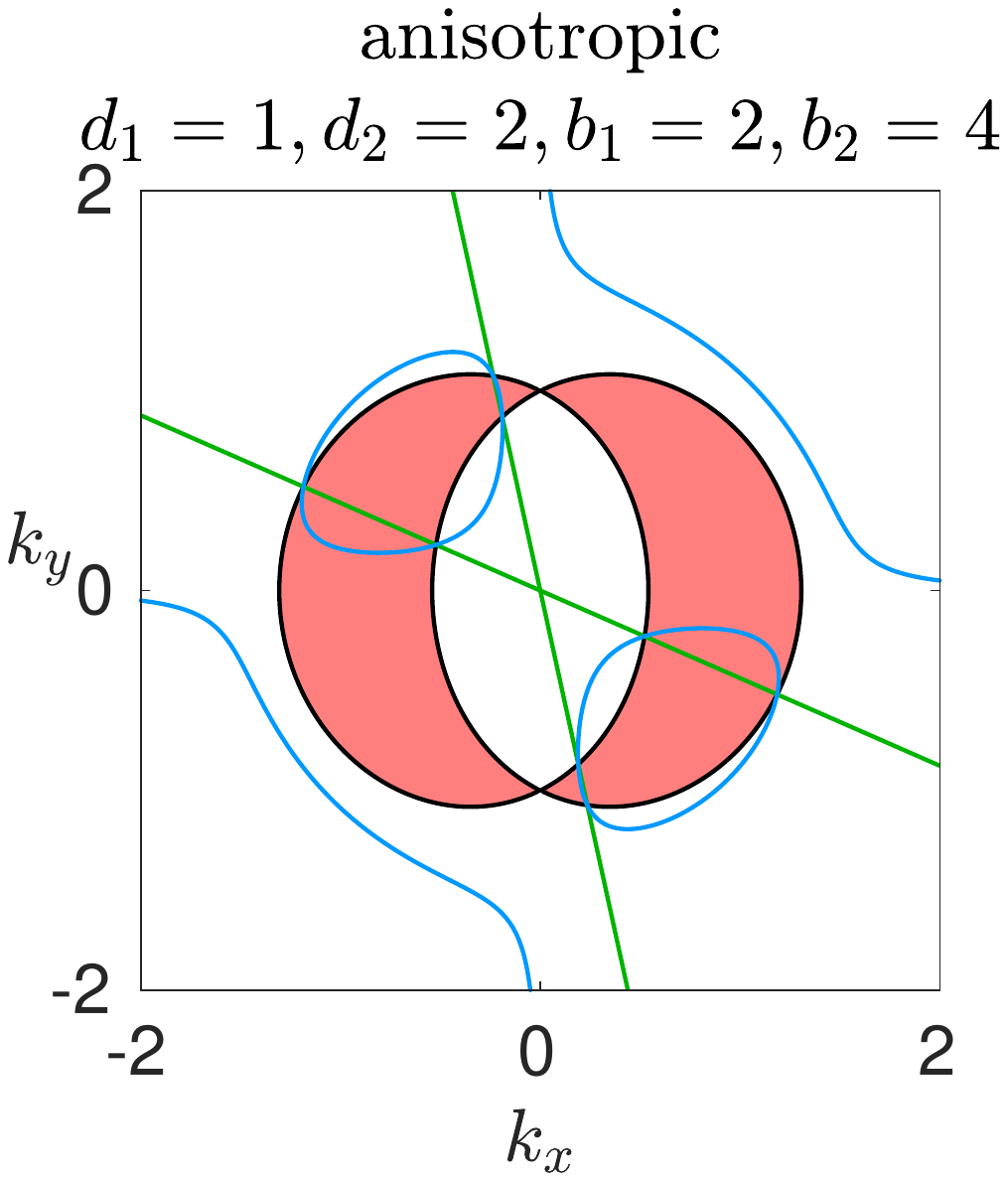}\label{f:fourroots}}
\caption{
Sample of the loci of explicit flows \eqref{e:sol} in the wave vector plane; (a) existence for any wave vector, (b-d) existence on blue curves only. 
Red: $\beta>0$; white: $\beta<0$; black: $\beta=0$; blue: wave vectors satisfying \eqref{e:condb}. Common parameters: $g=9.8$, $H_0=0.1$, $Q=0$. In (d), there are two intersections of blue and black curves on each ray (green) in the second and fourth quadrant.
}
\label{f:intersections}
\end{figure}

\begin{remark}[Relation to bifurcation analysis]\label{r:isobifflows}
With isotropic backscatter and $f=Q=0$, the bifurcating  equilibria 
of \cref{t:bifQ0} coincide with a part of the unbounded branches of steady explicit flows \eqref{e:sol}. 
This can be seen by relating \eqref{e:special}, \eqref{e:solQ0}, \eqref{e:bifGlin} with \eqref{e:conda}. 
We also recall that $w=W(u,\mu)\equiv 0$ for $f=Q=0$ in \cref{t:bifQ0}. 
Whereas \eqref{e:sol} can have arbitrary $A\in\R$ and $\alpha\in [0,\lb_\crit /H_0]$, the bifurcation analysis requires $|\alpha|$ and $|A_1|$ to be sufficiently small in \cref{t:bifQ0} due to the use of the implicit function theorem. \newline
While for $f\neq0$ steady explicit flows of the form \eqref{e:sol} do not exist, 
 the bifurcation analysis gives plane wave-type solutions 
\eqref{e:special} with \eqref{e:solQ0} for wave vectors in an open neighbourhood of the circle with radius $k_\crit$. \end{remark}

\medskip
\paragraph{Semi-isotropic backscatter}
For $b_1\neq b_2$, $d:=d_1=d_2$ and $f=0$, \eqref{e:condb} requires $k_x=0$ or $k_y=0$ so that steady solutions have wave vectors on the axes in the wave vector plane.
Due to the remaining fourth order polynomial \eqref{e:conda} with $\beta=0$ and even exponents, there are up to two different steady solutions with axis aligned wave vectors, i.e.\ up to four in total. 
Moreover, the interval between two such wave vectors on the same axis belongs to wave vectors of exponentially growing explicit flows. An example of the occurrence of steady, growing and decaying explicit flows \eqref{e:sol} for this situation is depicted in \cref{f:semi-iso1}. \newline
In the rotational case $f\neq0$ the equations \eqref{e:cond} reduce to
\begin{align*}
0&=-d|\k|^4+b_1k_y^2+b_2k_x^2-C/H_0,\\
0&=(b_2-b_1)k_xk_y+f,
\end{align*}
 where the second equation implies $k_x,k_y\neq0$ and can be reformulated to $k_y=f/\bigl((b_1-b_2)k_x\bigr)$.
Inserting this form of $k_y$ into the first equation and defining $K_x:=k_x^2>0$ leads to
\[
0=-dK_x^4+b_2K_x^3-\left(\frac{2df^2}{(b_1-b_2)^2}+\frac{C}{H_0}\right)K_x^2+\frac{b_1f^2}{(b_1-b_2)^2}K_x-\frac{df^4}{(b_1-b_2)^4}.
\]
This is a fourth order polynomial with always four sign changes of the coefficients, so that by Descartes' rule of signs it has zero, two or four different positive roots. 
The wave vectors on the curve defined by $k_y=f/\bigl((b_1-b_2)k_x\bigr)$ with $\beta(\k)>0$ belong to exponentially growing explicit flows. We plot an example of this situation in \cref{f:semi-iso2}.

\medskip
\paragraph{Anisotropic backscatter}
For $b_1\neq b_2$, $d_1\neq d_2$ and $f=0$, both $\kx=0$ and $\ky=0$ solve \eqref{e:condb}. This leads to the same situation as for the semi-isotropic case with $f=0$. 
Additionally to these solutions, if $(b_1-b_2)/(d_1-d_2)>0$, wave vectors on the circle with radius $|\k| = k_0:= \sqrt{(b_1-b_2)/(d_1-d_2)}$ also solve \eqref{e:condb}. 
The remaining condition \eqref{e:conda} with $\beta=0$ and $|\k|=k_0$ implies $\lb = \lb_0:=\frac{(b_1-b_2)(b_2d_1-b_1d_2)H_0}{(d_1-d_2)^2}$. The 
case $\lb_0=0$ for $b_2d_1=b_1d_2$ also occurs in \cite{PRY22}. 
We claim that $\lb_0<\lb_\crit$. For $\lb_\crit$ given by \eqref{e:critx} (the case \eqref{e:crity} is analogous) we compute $\lb_\crit-\lb_0 = \frac{(b_2 (d_1 + d_2)-2 b_1 d_2)^2 H_0}{4d_2 (d_1 - d_2)^2}$, which is strictly positive 
since 
\[
b_2 (d_1 + d_2)-2 b_1 d_2 
\geq (b_1^2d_2 + b_2^2d_2-2 b_1 b_2 d_2)/b_2 = d_2(b_1-b_2)^2/b_2>0,
\]
where the first inequality is derived from $b_2^2d_1\geq b_1^2d_2$ in \eqref{e:critx}. 
 If $(b_1-b_2)/(d_1-d_2)>0$ and $(b_1-b_2)(b_2d_1-b_1d_2)\geq 0$, 
then $0\leq \lb_0<\lb_\crit$, and there exists a $k_0>0$ such that for $\lb=\lb_0$ the steady explicit flows \eqref{e:sol} exist on the entire circle with radius $k_0$ in the wave vector plane. 
If $(b_1-b_2)(b_2d_1-b_1d_2)<0$, then $\lb_0<0$ 
 is outside the range $\lb\geq 0$.
See samples for these two cases in \cref{f:aniso}. 
We note that the explicit flows \eqref{e:sol} with $|\k|=k_0$ are all exponentially decaying for $\lb>\lb_0$ and exponentially growing for $0\leq\lb<\lb_0$.

\begin{figure}[t!]
\centering
\subfigure[$0<\lb_0<\lb_\crit$]{\includegraphics[trim = 2.5cm 8.5cm 3.5cm 9cm, clip, height=4cm]{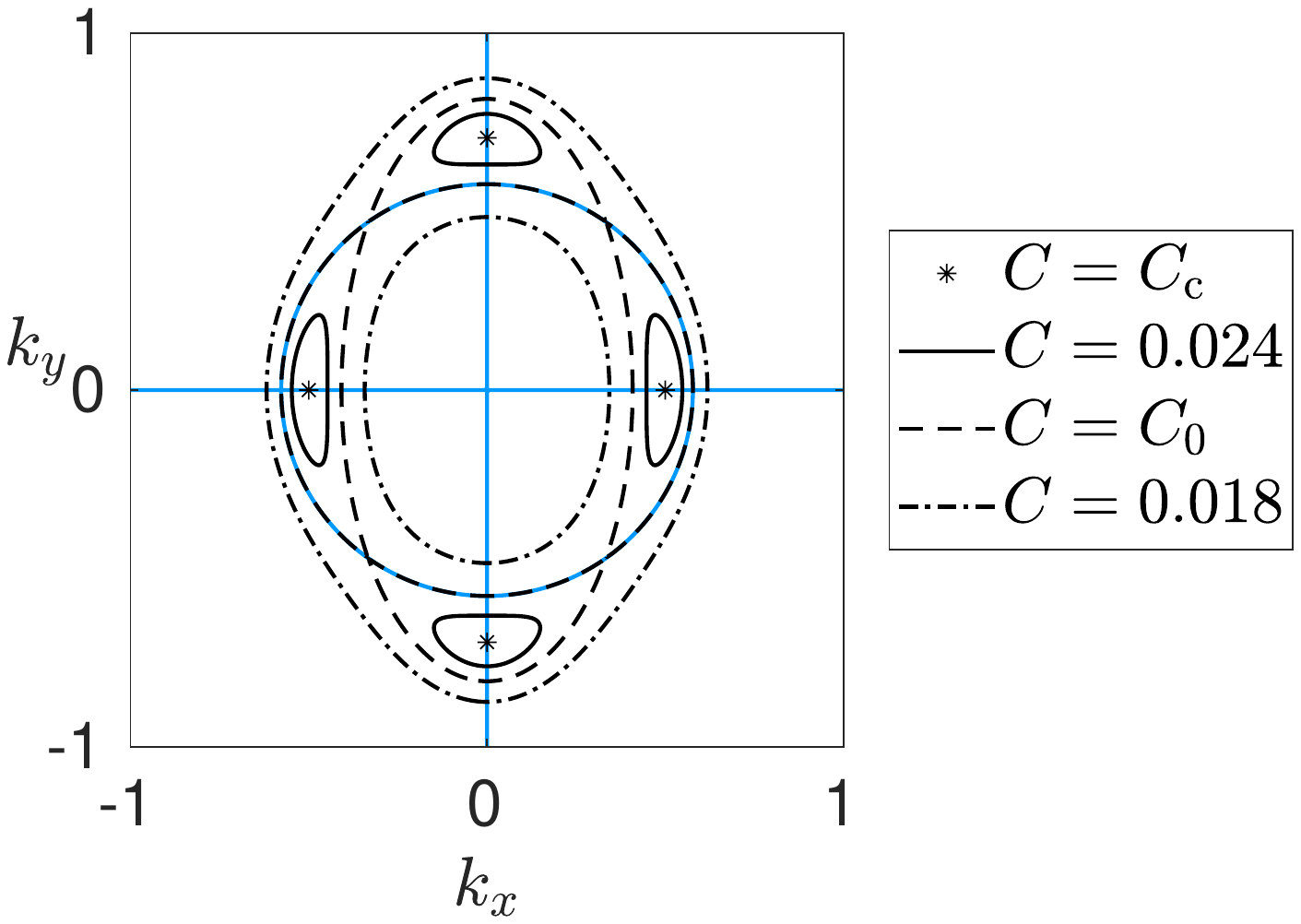}\label{f:anisosteadyloci4}}
\hfil
\subfigure[$\lb_0<0$]{\includegraphics[trim = 2.5cm 8.5cm 3.5cm 9cm, clip, height=4cm]{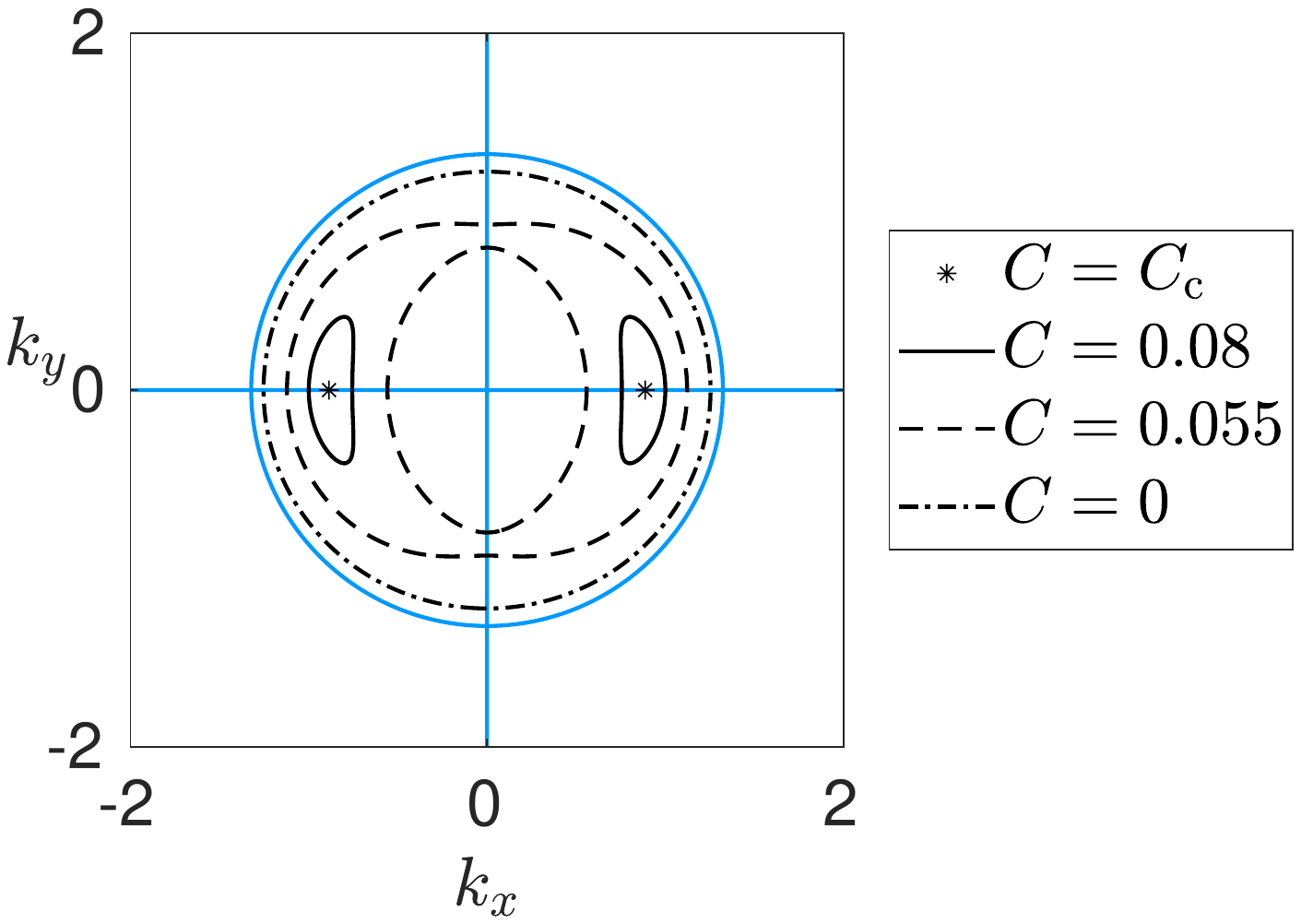}\label{f:anisosteadyloci}}
\caption{Sample of loci of the wave vectors of steady explicit flows \eqref{e:sol} for anisotropic backscatter on the intersections of blue curves and the different types of black curves. Black: $\beta=0$ with different types for different values of $\lb$; blue: the wave vectors satisfying \eqref{e:condb}. Common parameters: $f=0$, $g=9.8$, $H_0=0.1$. Other parameters: (a) $d_1=1$, $d_2=4$, $b_1=1$, $b_2=2$ so that $\lb_\crit=0.025$ and $\lb_0\approx0.022$; (b) $d_1=1$, $d_2=1.4$, $b_1=1.5$, $b_2=2.2$ so that $\lb_\crit\approx0.086$ and $\lb_0\approx-0.044$.}
\label{f:aniso}
\end{figure}
  
\begin{remark}
Since $\lb_0<\lb_\crit$, the primary bifurcation of the steady explicit flows upon decreasing $\lb$ does not occur for wave vectors on the circle with radius $k_0$. 
 This means, that these solutions do not correspond to a bifurcation analyzed in \Cref{s:bif}.
\end{remark}

In the rotational case $f\neq0$ it is more involved to determine the number of steady solutions or intersections of the corresponding curves.
If we consider the wave vectors in polar coordinates $\k=r\bigl(\cos\varphi,\sin\varphi\bigr)^\intercal$, then \eqref{e:cond} with $\beta=0$ becomes
\begin{subequations}\label{e:polarpoly}
\begin{align}
0&=a_1r^4-a_2r^2+\lb/H_0,\label{e:polarpolya}\\
0&=a_3r^4-a_4r^2+f,\label{e:polarpolyb}
\end{align}
\end{subequations}
with $a_1(\varphi):=d_1\sin^2\varphi+d_2\cos^2\varphi$, $a_2(\varphi):=b_1\sin^2\varphi+b_2\cos^2\varphi$, $a_3(\varphi):=\frac{d_1-d_2}{2}\sin(2\varphi)$ and $a_4(\varphi):=\frac{b_1-b_2}{2}\sin(2\varphi)$.
Concerning \eqref{e:polarpolya}, for $r>0$ and fixed angle $\varphi\in[0,2\pi)$ this fourth order polynomial has always two sign changes of the coefficients (or one for $C=0$). 
Due to the Descartes' rule of signs it thus has zero or two different positive roots (or one for $C=0$).
Since both equations \eqref{e:polarpoly} need to be satisfied, this means there are not more than two different steady solutions for a fixed direction $\varphi$ of their wave vector $\k$ (or one for $C=0$). 
 See \cref{r:2steady1direc}. 
We omit an analytical investigation of the total number of intersections for all wave vector directions.
Numerically, we have found up to four different steady solutions \eqref{e:sol} for fixed parameters, cf.\ \cref{f:fourroots}, which also shows the non-empty sets $\Gamma_1$ and $\Gamma_{-1}$.

\begin{remark}
\label{r:anisobifflows}
The explicit flows \eqref{e:sol} are not contained in the reduction of \cref{r:anisoscalar}, since wave vectors  of steady flows are not constrained to the axes (cf.\ \cref{f:anisosteadyloci4}) and have unconstrained amplitude. 
The same applies for the exponentially growing flows, which do not exist for wave vectors on the axes and have unconstrained growth, cf.\ \cref{f:growth}. 
\end{remark}

\begin{remark}
\label{r:2steady1direc}
Finding parameters for \cref{f:fourroots} is based on solving both equations in \eqref{e:polarpoly} 
simultaneously with two different values $r>0$.
Due to \eqref{e:polarpolyb} and $f\neq0$, this requires $\sin(2\varphi)\neq0$.
The graphs of the polynomials \eqref{e:polarpoly} share the same symmetry axis for $\frac{a_2}{2 a_1} = \frac{a_4}{2 a_3}$, which is equivalent to $b_2d_1 = b_1d_2$. 
Then, in order to get the same real roots we may vary $\lb$, which appears in \eqref{e:polarpolya} only, or $f$, which appears in  \eqref{e:polarpolyb} only. 
An example of this situation is depicted in \cref{f:fourroots}. \newline
\end{remark}

\section{Numerical bifurcation analysis}\label{s:num}

In order to illustrate and corroborate the analytical results, we present some numerical computations. In particular, we confirm the analytically predicted branches of nonlinear  GE and IGWs by numerical continuation. For this we have implemented \eqref{e:sw} for $y$-independent solutions in the matlab package pde2path \cite{p2p,p2pbook}. We thus consider solutions that depend on the $x$-variable only, which is scaled so that the onset of instability occurs on the normalized domain $[0,2\pi]$ with periodic boundary conditions. We plot some of the results in \cref{f:numbifiso} for an isotropic case, and in \cref{f:numbifaniso} for an anisotropic case. As analytically predicted, supercritical bifurcations of GE and IGWs occur near $\lb=\lb_\crit$, i.e.\ the bifurcating branches emerge towards decreasing $\lb$. In all cases, we found that the branches extend (after two folds in the isotropic case) to $\lb=0$, i.e.\ purely nonlinear bottom drag. We do not show the various bifurcation points that are numerically detected along the branches.

Since the numerical discretization has a minimal resolution, there is a spectral gap for large wave numbers, and we can directly consider stability of the bifurcating waves. We recall that for $f=0$ the modulation equations for GE  \eqref{e:GLf0b} allow to analytically predict stability with respect to perturbations of plane wave type, and only the sideband modes are relevant. In the more relevant case $f\neq0$ this reduction is not available and we present numerical results for $f=0.3$, including perturbations that are not of plane wave type. 

For this setting and with isotropic backscatter, where steady and oscillatory modes are simultaneously critical, we find that these bifurcating solutions are all unstable (see \cref{f:numbifiso}). In fact, the instability occurs already for purely $x$-dependent perturbations of the same wave number as the solutions, i.e.\ for the PDE posed on $[0,2\pi]$ directly. For both the GE and IGWs, the unstable eigenvalues near bifurcation are a complex conjugate pair. The unstable eigenfunction for GE has the shape of an IGWs, and vice versa, suggesting that the instability stems from the interaction between GE and IGWs. 

For anisotropic backscatter only steady modes are critical at the primary instability, and GE bifurcate supercritically as predicted near $\lb=\lb_\crit$. As expected, we find that IGWs bifurcate at some smaller value of $\lb$ 
(see \cref{f:numbifaniso}). 
Interestingly, the bifurcating nonlinear GE appear to be spectrally stable against general 2D perturbations.
This spectral stability numerically persists until $\lb\approx 0.01$ (where $\lb_\crit=0.1$), and unstable spectrum occurs when further deacreasing $\lb$ towards $\lb=0$. In order to determine spectral stability, we have implemented a Floquet-Bloch transform in the $x$-direction by replacing $\partial_x$ with $\partial_x-\rmi\gamma_x$, with Floquet-Bloch wave number parameter $\gamma_x\in[-\pi,\pi]$ in the linearization. Since the waves are constant in the $y$-direction, for perturbations in this direction we use a direct Fourier transform with wave number $\gamma_y\in\R$. By checking a grid of $(\gamma_x,\gamma_y)$ values, we found that the most unstable growth rate is zero and stems from the translation eigenmode at $\gamma_x=\gamma_y=0$. In particular, the sidebands are stable as plotted in \cref{f:numbifaniso}(c,d). This suggests that the combined backscatter and bottom drag can stabilize GE, meaning that backscatter not only induces the bifurcation of waves, but also promotes a dynamic selection of balanced states. This would further negatively impact a neutral energy redistribution by backscatter. 

\begin{figure}[t!]
\centering
\subfigure[]{\includegraphics[width=0.24\textwidth]{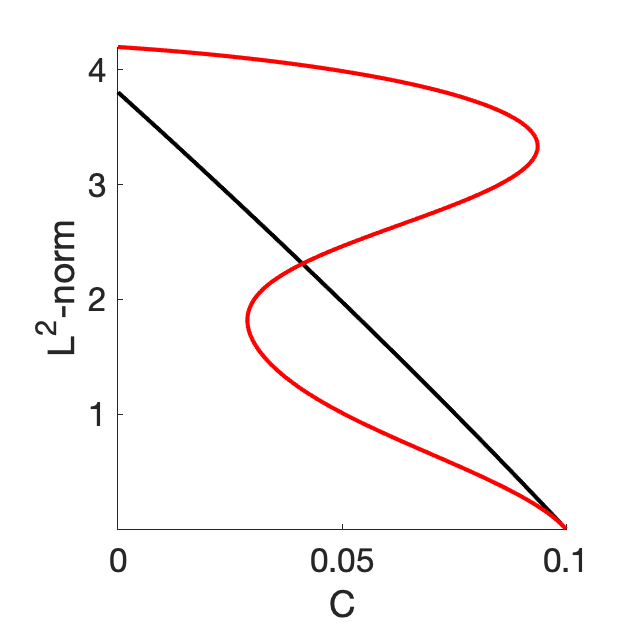}}
\hfil
\subfigure[]{\includegraphics[width=0.24\textwidth]{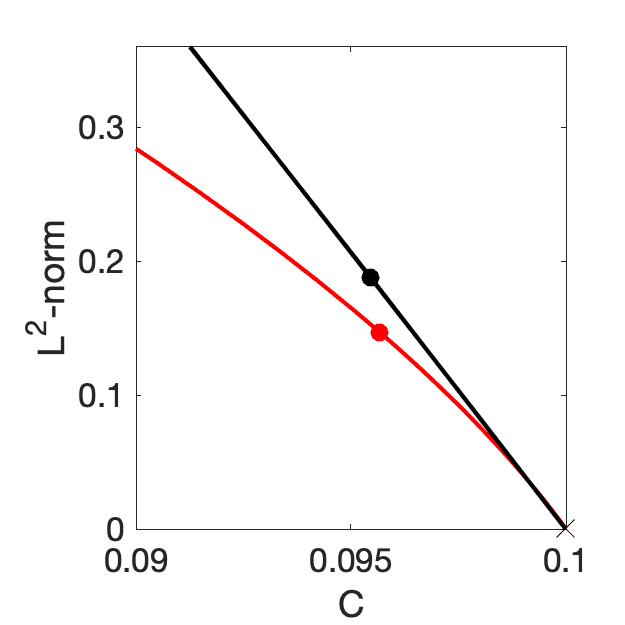}}
\hfil
\subfigure[]{\includegraphics[width=0.24\textwidth]{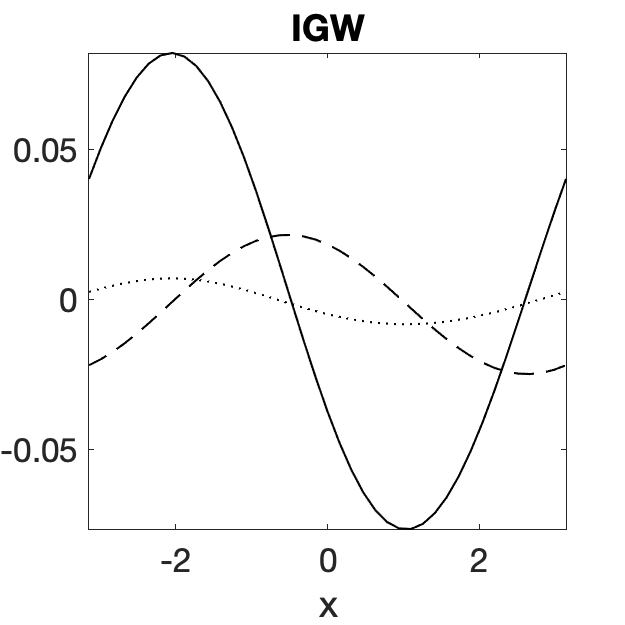}}
\hfil
\subfigure[]{\includegraphics[width=0.24\textwidth]{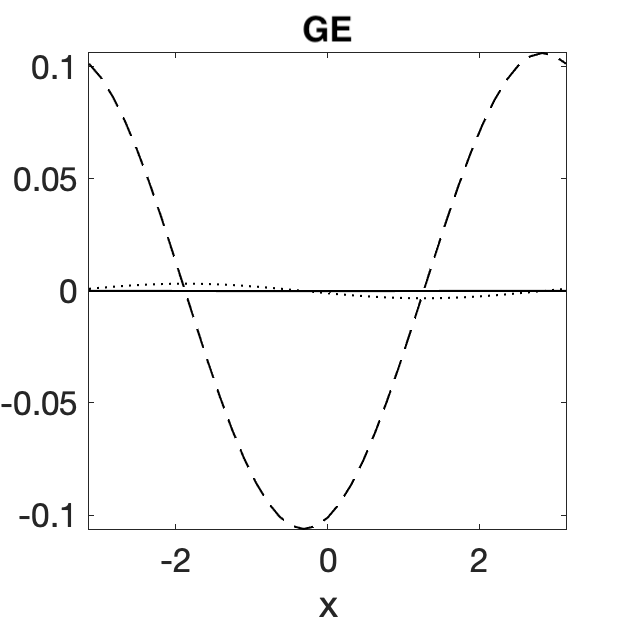}}
\caption{
(a) Numerically computed branches of nonlinear GE (black) and IGWs (red) in the isotropic case $d_1=d_2=1$, $b_1=b_2=2$; (b) magnification of (a). 
Marked solutions are plotted in panels (c) and (d) with $v_1$ solid, $v_2$ dashed, $\eta$ dotted. 
Other parameters are $f=0.3$, $g=9.8$, $H_0=0.1$ so that $\lb_\crit=0.1$, and $Q=0.05$. 
}
\label{f:numbifiso}
\end{figure}

\begin{figure}[t!]
\centering
\subfigure[]{\includegraphics[width=0.24\textwidth]{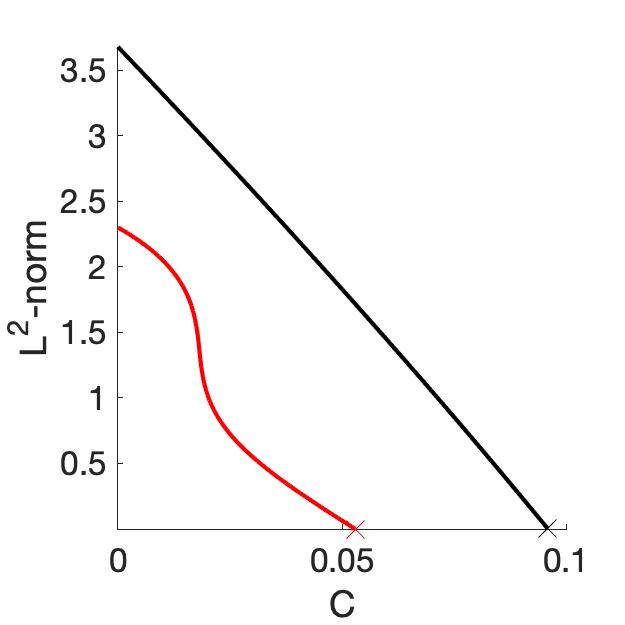}}
\hfil
\subfigure[]{\includegraphics[width=0.24\textwidth]{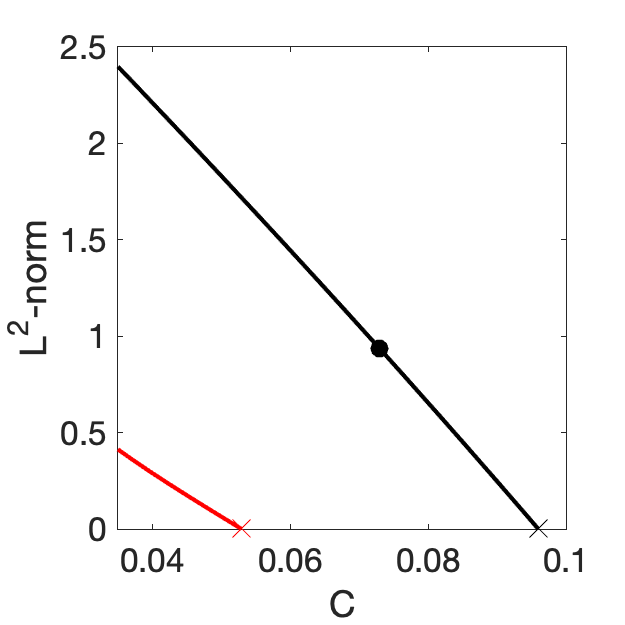}}
\hfil
\subfigure[]{\includegraphics[width=0.24\textwidth]{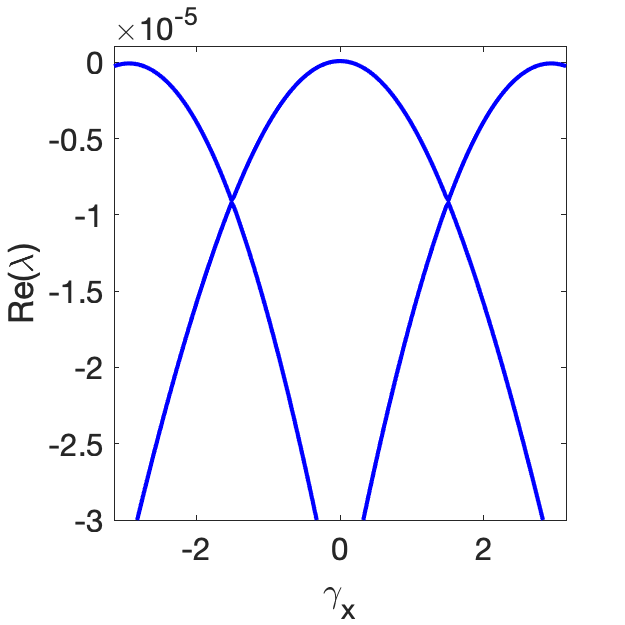}}
\hfil
\subfigure[]{\includegraphics[width=0.24\textwidth]{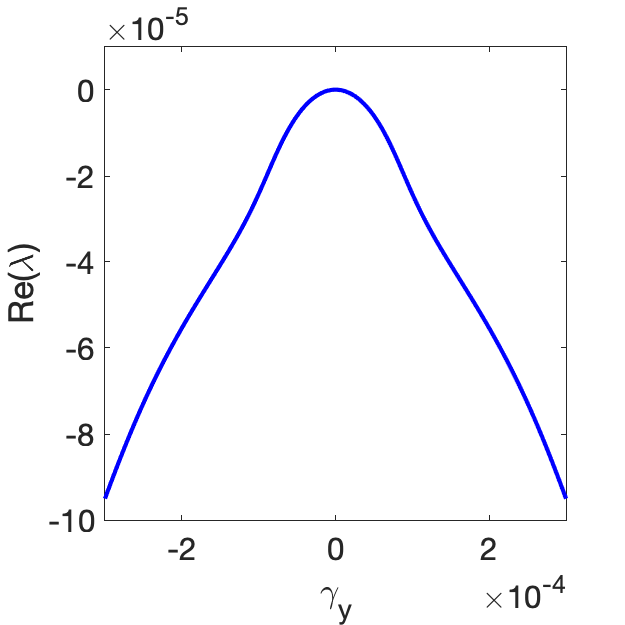}}
\caption{
(a) Numerically computed branches of nonlinear GE (black) and IGWs (red) in the anisotropic case $d_1=1$, $d_2=1.04$, $b_1=1.5$, $b_2=2$; (b) magnification of (a). 
Other parameters are $f=0.3$, $g=9.8$, $H_0=0.1$ so that $\lb_\crit=0.1$, and $Q=0.05$. 
(c) Floquet-Bloch spectrum near zero for $\gamma_x\in[-\pi,\pi]$ and (d) Fourier spectrum for $\gamma_y\approx 0$ of the marked solution in (b), suggesting spectral stabilty.
}
\label{f:numbifaniso}
\end{figure}

\section{Discussion}\label{s:discuss} 
We have mathematically studied the impact of kinetic energy backscatter on certain waves, flows and equilibria of \eqref{e:sw}. Our results show how, in an idealized setting, backscatter generates and selects geostrophic equilibria and inertia-gravity-type waves. We found that effective anisotropy of backscatter plays a special role and can further stabilize balanced states. In addition, we have shown that moderate bottom drag does not suppress the phenomenon of linear dynamics and associated unbounded growth found without drag in \cite{PRY22}. 

While the unbounded growth likely is not robust under numerical discretization, it still implies that for fine discretizations large, but possibly finite, growth can occur. In contrast, we expect the bifurcation results are robust under discretization, including stability on finite domains. However, the robustness of stability of the bifurcating solutions for arbitrarily large domains is subtle and it would be interesting to pursue this further.  We recall that the PDE analysis is hampered by the lack of spectral gap, which already raises questions of well-posedness. This could be alleviated by cutting off the backscatter at some large wave number, so that a center manifold reduction should be possible. Concerning the dynamics on such a center manifold, the most interesting case is the unfolding of the isotropic case with its simultaneous pattern forming stationary and oscillatory modes. This is reminiscent of `Turing-Hopf' instabilities in reaction-diffusion systems, where a Turing instability simultaneously occurs with an oscillatory instability \cite{DeWitEtAl,Ledesma-Duran2019-sd}. In this context complicated, `chaotic', dynamics can occur. We are not aware of studies that include a conservation law or that are in a fluid context. It would be interesting to see whether a connection to the dynamics of \eqref{e:sw} can be made. 

Another interesting direction would be consider multiple coupled layers, each modelled by a shallow water equation, as is customary in geophysical model studies. This would be one step closer to draw a connection to practical use of backscatter. 

\appendix

\section{Signs of $a_1, a_2, a_1 a_2-a_3$}\label{s:critical}

Since $a_1$ is a quadratic polynomial in $K:=|\k|^2$ with a positive quadratic coefficient, it possesses a global minimum which is positive for $\lb > \lb_1:=(b_1+b_2)^2H_0/(8(d_1+d_2))$. Since $\lb_\crit>\lb_1$, we have $a_1>0$ for $\lb\geq\lb_\crit$ and $\k\in\R^2$. We also find that $a_2>0$ for $\lb\geq\lb_\crit$: Without loss of generality, assume $\lb_\crit = b_1^2H_0/(4d_1)$; the global minimum of $d_jK^2-b_jK+\lb/H_0$ equals to $4d_j\lb/H_0-b_j^2$ which is non-negative for $\lb\geq\lb_\crit$, $j=1,2$. Hence $a_2>0$ for $\lb\geq\lb_\crit$ and $\k\in\R^2$. 

We next show that $a_1a_2-a_3>0$ for all $\k\in\R^2$ and $\lb\geq\lb_\crit$.  The previous implies that $a_1,a_2>0$ for all $\k\in\R^2$ and $\lb\geq\lb_\crit$. Concerning $a_3$, for $\lb>\lb_\crit$ we have $a_3>0$ for all $\k\in\R^2$, and for $\lb=\lb_\crit$ it holds that $a_3>0$ for all $\k\neq\k_\crit$ and $a_3=0$ for $\k=\k_\crit$. We make the dependence on $\k$ of $a_3=a_3(k_x,k_y)$ explicit in the following. Since $a_1>0$ for all $\k\in\R^2$ and $\lb\geq\lb_\crit$ we may compute 
\begin{align*}
\frac{a_3(k_x,k_y)}{a_1}=&\ \frac{g H_0 |\k|^2 \left((d_1|\k|^2-b_1)|\k|^2+(d_2|\k|^2-b_2)|\k|^2 + 2C/H_0 \right)}{(d_1 + d_2) |\k|^4 - (b_1 + b_2) |\k|^2 + 2C/H_0}\\[2mm]
&\ -\frac{g H_0 |\k|^2 \left((d_1|\k|^2-b_1)k_x^2+(d_2|\k|^2-b_2)k_y^2 + C/H_0 \right)}{(d_1 + d_2) |\k|^4 - (b_1 + b_2) |\k|^2 + 2C/H_0}\\[2mm]
=&\ g H_0 |\k|^2-\frac{a_3(k_y,k_x)}{a_1};
\end{align*}
note that the wave vector components in $a_3$ on the right-hand side are swapped. 
It follows for all $\k\in\R^2$ and $\lb\geq\lb_\crit$ that
\begin{align*}
a_2-\frac{a_3(k_x,k_y)}{a_1}&=(d_1 |\k|^4 - b_1 |\k|^2 + C/H_0) (d_2 |\k|^4 - b_2 |\k|^2 + C/H_0) + \frac{a_3(k_y,k_x)}{a_1} + f^2\\
& \geq f^2 > 0,
\end{align*}
since the polynomials in $|\k|$ in the brackets are non-negative for $\lb\geq\lb_\crit$. In particular, $a_1a_2-a_3>0$ for all $\k\in\R^2$ and $\lb\geq\lb_\crit$, since $a_1>0$ for these parameters. 

\section{Proof of even parity for $W$ in proof of \cref{t:bifQn0}}\label{s:proofproj0}

Analogous to the proof of \cref{t:bifQ0}, we rewrite \eqref{e:PG} as the fixed point equation for $W$ given by
\begin{equation}\label{e:fixGQn0}
P\calL_0 W = -P(L_\mu \phi + N_\lb(\phi,\mu) + N_Q(\phi,\mu)).
\end{equation}
We consider the even function $\phi$. Since the operators $L_\mu, N_\lb, N_Q$ map even functions to odd functions, and the projection $P$ \eqref{e:projP} maps odd functions to odd functions. 
Hence the term on the right-hand side of \eqref{e:fixGQn0} is odd. We write the periodic function $W$ as a Fourier series $W = \sum_{\ell\in\Z}w_\ell \rme^{\rmi\ell\xi}$ with $w_\ell\in\C$, and write the odd periodic function on the right-hand side of \eqref{e:fixGQn0} as $\sum_{m=1}^{\infty} R_m \sin(m\xi)$ with $R_m\in\R$, then \eqref{e:fixGQn0} becomes
\[
\sum_{\ell\in\Z} \widehat P\widehat\calL_0(\ell) w_\ell\rme^{\rmi\ell\xi} = \sum_{m=1}^{\infty} R_m \sin(m\xi),
\]
where $\widehat\calL_0(\ell) = \rmi\ell(dk_\crit^4\ell^4 - bk_\crit^2\ell^2 + \lb/H_0)$ which is an odd function in $\ell$, and $\widehat P = \Id$ since $W\in \setM$. Hence, $\widehat P\widehat\calL_0(\ell) = \widehat\calL_0(\ell)$, and its inverse is also odd. We project the both sides of the above equation onto $\rme^{\rmi\ell\xi}$ and $\rme^{-\rmi\ell\xi}$ (with the inner product $\langle \cdot,\cdot \rangle = \langle \cdot,\cdot \rangle_{\Lspace^2}$), respectively
\begin{align*}
\widehat\calL_0(\ell) w_\ell = \langle R_\ell \sin(\ell\xi),\rme^{\rmi\ell\xi}\rangle\quad &\Leftrightarrow \quad w_\ell = (\widehat\calL_0(\ell))^{-1}\langle R_\ell \sin(\ell\xi),\rme^{\rmi\ell\xi}\rangle, \\
\widehat\calL_0(-\ell) w_{-\ell} = \langle R_\ell \sin(\ell\xi),\rme^{-\rmi\ell\xi}\rangle \quad &\Leftrightarrow \quad w_{-\ell} = (\widehat\calL_0(-\ell))^{-1}\langle R_\ell \sin(\ell\xi),\rme^{-\rmi\ell\xi}\rangle.
\end{align*}
Since $\langle R_\ell \sin(\ell\xi),\rme^{\rmi\ell\xi}\rangle = -\langle R_\ell \sin(\ell\xi),\rme^{-\rmi\ell\xi}\rangle \in\rmi\R$, we have $w_{\ell} = w_{-\ell}\in\R$. It follows that $W$ is even and real.

\section{Proof of estimate \eqref{e:estW} and \eqref{e:WestX}}\label{s:proofWestX}

The proof relies on rewriting \eqref{e:PG} or \eqref{e:PG2} as a fixed point equation for $W$. We first write $PG(V;\mu) = L_0 V  + L_\mu V + \calN_\mu(V)$ with nonlinear part $\calN_\mu(V) = \calO(\|V\|_X^2)$ and  corresponding solution space $X$, as well as linear parts $L_0$, 
which is $\mu$-independent, 
and $L_\mu$ as the perturbation by parameters $\mu$, i.e.\ $\|L_\mu\| = \calO(|\mu|)$. 
More precisely, for \Cref{s:bif} these are $V=\phi$, $L_0=P\calL_0$, $L_\mu=PL_\mu$ and $\calN_\mu=P(N_\lb + N_Q)$,  
while for \Cref{s:igw} $V=U$, $L_0=P\calL_\crit$, $L_\mu=P(\calL_\mu-\calL_\crit)$ and $\calN_\mu=P(- B_Q - B - N)$ from \eqref{e:igwprob}. 
Using $V = u +W$, $L_0u=0$ and that $L_0$ is boundedly invertible from $\setM$ to $\range(\calL_0)$ or $\range(\calL_\crit)$, we rewrite $PG=0$ as the fixed point equation for $W$ given by
\begin{equation}\label{e:fixW}
-(L_0+L_\mu)W = L_\mu u + \calN_\mu(u+W) \Leftrightarrow W = -(L_0+L_\mu)^{-1} (L_\mu u + \calN_\mu(u+W)).
\end{equation}
Since a priori $\|W\| = \calO(|\mu| + \|u\|_X)$, and $L_0+L_\mu$ is boundedly invertible for sufficiently small $|\mu|$, $\|u\|_X$, we find constants $C_1, C_2>0$ such that 
\begin{align*}
 \|(L_0+L_\mu)^{-1}L_\mu u\|_X &\leq C_1 |\mu|\|u\|_X,\\
 \|(L_0+L_\mu)^{-1}\calN_\mu(u+W)\| &\leq C_2 (\|u\|_X^2 + \|W\|_X^2).
\end{align*}
From \eqref{e:fixW} we then obtain
\begin{align}
\|W\|_X &\leq C_1|\mu|\|u\|_X + C_2 (\|u\|^2_X + \|W\|_X^2)\nonumber \\
\Rightarrow \|W\|_X (1-C_2 \|W\|_X) &\leq C_1|\mu|\|u\| _X+ C_2 \|u\|_X^2.\label{e:West2}
\end{align}
Choosing $\mu, u$ sufficiently small gives $C_2 \|W\|_X\leq \frac 1 2$, and then \eqref{e:West2} implies
\[
\|W\|_X \leq 2C_1|\mu|\|u\|_X + 2C_2 \|u\|_X^2,
\]
which proves \eqref{e:estW} and \eqref{e:WestX}.

\section*{Acknowledgments}
The authors A. Prugger and J. Yang are grateful for the financial support from the Collaborative Research Centre TRR 181 ``Energy Transfers in Atmosphere and Ocean'' and the hospitality from Jacobs University Bremen during part of the creation of this paper. We thank the anonymous reviewers for comments that helped to improve the manuscript.


\end{document}